\documentclass[11pt,a4paper]{article}
\usepackage[margin=1in]{geometry}
\usepackage[T1]{fontenc}
\usepackage{lmodern}
\usepackage{amsmath,amssymb,amsthm,mathtools}
\usepackage{microtype}
\usepackage{needspace}
\usepackage[framemethod=tikz]{mdframed}
\usetikzlibrary{arrows.meta}
\mdfdefinestyle{resultstatement}{backgroundcolor=black!5,hidealllines=true,
 skipabove=10pt,skipbelow=10pt,innerleftmargin=5pt,innerrightmargin=5pt,
 innertopmargin=5pt,innerbottommargin=5pt,splittopskip=8pt,splitbottomskip=5pt,
 needspace=4\baselineskip,nobreak=false}
\newmdenv[style=resultstatement,hidealllines=false,
 linewidth=0.5pt,linecolor=black!45,innerleftmargin=8pt,innerrightmargin=8pt]{mainresult}
\usepackage[hidelinks]{hyperref}
\hypersetup{pdftitle={Logarithmic asymptotics for polynomial threshold functions},
 pdfauthor={Martin Anthony}}

\newtheorem{theorem}{Theorem}

\newtheorem{lemma}{Lemma}
\surroundwithmdframed[style=resultstatement]{lemma}

\newcommand{\F}{\mathbb F}
\newcommand{\R}{\mathbb R}
\newcommand{\RM}{\operatorname{RM}}
\newcommand{\rank}{\operatorname{rank}}
\newcommand{\Span}{\operatorname{span}}

\newcommand{\wt}{\operatorname{wt}}

\newcommand{\E}{\mathbb E}
\newcommand{\ind}{\mathbf 1}
\title{Logarithmic asymptotics for polynomial threshold functions}
\author{Martin Anthony\thanks{Department of Mathematics,
London School of Economics and Political Science.\newline
Email: \href{mailto:m.anthony@lse.ac.uk}{\texttt{m.anthony@lse.ac.uk}}.}}
\date{}

\begin{document}
\maketitle

\begin{abstract}
We count Boolean functions represented as the sign of a real polynomial
of bounded degree on the $n$-dimensional cube. We prove that the
logarithm of their number is asymptotic to that of the classical
hyperplane-arrangement upper bound, uniformly over all degrees, with
relative error $O((\log n)^2/n)$. We also obtain sharper additive and
relative error estimates in particular degree ranges, including an
additive error $O(n^2)$ between the logarithms throughout a small
linear degree range and improved bounds just below the central degree.
The proof combines probabilistic estimates for Reed--Muller codes
with a deletion argument for sets of evaluation vectors, an entropy
count, and a sharp-threshold estimate for expected rank deficiency.
\end{abstract}

\setcounter{tocdepth}{2}
{\small\tableofcontents}

\clearpage

\section{Introduction}

\subsection{Definitions and main results}

For integers $1\le d\le n$, let $V=\{0,1\}^n$ and
$N=|V|=2^n$. The space of real multilinear polynomials of degree at
most $d$ has dimension
\[
 m=m_{n,d}:=\sum_{j=0}^d\binom nj.
\]
Write $T_{n,d}$ for the number of functions $F:V\to\{-1,1\}$ of
the form $F(x)=\operatorname{sgn}p(x)$, where $p$ is a real
multilinear polynomial of degree at most $d$ with no zero values on
$V$. Multilinearity loses no functions, since reducing powers by
$x_i^2=x_i$ preserves cube values and does not increase degree.
We could equally well define these functions as the signs of real
multilinear polynomials of degree at most $d$, with no zero values,
on the domain $\{-1,1\}^n$. The substitution $x_i=(y_i+1)/2$
preserves multilinearity and degree and gives a bijection between the
functions on the two cubes, so the count is unchanged.

The equations $p(x)=0$, one for each $x\in V$, define a central
hyperplane arrangement in the $m$-dimensional real coefficient space.
Here central means that each hyperplane contains the origin.
Its chambers, the connected components of the complement, correspond
to the functions counted by $T_{n,d}$. See Cover~\cite[Sections~I--III]{Cover1965}
and Baldi and Vershynin~\cite[Lemmas~2.1 and~3.2]{BaldiVershynin2019}.
Indeed, the signs of the evaluations are constant on each chamber,
while prescribed strict output signs specify an intersection of open
half-spaces, which is convex and therefore connected, and hence is a
single chamber when non-empty. The classical
arrangement bound~\cite{Cover1965,Winder1966} therefore gives
\begin{equation}\label{eq:cover}
 T_{n,d}\le Q_{n,d}:=2\sum_{i=0}^{m-1}\binom{N-1}{i}.
\end{equation}
We consider whether the logarithm of this upper bound is attained
when the degree grows with the number of variables.

Define the relative logarithmic error by
\begin{equation}\label{eq:relative-error}
 E_{n,d}:=\frac{\log_2Q_{n,d}-\log_2T_{n,d}}{\log_2Q_{n,d}}
 =1-\frac{\log_2T_{n,d}}{\log_2Q_{n,d}}.
\end{equation}
We state relative error bounds throughout so that the results can
be compared directly. Theorems~\ref{thm:main} and~\ref{thm:away-centre}
also give additive bounds on the difference between the logarithms
in their respective degree ranges.
Our first result determines the leading logarithm for every degree,
with a uniform bound on this error.

\begin{mainresult}
\begin{theorem}[Logarithmic asymptotics for every degree]\label{thm:all-degrees}
There are absolute constants $C>0$ and $n_0$ such that
\[
 0\le E_{n,d}\le C\frac{(\ln n)^2}{n}
 \qquad(n\ge n_0,\ 1\le d\le n).
\]
In particular, $\log_2T_{n,d}\sim\log_2Q_{n,d}$ uniformly over
$1\le d\le n$.
\end{theorem}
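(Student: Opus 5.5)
The lower bound $E_{n,d}\ge 0$ is immediate from \eqref{eq:cover}. For the upper bound we need a lower bound on $T_{n,d}$. The plan uses the standard chamber-counting lower bound for a central arrangement. If the evaluation vectors $v_x\in\R^m$, $x\in V$, are in general position on a subset $S\subseteq V$, meaning every $m$ of them are linearly independent, then the restricted arrangement has exactly $2\sum_{i<m}\binom{|S|-1}{i}$ chambers. Since restriction of a function to $S$ can only merge functions,
\[
T_{n,d}\ge 2\sum_{i<m}\binom{|S|-1}{i}.
\]
General position on a large set is too much to ask of $0/1$ vectors. We therefore weaken it to a counting statement. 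Every chamber of the full arrangement corresponds to a sign vector, and by Zaslavsky/Winder the number of chambers equals $\sum_{k}|\mu|$-type sums over flats. In practice it suffices to show the following: for a random subset $I\subseteq V$ of size $m$, the vectors $\{v_x:x\in I\}$ are linearly independent, or have rank deficiency that is small on average.

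Concretely, the key intermediate bound will be
\[
\log_2 T_{n,d}\ \ge\ \log_2\binom{N}{m}-O\big(\E[\text{rank deficiency of a random $m$-set}]\cdot n\big)-O(m\log n).
\]
The route to it is the one the abstract indicates. The columns $v_x$ form the generator matrix of the Reed--Muller code $\RM(d,n)$ over $\R$. Independence of a set $I$ over $\R$ is implied by independence over $\F_2$, equivalently by $I$ being an information set of $\RM(d,n)$. We then count sign patterns via an entropy argument. Pick $I$ at random among $m$-subsets and delete from it a small set of points to make it independent. Each independent $m'$-subset $J$ gives, by choosing a polynomial interpolating arbitrary signs with large margin on $J$ and then perturbing, many distinct functions. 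The injection from (subset, sign pattern) pairs to functions loses at most a factor counted by the entropy of recovering $J$ from the function. The recovery is via the $m'$ ``tight'' points in a vertex of the arrangement, and costs at most $\log_2\binom{N}{m}$ minus the matching term.

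Comparing with $\log_2 Q_{n,d}\approx\log_2\binom{N}{m}$ reduces everything to bounding the expected $\F_2$-rank deficiency of $m-\delta$ random points in $\RM(d,n)$. Here we would use the known probabilistic results for Reed--Muller codes, namely that random erasure patterns of density below capacity are correctable for RM codes. We would apply a quantitative sharp-threshold (Bourgain--Kalai / Kudekar et al.\ type) bound: taking $|I|=m(1-\epsilon)$ with $\epsilon\asymp(\ln n)^2/n$ makes the expected deficiency $O(\epsilon m)$. The loss is then $O(\epsilon m\cdot n)$ against a total of $\log_2 Q\asymp m\cdot n\cdot h(m/N)/(m/N)$-size quantities, giving relative error $O((\ln n)^2/n)$ once one also checks the degenerate ranges. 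Those ranges need separate treatment. For $d$ near $n$, where $m\approx N$, both logs are $\approx N$ and we need $m\ge N/2$ handled by symmetry or by using only $N/2$ points. For small $d$ the bound follows from classical results on linear threshold counts ($\log T\sim n^{d+1}/d!$), which we would handle directly.

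The main obstacle is the sharp-threshold estimate: getting an expected rank deficiency bound with $\epsilon$ as small as $(\ln n)^2/n$ \emph{uniformly} in $d$, including near the central degree $d\approx n/2$, where the RM capacity transition is hardest to quantify. I would first try the Kudekar et al.\ symmetry argument, which gives threshold width $O(1/\log N)=O(1/n)$ via the doubly transitive automorphism group. I would combine it with a union over $\log n$ dyadic scales to absorb the extra logarithmic factors.
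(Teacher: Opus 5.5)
Your proposal has genuine gaps at its two load-bearing steps. The first is turning independent sets into distinct threshold functions. Your injection from (set, sign pattern) pairs to functions is not an argument. With $|J|$ close to $m$ there are roughly $2^{m}$ times more such pairs than functions altogether. A chamber is adjacent to many flats. The span of an independent $J$ can also contain further evaluation columns, in which case many different $J$ give the same flat, so $J$ cannot be recovered from the function. The paper uses Zuev's inequality \eqref{eq:regions-flats} (at least as many chambers as flats) and restricts to \emph{good} sets, which are independent over $\F_2$ with no further column in their span and so give distinct flats. Lemma~\ref{lem:central-deletion} then converts independent sets into good sets at a cost of $O(rn^2/2^d+n)$ bits. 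It uses the fact that every dependent set of columns has at least $2^{d+1}$ elements, since the kernel is $\RM(n,n-d-1)$. That is too lossy for small $d$. There the paper counts good sets directly by a first moment that needs dependence probabilities on the scale $1/N$ (Theorem~\ref{thm:main}). The fixed-degree asymptotics you cite give no rate $O((\ln n)^2/n)$ uniformly over $2\le d\le10^{-4}n$. Your accounting also fails near the centre. There $\log_2Q_{n,d}=NH_2(\alpha)+O(n)\asymp m(a+1)$, with $a=\ln(N/m)=O(1)$ at $d=\lfloor n/2\rfloor-1$; your expression has a spurious factor $n$. So both your additive $O(m\log n)$ term and a charge of $n$ bits per deficient column, at deficiency $m(\ln n)^2/n$, exceed $\log_2Q_{n,d}$ itself. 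The correct charge is about $\log_2(N/\text{nullity})$ per deficient column. This is what the entropy bound $\log_2I\ge NH_2(q)-NH_2\bigl(\E(|U|-\rank_{\F_2}A_U)/N\bigr)$ of Lemma~\ref{lem:entropy-independent} supplies.

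The second gap is that the sharp threshold cannot be made uniform in $d$, and you have placed its difficulty at the wrong end. The biased sharp-threshold inequality gives $\frac{d}{dt}\ln\frac{h}{1-h}\ge c_2n/(t\ln(e/t))$. With the area identity $\int_0^1(1-h)=\alpha$, this locates the threshold of $h$ only within a \emph{relative} window of width about $a/n$ around $\alpha$. That is useful only when $a=o(n)$, i.e.\ $n/2-d=o(n)$. For $d\le(1/2-c)n$ the window is no longer small, and an absolute width $O(1/n)$ says nothing when $\alpha$ is exponentially small. So no dyadic decomposition gives your near-capacity deficiency bound there. The paper covers $2\le d\le(1/2-\varepsilon_n)n$, with $\varepsilon_n=c_1/\sqrt{\ln n}$, by \emph{oversampling} at rate $D_\varepsilon m/N$ with $D_\varepsilon\asymp1/\varepsilon$. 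Sberlo--Shpilka weight bounds give full rank with probability at least $1/2$, hence $J_m\ge q^{-m}/2$. The loss $m\log_2D_\varepsilon$ is then small against $m\log_2(N/m)\gtrsim mn\varepsilon^2$ (Theorem~\ref{thm:away-centre}). Near the centre, undersampling to $m(1-\epsilon)$ columns costs relative error of order $\epsilon$ outright. With the threshold known only to relative precision $a/n$, this gives nothing better than about $a/n$, which exceeds $(\ln n)^2/n$ once $a\gg(\ln n)^2$. Theorem~\ref{thm:near-central}(ii) instead samples at $(1+\eta)\alpha$ with $\eta\asymp(a/n)\ln(a+2)$. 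The rank deficiency is then $O(m/n)$, and the entropy $\approx\eta ma/\ln2$ of the surplus columns cancels the leading part of the nullity penalty. This leaves relative error $O(\{1+\ln(a+2)\ln(n/a)\}/n)$. Part~(i) at $d_0=\lfloor n/2\rfloor-1$, together with monotonicity in $d$, then handles $d\ge\lceil n/2\rceil$.
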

\end{mainresult}

Baldi and Vershynin~\cite[Conjecture~8.2]{BaldiVershynin2019},
citing earlier work of the author, state the conjecture that
$T_{n,d}\sim Q_{n,d}$ for every degree. They suggest that this
assertion might be too strong, but that the corresponding logarithmic
equivalence might hold. Theorem~\ref{thm:all-degrees} establishes
this logarithmic form for every degree, with a uniform rate.

For $d\ge\lceil n/2\rceil$, the author's earlier centrality
theorem~\cite[Theorem~1]{Anthony2026} gives the stronger equivalence
\[
 T_{n,d}\sim Q_{n,d}\sim2^N.
\]
Theorem~\ref{thm:all-degrees} is proved here without using that
result. Theorems~\ref{thm:main}--\ref{thm:near-central} give
estimates for $d<n/2$, and Theorem~\ref{thm:near-central}(i),
together with monotonicity in the degree, also establishes the
upper-range bound in Theorem~\ref{thm:all-degrees}.
At $d=n$, the equality $T_{n,n}=Q_{n,n}=2^N$ is exact.

The lower bounds on $T_{n,d}$ use vectors associated with the cube
vertices. For each vertex $x$, the values at $x$ of all monomials of
degree at most $d$ form its \emph{evaluation vector}.
Section~\ref{sec:evaluation} explains the connection with the
hyperplane arrangement described above and shows that counting
certain sets of these vectors, called \emph{good sets}, gives a
lower bound on $T_{n,d}$. Good sets satisfy a condition stronger than
linear independence. We use three methods to count them, each suited
to a different degree range. For small degrees, we bound the
probability of linear dependence to count good sets directly. For
intermediate degrees, we first count independent sets and then delete
a small number of vectors to make them good. Near the central degree,
we instead estimate how many vectors must be removed from a random
sample to leave an independent set. An entropy argument turns a small
expected loss into a count of independent sets, to which the same
deletion argument applies.

Our second result gives a quantitative comparison throughout a small
linear degree range. It is the main estimate used to prove
Theorem~\ref{thm:all-degrees} for small degrees.

\begin{mainresult}
\begin{theorem}[Additive and relative error for small degrees]\label{thm:main}
There are absolute constants $C>0$ and $n_0$ such that, whenever
$n\ge n_0$ and $2\le d\le10^{-4}n$,
\[
 0\le\log_2Q_{n,d}-\log_2T_{n,d}
 \le(n+1)\log_2(N/m_{n,d})+C,
 \qquad E_{n,d}\le C\frac{n}{m_{n,d}}.
\]
Consequently, the additive error is $O(n^2)$, and
\[
 \log_2T_{n,d}\sim\log_2Q_{n,d}
 \sim m_{n,d}\log_2(N/m_{n,d})
\]
uniformly throughout this degree range as $n\to\infty$.
\end{theorem}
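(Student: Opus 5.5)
The plan is to compare a lower bound on $T_{n,d}$, obtained by counting good sets of evaluation vectors, with the upper bound $Q_{n,d}$ of \eqref{eq:cover}. The lower bound of $0$ on the additive error is immediate from \eqref{eq:cover}. For the upper bound I take the following as given from Section~\ref{sec:evaluation}: every good $(m-1)$-set $S$ of evaluation vectors yields, via the hyperplane $\{p: p(x)=0\ \forall x\in S\}$-type construction, chambers of the arrangement, and distinct good sets give distinct functions up to a bounded multiplicity. Concretely, I will use a lower bound of the form
\[
T_{n,d}\ \ge\ \frac{\#\{\text{good sets of size } m-1\}}{\text{poly}(n)\ \text{or}\ 2^{O(n)}\text{-type overcount}},
\]
with the overcount absorbed into the $(n+1)\log_2(N/m)$ term. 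On the upper side, $\log_2 Q_{n,d}\le \log_2\binom{N}{m}+O(1)$, because $m\le N/2$ here and the binomial sum is dominated by its top term, with geometric decay of ratio about $m/N$. Since $\log_2\binom Nm = m\log_2(N/m)+O(m)$, the target reduces to showing
\[
\#\{\text{good } (m-1)\text{-sets}\}\ \ge\ \binom{N}{m}\,(m/N)^{n+1}2^{-C}.
\]

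The key step is the probabilistic one. Choose $m-1$ vertices uniformly at random and show that the resulting set is good with probability bounded below by an absolute constant. Equivalently, the random $(m-1)\times m$ evaluation matrix has full rank, and the stronger goodness condition holds, with probability at least $1/2$. For $d\le 10^{-4}n$ I would use the Reed--Muller weight structure. A set fails to be independent only if some nonzero polynomial of degree $\le d$ vanishes on it. Union bounding over codewords $p\in\RM(d,n)$, the probability that a random set lies in the zero set of $p$ is about $(1-\wt(p)/N)^{m-1}$. The Kasami--Tokura or Kaufman--Lovett--Porat weight distribution bounds then control the number of codewords of small weight. The minimum weight $2^{n-d}$ gives
\[
(1-2^{-d})^{m}\approx \exp(-m2^{-d}),
\]
and $m2^{-d}\gg n d$ for $d\le 10^{-4}n$. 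This beats the number of minimum-weight codewords, which is roughly $2^{O(nd)}$. Summing over weight levels should give failure probability $o(1)$. The goodness condition, presumably that certain subsets or augmented sets remain independent, would be handled by the same union bound with $m-1$ replaced by $m-1-O(1)$.

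The converting step then runs as follows. The number of good ordered selections is at least $\tfrac12 N^{m-1}$ for distinct tuples, so the number of good sets is at least $\tfrac12\binom{N}{m-1}$. This equals $\binom Nm\cdot m/(N-m+1)$ up to a factor $\tfrac12$. The loss factor $N/m$, together with the fact that each function is produced by up to $\binom{m}{\,\cdot\,}$-type overcounting bounded by $(N/m)^n$, should give exactly the exponent $n+1$. I expect to match the precise overcount from the good-set lemma here. The relative error bound $E_{n,d}\le Cn/m$ follows by dividing by $\log_2Q\ge m\log_2(N/m)$, and $O(n^2)$ follows from $\log_2(N/m)\le n$.

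The main obstacle is the union bound over low-weight Reed--Muller codewords uniformly up to $d=10^{-4}n$. Codewords of weight between $2^{n-d}$ and $2^{n-d+1}$ must be counted carefully, using weight-distribution estimates of the form $\#\{p:\wt(p)\le 2^{n-d}\ell\}\le 2^{O(n^4\ell\cdots)}$. The constant $10^{-4}$ is chosen so that $m2^{-d}$, which is at least $\binom nd 2^{-d}\ge (n/(2d))^d$, dominates these counts.
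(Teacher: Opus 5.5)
\textbf{There is a genuine gap: no $(m-1)$-set is good, so your central step fails.} In this degree range, no $(m-1)$-set of evaluation columns is good. Let $U$ be an independent $(m-1)$-set. Its $\F_2$-span is the hyperplane $\{y:c^{\mathrm T}y=0\}$ for some non-zero $c\in\F_2^m$. This hyperplane contains every column $A_x$ with $f_c(x)=0$. Since $f_c$ vanishes on $U$, we have $f_c\ne1$, so $f_c+1$ is a non-zero word of $\RM(n,d)$ with at least $2^{n-d}$ non-zero coordinates. The span therefore contains at least $2^{n-d}\gg m$ columns.

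More generally, the span of a $k$-set contains $2^k$ of the $2^m$ vectors of $\F_2^m$, so one expects about $N2^{k-m}$ further columns in it. Goodness therefore forces $m-k$ to be of order $n$. This, not an overcount, is where the term $(n+1)\log_2(N/m)$ comes from. The reduction $T_{n,d}\ge\#\{U:U\text{ good}\}$ in \eqref{eq:good-sets-count} has no multiplicity loss, so the $(N/m)^n$ overcount you postulate does not exist. As a sanity check, if a constant fraction of $(m-1)$-sets were good, comparison with $Q_{n,d}=2(1+o(1))\binom N{m-1}$ would give an $O(1)$ additive error, far stronger than the theorem.

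\textbf{Your probabilistic estimate targets the wrong event.} A union bound over non-zero codewords vanishing on $U$ controls failure of full \emph{row} rank, and that failure is certain when $|U|<m$. With $\rho=\rank_{\F_2}A_U$, the expected number of such codewords is $\E2^{m-\rho}-1\ge2^{m-|U|}-1\ge1$. Most of this comes from the nearly balanced words, so careful counting of low-weight words cannot make the sum $o(1)$.

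For column independence, the paper uses $\E2^{r-\rho}=\E_f(1+B_f)^r$, and three contributions must each be handled.
\begin{itemize}
\item The zero word gives the main term $2^{r-m}$.
\item The words of small bias give $1+\binom r2/N+r^4\delta^2/N$. The term $\binom r2/N$, of order $m^2/N$, must be cancelled against the probability of a repeated vertex when passing from sampling with replacement to uniform sets (Lemma~\ref{lem:conditioning}).
\item The words of large bias contribute at most $N^{-2}$, by the Sberlo--Shpilka estimates (Lemma~\ref{lem:large-bias}).
\end{itemize}
This precision on the scale $1/N$ is essential, because goodness is obtained from $\Pr(U\text{ not good})\le(N-k)q_{k+1}$.

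The main term $2^{k+1-m}$, multiplied by $N$, then forces $k=m-n-2$. This gives $T_{n,d}\ge(\tfrac12-o(1))\binom N{m-n-2}$. The ratio $\binom N{m-1}/\binom N{m-n-2}$ is a product of $n+1$ factors. Each has logarithm $\log_2(N/m)+O(m/N+n/m)$, so the total error is $O(1)$. This yields the stated $(n+1)\log_2(N/m)+C$ bound.
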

\end{mainresult}

The constant $10^{-4}$ is a convenient choice for the estimates in the
proof (see the remarks at the end of Subsection~\ref{sec:small-degree-completion}).
Since $\log_2(N/m)=n-\log_2m$, this factor is asymptotic to $n$
precisely when $\log_2m/n\to0$. When $d$ is proportional to $n$,
$m$ grows exponentially with $n$, so $\log_2m$ is of order $n$
and cannot be discarded. For sublinear degree, however, the bound
$m\le(en/d)^d$ gives
\[
 \frac{\log_2m}{n}
 \le\frac dn\log_2\!\left(\frac{en}{d}\right)\longrightarrow0.
\]
Thus Theorem~\ref{thm:main}, together with Zuev's degree-one result,
implies, for every integer sequence $1\le d=o(n)$,
\begin{equation}\label{eq:sublinear}
 \log_2T_{n,d}\sim n m_{n,d}=n\sum_{j=0}^d\binom nj.
\end{equation}

Our next result gives additive and relative error bounds over
a wider range below $n/2$.

\begin{mainresult}
\begin{theorem}[Additive and relative error below the central degree]\label{thm:away-centre}
There are absolute constants $C>0$ and $n_0$ such that, whenever
$n\ge n_0$, $n^{-1/3}\le\varepsilon<1/2$ and
$2\le d\le(1/2-\varepsilon)n$,
\[
 0\le\log_2Q_{n,d}-\log_2T_{n,d}
 \le Cm_{n,d}\ln(e/\varepsilon),
 \qquad
 E_{n,d}\le\frac{C\ln(e/\varepsilon)}{n\varepsilon^2}.
\]
\end{theorem}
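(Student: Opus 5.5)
The plan follows the intermediate-degree strategy described in the introduction: count independent sets of evaluation vectors, then delete a few vectors from each to make it good. The lower bound $\log_2Q_{n,d}-\log_2T_{n,d}\ge0$ is immediate from \eqref{eq:cover}. For the upper bound, the good-set reduction of Section~\ref{sec:evaluation} gives
\[
T_{n,d}\ \ge\ \frac{\#\{\text{good $m$-sets}\}}{\text{multiplicity}}.
\]
Here the multiplicity is the number of good sets mapped to the same function; I would take it to be at most $2^{O(m)}$ or so. It therefore suffices to show that the number of good sets of evaluation vectors is at least $\binom{N}{m}e^{-Cm\ln(e/\varepsilon)}$. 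On the other side, $\log_2Q_{n,d}\le\log_2\binom{N}{m}+O(m)$, because $m\le N/2$ and the binomial sum is dominated geometrically by its top term.

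The first step is to count independent sets through Reed--Muller codes. Choose a uniformly random $m$-subset $S$ of $V$. The evaluation vectors indexed by $S$ are dependent exactly when some nonzero polynomial of degree at most $d$, viewed over $\F_2$ via the dual Reed--Muller code $\RM(n-d-1,n)$, is supported on $S$. Over $\R$ I would instead use the rank of the restricted evaluation matrix. I would bound $\E[m-\rank]$ by a union bound over kernel vectors, organised by their support size $w$. The minimum distance of the relevant code is $2^{d+1}$. For a $w$-set to lie inside $S$ costs about $(m/N)^w$. The number of such kernel supports is controlled by weight-enumerator estimates for Reed--Muller codes, and these are fine for $d\le(1/2-\varepsilon)n$, since there $m/N\le e^{-2\varepsilon^2n}$ by Hoeffding.

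This gives an expected rank deficiency of at most $\delta m$, with $\delta$ tied to $\varepsilon$. Since $\rank(S)$ is at least $m$ minus the deficiency, Markov's inequality shows that with probability at least $1/2$ the set $S$ contains an independent subset of size $m-2\delta m$. By pigeonhole, the number of independent $(m-k)$-sets is then at least $\binom Nm/(2\binom{m}{k})$, where $k=2\delta m$.

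Next comes the deletion step. From each independent set, delete a further $O(m\varepsilon')$ vectors so that the goodness condition holds, and extend back to size $m$. Each deletion or extension changes the count by a factor of at most $\binom{N}{k'}\le(eN/k')^{k'}$. The total loss in $\log_2$ is $O(k\log(N/k))$. To keep this at most $Cm\ln(e/\varepsilon)$, the number of deleted vectors must be $k\lesssim m\ln(e/\varepsilon)/\log(N/m)$. Since $\log(N/m)\ge c\varepsilon^2 n$, this still allows a loss of $O(m)$ deletions, and the entropy accounting gives a loss of $m\cdot H(k/m)$-type terms, which is at most $Cm\ln(e/\varepsilon)$ whenever $k/m$ is at least of order $\varepsilon$.

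Finally, the relative bound $E_{n,d}\le C\ln(e/\varepsilon)/(n\varepsilon^2)$ follows by dividing by
\[
\log_2Q_{n,d}\ \ge\ m\log_2(N/m)\ \ge\ c\,m\,\varepsilon^2 n .
\]
The main obstacle is the rank-deficiency estimate. The union bound over low-weight kernel vectors must beat $\binom{N}{w}(m/N)^w$ for all $w$ up to about $m$, and this requires tail bounds on the Reed--Muller weight distribution that remain effective up to $d=(1/2-\varepsilon)n$ with $\varepsilon$ as small as $n^{-1/3}$. I would first try the known Kaufman--Lovett--Porat-type weight-distribution bounds. I would use them for small $w$ and switch to a crude entropy bound when $w$ is a constant fraction of $m$. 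The threshold $\varepsilon\ge n^{-1/3}$ should be exactly what makes these two regimes overlap.
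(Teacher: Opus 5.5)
There is a genuine gap, and it sits in your pigeonhole step. Suppose at least half of the $m$-sets $S$ contain an independent $(m-k)$-set $T(S)$. The fibre of $S\mapsto T(S)$ over a given $T$ can then be as large as $\binom{N-m+k}{k}$, the number of $m$-sets containing $T$. The divisor is therefore not $\binom mk$. Since $\binom Nm\binom mk=\binom N{m-k}\binom{N-m+k}{k}$, what you actually obtain is $J_{m-k}\ge\binom N{m-k}/(2\binom mk)$. Moreover, $\log_2\binom Nm-\log_2\binom N{m-k}\approx k\log_2(N/m)$.

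So each unit of rank deficiency at sample size $m$ costs about $\log_2(N/m)\ge(2/\ln2)\varepsilon^2n$ bits, not $O(1)$ bits. Your own requirement $k\lesssim m\ln(e/\varepsilon)/\log_2(N/m)$ then forces $k=O(m\ln(e/\varepsilon)/(\varepsilon^2n))$, which is $O(m/n)$ for fixed $\varepsilon$, not ``$O(m)$ deletions''. No available estimate gives so small a deficiency when exactly $m$ columns are sampled and $d$ is a constant fraction of $n$. By \eqref{eq:vanishing-count}, your union bound over kernel words supported in $S$ is governed by low-weight words of $\RM(n,d)$. For the weight class near $2^{-\ell}$, the Sberlo--Shpilka exponent $17(2\ell+6)\beta^{\ell-1}m$, with $\beta=1/2-\varepsilon$, exceeds the exponent $m2^{-\ell-1}\log_2e$ gained from vanishing on the sample. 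The excess is a factor $34\ln2\,(2\ell+6)(2\beta)^{\ell-1}$, which can be of order $1/\varepsilon$. Even the sharp-threshold bound of Lemma~\ref{lem:central-nullity}, where it applies, gives deficiency of order $ma/n$ with $a=\ln(N/m)$. That means a loss of order $ma^2/n$, which exceeds $m\ln(e/\varepsilon)$ unless $a^2=O(n\ln(e/\varepsilon))$.

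The missing idea is oversampling. The paper includes each vertex independently with probability $q=D_\varepsilon m/N$, where $D_\varepsilon=544\ln2/\varepsilon$. Lemma~\ref{lem:rank-retention} then shows that $\rank_{\F_2}A_U=m$ with probability at least $1/2$. It uses a union bound over non-zero $f\in\RM(n,d)$ vanishing on $U$, each with probability at most $e^{-D_\varepsilon m\wt(f)}$, together with the weight bound above. The factor $D_\varepsilon$ multiplies the gained exponent and absorbs the overshoot $\ell(2\beta)^{\ell-1}\le1/(2\varepsilon)$. The hypothesis $\varepsilon\ge n^{-1/3}$ is what guarantees $q<1$. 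Then $\tfrac12\le J_mq^m$ gives $\log_2J_m\ge m\log_2(N/m)-m\log_2D_\varepsilon-1$. Oversampling thus costs $\log_2D_\varepsilon=O(\ln(e/\varepsilon))$ bits per vector instead of $\log_2(N/m)$, and this is exactly the $m\ln(e/\varepsilon)$ in the theorem. Lemma~\ref{lem:central-deletion} with $g=2^{d+1}$ then deletes $\lceil m\ln(N+1)/(2^{d+1}-1)\rceil$ vectors, at cost $O(mn^2/2^d+n)=o(m)$ when $d>10^{-4}n$.

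Two further points in your plan need repair. First, for $d$ below about $2\log_2n$ the deletion cost is too large, and for bounded $d$ it removes every vector. So $2\le d\le10^{-4}n$ must be handled by the direct good-set count of Theorem~\ref{thm:main}, whose additive error is $O(n^2)=O(m)$. Second, good sets cannot be ``extended back to size $m$''. An independent $m$-set spans $\F_2^m$, so it is never good when $N>m$. One counts good $(m-\ell)$-sets directly, with no multiplicity: distinct good sets give distinct flats, and \eqref{eq:regions-flats} bounds the number of chambers below by the number of flats. Your final division by $\log_2Q_{n,d}\gtrsim m\varepsilon^2n$ matches the paper.
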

\end{mainresult}

\newpage

Our fourth result gives a further estimate as the degree approaches
the central degree from below. Put
\[
 \alpha=\alpha_{n,d}=\frac{m_{n,d}}{2^n},\qquad
 a=\ln(1/\alpha),
\]
and write $H_2(u)=-u\log_2u-(1-u)\log_2(1-u)$ for binary
entropy, with $0\log_2 0=0$. For $d<n/2$, the condition
$a=o(n)$ is equivalent to $n/2-d=o(n)$.

\begin{mainresult}
\begin{theorem}[Relative error just below the central degree]\label{thm:near-central}
Recall that $a=\ln(2^n/m_{n,d})$. There are absolute constants
$\delta_0,\delta_1,C>0$ and $n_0$ such that, whenever $n\ge n_0$
and $1\le d<n/2$, the following hold.

\textup{(i)} If $a\le\delta_0n$, then
\begin{equation}\label{eq:near-central-direct}
 0\le E_{n,d}\le C\frac{a+\ln(n/a)}n.
\end{equation}

\textup{(ii)} If $(a/n)\ln(a+2)\le\delta_1$, then
\begin{equation}\label{eq:degree-dependent-error}
 0\le E_{n,d}\le\frac{C\{1+\ln(a+2)\ln(n/a)\}}n.
\end{equation}

In particular, if $d<n/2$ and $n/2-d=o(n)$, then
\[
 \log_2T_{n,d}\sim\log_2Q_{n,d}\sim NH_2(\alpha).
\]
\end{theorem}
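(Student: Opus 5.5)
The plan is the route the introduction sketches for the near-central range: a lower bound on $T_{n,d}$ from counting good sets of evaluation vectors, compared with $\log_2Q_{n,d}$. First I would pin down the upper side. Since $m=\alpha N$ with $\alpha<1/2$ for $d<n/2$, the standard binomial-sum estimates give $\log_2Q_{n,d}=NH_2(\alpha)-\tfrac12\log_2(\alpha N)+O(1)$. In the regime $a\le\delta_0 n$ we have $NH_2(\alpha)\asymp N\alpha\,a$. So a relative error of order $\beta/n$ means the additive loss must be at most of order $\beta m a/n$.

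For the lower bound I would use the reduction of Section~\ref{sec:evaluation}: $T_{n,d}$ is at least the number of good $(m-1)$-subsets of the evaluation vectors, possibly up to a factor that is negligible here. I then need a lower bound for that number of the shape $\binom{N}{m}2^{-L}$ with $L$ small. The steps, in order, are as follows.

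\begin{enumerate}
\item Take a uniformly random sample $S$ of the evaluation vectors of size $s=m-r$. Bound the expected rank deficiency $\E[s-\rank S]$ using the sharp-threshold estimate for Reed--Muller codes. Just below the central degree, $\RM(n,d)$ has dual $\RM(n,n-d-1)$ with minimum distance $2^{d+1}$. Random subsets of density below $1$ avoid supporting dual codewords except with small probability, so the expected deficiency should be at most about $m e^{-c\,r a/m}$ plus a lower-order term.
\item Convert this expectation into a count with the entropy argument. Deleting, from each sample, a minimal set of vectors whose removal leaves an independent set is a map whose fibres are small. The entropy of the deleted set is at most $\E|D|\log_2(eN/\E|D|)$. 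The number of independent sets of size $s-\E|D|$ is therefore at least $\binom Ns 2^{-O(\E|D|\,\log(N/\E|D|))}$.
\item Apply the deletion argument to pass from independent sets to good sets. This removes a further small number of vectors at entropy cost of the same form.
\item Compare $\log_2\binom{N}{m-r}$ with $NH_2(\alpha)$. Reducing the set size by $r$ costs about $r\log_2((1-\alpha)/\alpha)\approx r a/\ln 2$.
\end{enumerate}

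The two parts of the theorem come from optimising $r$.

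\textbf{Part (i).} Choose $r\asymp m(a+\ln(n/a))/(na)\cdot a$, so that the shrinkage cost $r a$ is of order $m(a+\ln(n/a))$ divided by a factor of $n$ relative to $m a$. Choose the deficiency parameter so that the entropy term $\E|D|\log(N/\E|D|)$ is of the same size. This yields $E_{n,d}\le C(a+\ln(n/a))/n$.

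\textbf{Part (ii).} The refinement is that the deficiency bound decays like $\exp(-c\,r a/m)$. Taking $r$ so that the deficiency is only $m/n$ means the logarithmic factor in the entropy cost is $\ln(n/a)$, multiplied by the $\ln(a+2)$ coming from the $\log(N/\E|D|)$ factor. The hypothesis $(a/n)\ln(a+2)\le\delta_1$ keeps the required $r$ below $m$.

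The final clause follows because $n/2-d=o(n)$ forces $a=o(n)$, so both error bounds tend to $0$, together with the asymptotic $\log_2Q\sim NH_2(\alpha)$.

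The main obstacle is the sharp-threshold estimate for the expected rank deficiency of a random subset at density close to $1-\Theta(\text{small})$ of $m$. It must hold uniformly in $d$ close to $n/2$, where $\alpha$ is only polynomially or exponentially small in $a$. I would try first to bound deficiency by the expected number of minimal dependencies, that is, supports of dual Reed--Muller codewords inside $S$. This uses the weight enumerator bounds for $\RM(n,n-d-1)$ (Kaufman--Lovett--Porat style), summed over weights with the probability $(s/N)^w$. The delicate point is the low-weight codewords near $2^{d+1}$, whose count must be beaten by $\alpha^{w}$.
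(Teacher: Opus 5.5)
Your Steps~2--4 match the paper: the entropy count that splits a sample into an independent part and a redundant part, the deletion step to reach good sets, and the comparison with $NH_2(\alpha)$. The input to Step~1 and the mechanism for part~(ii), however, do not work as proposed.

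\textbf{Step 1: the union bound cannot control the deficiency.} Summing $t^{w}$ against the weight distribution of $\ker_{\F_2}A=\RM(n,n-d-1)$ computes $\E\,2^{|S|-\rank_{\F_2}A_S}$ for a $t$-biased sample $S$. By the MacWilliams identity this equals $2^{-m}\sum_{c\in\RM(n,d)}(1+t)^{N-|c|}(1-t)^{|c|}$. Keeping only the zero word shows it is at least $2^{N\log_2(1+t)-m}$. Since $\log_2(1+x)>1.16x$ for $0<x\le1/2$, this is at least $2^{m/25}$ for all $0.9\alpha\le t\le\alpha$. Indeed, at rate $\alpha$ the sample has more than $m$ columns with probability about $1/2$, so it is dependent with probability bounded away from $0$. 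What is small is only the \emph{expected} nullity, and no first-moment count over dependencies sees that. The obstruction is therefore global, not the words of weight near $2^{d+1}$. Union bounds of this type succeed only a constant factor below $m$, which costs a constant fraction of $NH_2(\alpha)$.

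Your heuristic $me^{-cra/m}$ is also on the wrong scale. It contains no $n$, and at your part~(i) choice $r\asymp m(a+\ln(n/a))/n$ it is of order $m$, i.e.\ trivial, whenever $a=O(\sqrt n)$.

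The missing idea is the paper's threshold argument. Let $h(t)$ be the probability that a fixed column lies in the span of the other columns sampled at rate $t$. Then $\E(|U|-\rank_{\F_2}A_U)=N\int_0^t h$ and $\int_0^1(1-h)=\alpha$, which locates the threshold at $\alpha$ with no weight information. O'Donnell's biased sharp-threshold inequality, with transitivity supplied by the affine maps fixing the column, gives $\frac{d}{dt}\ln\frac{h}{1-h}\ge c_2n/(t\ln(e/t))$. Hence the window has width $O(\alpha a/n)$, and $\int_0^\alpha h=\int_\alpha^1(1-h)\le C_I\alpha a/n$. Sampling at rate exactly $\alpha$ then gives part~(i) directly; no shortfall is needed.

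\textbf{Part (ii): a shortfall scheme cannot reach this bound.} A shortfall of $\epsilon m$ costs about $\epsilon m\log_2((1-\alpha)/\alpha)$. That is a relative error of about $\epsilon$ once $a$ is large, and nothing offsets it. The window is only known to have relative width $O(a/n)$, so certifying a nullity well below $ma/n$ requires $\epsilon$ of order $a/n$. A shortfall scheme therefore stalls at the part~(i) rate, whereas (ii) is better exactly when $a$ is large compared with $\ln(a+2)\ln(n/a)$.

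The paper instead \emph{oversamples}, at $q=(1+\eta)\alpha$ with $\eta=B(a/n)\ln(a+2)$. Above the window, $(1-h(q))/h(q)\le e^{-\mu n\eta/a}$. Then $m-\E\rank_{\F_2}A_U=N\int_q^1(1-h)$ gives a rank deficiency $\rho m\le C_0m/n$. In the loss $N\{H_2(\alpha)-H_2((1+\eta)\alpha)+H_2(\alpha(\eta+\rho))\}$, the two contributions $\eta ma/\ln2$ cancel: the entropy gained from the extra $\eta m$ sampled columns pays for recording which columns are redundant. The residual $O(m\{\rho a+\eta\ln(1/\eta)\})$, divided by $\log_2Q_{n,d}$ (of order $m(a+1)$), produces $\{1+\ln(a+2)\ln(n/a)\}/n$. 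It does not come from a $\log(N/\E|D|)$ factor as your sketch suggests.
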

\end{mainresult}

Part~(i) applies throughout
\[
 \frac n2-c_0n\le d<\frac n2
\]
for a sufficiently small absolute constant $c_0>0$. Part~(ii)
applies throughout
\[
 \frac n2-\frac{c_1n}{\sqrt{\ln n}}\le d<\frac n2
\]
for any fixed $0<c_1<\sqrt{\delta_1/2}$. Both assertions hold for
all sufficiently large $n$, uniformly over the indicated degrees.
Subsection~\ref{sec:central-degree-ranges} justifies these ranges and
shows where the respective hypotheses fail.

Theorem~\ref{thm:main} gives relative error $O(n/m_{n,d})$
for small degrees. Theorem~\ref{thm:away-centre} reaches larger
degrees, with relative error $O(1/n)$ when $d$ stays a fixed
proportion below $n/2$. Comparing with Theorem~\ref{thm:away-centre}
at fixed $\varepsilon$, Theorem~\ref{thm:main}'s bound has smaller
order for $d\ge3$ in their common range, while at $d=2$ the bounds
have the same order. Theorem~\ref{thm:near-central}
gives sharper estimates closer to $n/2$ and covers degrees beyond
the range of Theorem~\ref{thm:away-centre}. Where the ranges
overlap, we use whichever bound is smaller.
Theorem~\ref{thm:all-degrees} gives a single bound valid for every
degree, while Theorems~\ref{thm:main}--\ref{thm:near-central}
give smaller errors in particular ranges.

\subsection{Related work and proof strategy}

For degree one, Zuev~\cite{Zuev1989} established
$\log_2T_{n,1}\sim n^2$. Kahn, Koml\'os and
Szemer\'edi~\cite[Section~4]{KahnKomlosSzemeredi1995} obtained
$\log_2T_{n,1}=n^2-n\log_2n+O(n)$.
Comparison with the same expansion for $\log_2Q_{n,1}$ gives
$\log_2Q_{n,1}-\log_2T_{n,1}=O(n)$, which we use in the
degree-one case of Theorem~\ref{thm:all-degrees}.
More recently, Xie~\cite[Theorem~1]{Xie2026} has obtained
\[
 T_{n,1}=2\binom{2^n-1}{n}\bigl(1+O(n^{-99})\bigr).
\]
Since the last summand dominates the sum defining $Q_{n,1}$,
this gives $T_{n,1}\sim Q_{n,1}$.

For higher degrees, the classical arrangement bound gives
$T_{n,d}\le Q_{n,d}$. See Baldi~\cite{Baldi1988} and
Anthony~\cite{Anthony1995} for its application to polynomial
threshold functions. A recursive argument due to
Saks~\cite{Saks1993} gives
\[
 \log_2T_{n,d}\ge\binom{n}{d+1}\qquad(1\le d<n).
\]
For each fixed $d\ge2$, the lower bound has leading term
$n^{d+1}/(d+1)!$, whereas $\log_2Q_{n,d}\sim n^{d+1}/d!$.
These bounds therefore determine the order of the logarithm,
but leave a factor of $d+1$ between their leading terms.
An account of these earlier bounds is given in
Anthony~\cite[Sections~4.5--4.7]{Anthony2001}.
Baldi and Vershynin~\cite[Theorem~1.1 and the subsequent remarks]{BaldiVershynin2019}
closed this gap for every fixed degree. More generally, their
quantitative bounds apply for $d\le n^{1-\varepsilon}$, for each
fixed $0<\varepsilon<1$, and yield
$\log_2T_{n,d}\sim nm_{n,d}$ when $d=o(\log n)$.

Anthony~\cite[p.~96]{Anthony1995} observed that a lower bound for
$T_{n,d}$ asymptotically matching the classical arrangement upper
bound $Q_{n,d}$ would imply the upper-degree assertion of the
Wang--Williams conjecture~\cite{WangWilliams1991}.

The proofs combine probabilistic estimates with counting arguments
that turn those estimates into lower bounds for $T_{n,d}$.
The counting methods have antecedents in work of
Odlyzko~\cite{Odlyzko1988}, Zuev~\cite{Zuev1992} and
Irmatov~\cite{Irmatov2001,Irmatov2018}, and in Baldi and
Vershynin's treatment of polynomial threshold
functions~\cite{BaldiVershynin2019}.
The connection with coding theory comes from Reed--Muller codes,
which record the values of polynomials over $\mathbb F_2$ on the
Boolean cube. Recovery asks whether missing values can be
determined from those that are known.
Baldi and Vershynin use results of Abbe, Shpilka and
Wigderson~\cite{AbbeEtAl2015}. Luh~\cite{Luh2026} obtains
related estimates for real linear independence at fixed degree.

For Theorem~\ref{thm:main}, we sharpen a first-moment argument of
Bhandari et al.~\cite{BhandariEtAl2022}, based on estimates of
Sberlo and Shpilka~\cite{SberloShpilka2020}. The direct count
requires strong bounds on the probability of linear dependence,
which we establish in the small-degree range. This gives a direct
count of good sets with the precision needed for the additive
$O(n^2)$ error bound.

For Theorem~\ref{thm:away-centre}, we use a result of Sberlo and
Shpilka~\cite[Theorem~1.2]{SberloShpilka2020} to count linearly
independent sets of evaluation vectors. The deletion argument from
Section~\ref{sec:evaluation} then converts this into a count of
good sets. For small degrees, the required bound follows from
Theorem~\ref{thm:main}.

Near the central degree, the sampling method used for
Theorem~\ref{thm:away-centre} would require more columns than are
available. For Theorem~\ref{thm:near-central}, we instead sample
about $m$ columns and control the expected nullity. We combine a
sharp-threshold result of O'Donnell~\cite[equation~(10.8)]{ODonnell2014}
with an entropy argument to obtain this count, then apply the same
deletion argument to count good sets. The approach is related to
work of Kudekar et al.~\cite{KudekarEtAl2017} on Reed--Muller codes
and to the author's centrality paper~\cite{Anthony2026}.

Sections~\ref{sec:binary-estimates} and~\ref{sec:all-degrees}
prove Theorems~\ref{thm:main}, \ref{thm:away-centre}, and
\ref{thm:near-central}(i). The supporting weight and sharp-threshold
estimates are proved in Appendices~\ref{app:weight-estimates} and
\ref{app:recovery-estimates}. Appendix~\ref{app:near-central-ii}
completes the proof of Theorem~\ref{thm:near-central}(ii).

The deduction of Theorem~\ref{thm:all-degrees} is given in
Section~\ref{sec:all-degree-deduction}, after the main counting arguments. Degree one follows from
Kahn, Koml\'os and Szemer\'edi. For a suitable
$\varepsilon_n$ of order $(\ln n)^{-1/2}$, Theorem~\ref{thm:away-centre}
covers $2\le d\le(1/2-\varepsilon_n)n$, and
Theorem~\ref{thm:near-central}(ii) covers the remaining degrees below
$n/2$. Applying part~(i) just below the centre and using monotonicity
in $d$ gives the upper range. Thus the all-degree rate is obtained
by combining the estimates where their ranges overlap.

Figure~\ref{fig:proof-dependencies} summarises the three counting
routes and shows how the resulting estimates are used to prove
Theorem~\ref{thm:all-degrees}.

\begin{figure}[htbp]
\centering
\begin{tikzpicture}[
 stage/.style={draw=black!45,fill=black!4,rounded corners=2pt,
   text width=3.9cm,minimum height=1.85cm,align=center,
   font=\footnotesize,inner sep=5pt},
 common/.style={draw=black!50,fill=black!9,rounded corners=2pt,
   text width=13.4cm,minimum height=0.65cm,align=center,
   font=\footnotesize,inner sep=5pt},
 flow/.style={-{Stealth[length=2mm]},line width=0.6pt,shorten <=1pt,shorten >=1pt}]
 \node[common] at (0,0.9)
   {Common setup: evaluation columns over \mbox{$\F_2$} and \mbox{$\R$}
    (Section~\ref{sec:evaluation})};
 \node[font=\small\bfseries] at (-4.6,0) {Theorem~\ref{thm:main}};
 \node[stage] (B1) at (-4.6,-1.2) {First moment\\(Lemma~\ref{lem:conditioning})\\Bias estimates\\(Lemmas~\ref{lem:weight-estimates} and~\ref{lem:large-bias})};
 \node[stage] (B2) at (-4.6,-3.55) {Bound on \mbox{$q_r$}\\\eqref{eq:rank-bound}};
 \draw[flow] (B1) -- (B2);
 \node[stage] (B3) at (-4.6,-5.9) {Good sets directly\\\eqref{eq:good-probability}};
 \draw[flow] (B2) -- (B3);
 \node[stage] (B4) at (-4.6,-8.25) {Lower bound for \mbox{$T_{n,d}$}\\\eqref{eq:enumeration-transfer}};
 \draw[flow] (B3) -- (B4);
 \node[font=\small\bfseries] at (0,0) {Theorem~\ref{thm:away-centre}};
 \node[stage] (C1) at (0,-1.2) {Full-rank sampling\\(Lemma~\ref{lem:rank-retention})};
 \node[stage] (C2) at (0,-3.55) {Count independent sets\\\mbox{$J_m$}};
 \draw[flow] (C1) -- (C2);
 \node[stage] (C3) at (0,-5.9) {Count good sets\\by deletion\\(Lemma~\ref{lem:central-deletion})};
 \draw[flow] (C2) -- (C3);
 \node[stage] (C4) at (0,-8.25) {Lower bound for \mbox{$T_{n,d}$}\\\eqref{eq:central-deletion-count}};
 \draw[flow] (C3) -- (C4);
 \node[font=\small\bfseries] at (4.6,0) {Theorem~\ref{thm:near-central}};
 \node[stage] (D1) at (4.6,-1.2) {Integral identities and\\sharp threshold\\(Lemmas~\ref{lem:recovery-integrals} and~\ref{lem:recovery-bounds})};
 \node[stage] (D2) at (4.6,-3.55) {Expected nullity\\(Lemmas~\ref{lem:central-nullity} and~\ref{lem:retention-excess})};
 \draw[flow] (D1) -- (D2);
 \node[stage] (D3) at (4.6,-5.9) {Count independent sets\\from expected nullity\\(Lemma~\ref{lem:entropy-independent})};
 \draw[flow] (D2) -- (D3);
 \node[stage] (D4) at (4.6,-8.25) {Lower bound for \mbox{$T_{n,d}$}\\by deletion (Lemma~\ref{lem:central-deletion}),\\giving \eqref{eq:entropy-transfer}};
 \draw[flow] (D3) -- (D4);
 \node[common] (comparison) at (0,-10.2)
   {Lower bounds for \mbox{$\log_2T_{n,d}$}, compared with
    \mbox{$\log_2Q_{n,d}$}\\(Lemma~\ref{lem:cover-estimates})};
 \draw[flow] (B4.south) -- (B4.south |- comparison.north);
 \draw[flow] (C4.south) -- (comparison.north);
 \draw[flow] (D4.south) -- (D4.south |- comparison.north);

 \node[font=\footnotesize,align=center,text width=13.4cm] at (0,-11.4)
   {\textbf{Application to the proof of Theorem~\ref{thm:all-degrees}}
    (Subsection~\ref{sec:all-degree-proof})\\[2pt]
    \mbox{$\varepsilon_n=c_1/\sqrt{\ln n}$}, with fixed
    \mbox{$0<c_1<\sqrt{\delta_1/2}$}, \mbox{$n$} sufficiently large};
 \node[stage,minimum height=1.2cm,anchor=south] (proofB) at (-4.6,-14.85)
   {\textbf{Theorem~\ref{thm:main}}\\[3pt]
    \mbox{$2\le d\le10^{-4}n$}};
 \node[stage,minimum height=1.2cm,anchor=south] (proofC) at (0,-14.85)
   {\textbf{Theorem~\ref{thm:away-centre}}\\[3pt]
    \mbox{$2\le d\le(1/2-\varepsilon_n)n$}};
 \node[stage,minimum height=2.4cm,anchor=south] (proofD) at (4.6,-14.85)
   {\textbf{Theorem~\ref{thm:near-central}}\\[3pt]
    Part~(ii):\\
    \mbox{$(1/2-\varepsilon_n)n<d<n/2$}\\[3pt]
    Part~(i) at \mbox{$d_0=\lfloor n/2\rfloor-1$},\\
    then monotonicity for\\\mbox{$d\ge\lceil n/2\rceil$}};
 \draw[flow] (proofB.east) -- (proofC.west);
 \node[common] (theoremA) at (0,-15.8)
   {\textbf{Theorem~\ref{thm:all-degrees}:} \mbox{$1\le d\le n$}.
    The classical degree-one estimate supplies \mbox{$d=1$}};
 \draw[flow] (proofC.south) -- (theoremA.north);
 \draw[flow] (proofD.south) -- (proofD.south |- theoremA.north);
\end{tikzpicture}
\caption{The counting routes and their use in proving Theorem~\ref{thm:all-degrees}.
Arrows point from an input to the step or result that uses it.
Parenthetical numerical references are to equations.
The lower panel shows only the degree ranges used in that proof.
The theorem statements give their full hypotheses.
Theorem~\ref{thm:main} enters through the small-degree case of
Theorem~\ref{thm:away-centre}.}
\label{fig:proof-dependencies}
\end{figure}
\clearpage

The proofs repeatedly use the following standard binomial entropy
bounds (see Cover and Thomas~\cite[Example~11.1.3]{CoverThomas2006}):
\begin{equation}\label{eq:binomial-entropy}
 \frac{2^{MH_2(k/M)}}{M+1}
 \le\binom Mk
 \le\sum_{j=0}^k\binom Mj
 \le2^{MH_2(k/M)},
 \qquad 0\le k\le M/2,
\end{equation}
where $M\ge1$ and $k$ are integers.

\section{Evaluation vectors and counting}
\label{sec:evaluation}

\subsection{The evaluation matrix over two fields}

We use the standard polynomial evaluation matrix (see Baldi and
Vershynin~\cite[Section~3]{BaldiVershynin2019}).
Recall that $N=2^n$. For $S\subseteq[n]=\{1,2,\ldots,n\}$,
write $x_S=\prod_{i\in S}x_i$, with $x_\varnothing=1$.
Let $A$ be the $m\times N$ matrix whose rows are indexed by the
sets $S\subseteq[n]$ with $|S|\le d$ and whose columns are indexed
by the vertices of $V=\{0,1\}^n$. Its column at $x\in V$ is
\[
 A_x=(x_S)_{|S|\le d}.
\]
A real coefficient vector $b=(b_S)_{|S|\le d}$ determines the
polynomial
\[
 p_b(x)=\sum_{|S|\le d}b_Sx_S=b^{\mathrm{T}}A_x.
\]
Thus $A_x$ records the values of all the monomials at one vertex,
whereas $A^{\mathrm{T}}b$ gives the values of one polynomial at all
vertices. The entries of $A$ are all equal to $0$ or $1$. We can
therefore also regard $A$ as a matrix over the field
$\F_2=\{0,1\}$, in which addition and multiplication are taken
modulo $2$. For a binary coefficient vector $c\in\F_2^m$, the same
matrix gives the binary polynomial function
\[
 f_c(x)=\sum_{|S|\le d}c_Sx_S\pmod2,
 \qquad (f_c(x))_{x\in V}=A^{\mathrm{T}}c\quad\hbox{over }\F_2.
\]
Here the values of $f_c$ are in $\F_2$. The real polynomial $p_b$
and the binary polynomial $f_c$ belong to different spaces. What
connects them is the common integer matrix $A$. Whenever a rank or
span is taken below, we specify the field.

For $U\subseteq V$, write $A_U$ for the submatrix consisting of the
columns indexed by $U$. We will need the following elementary comparison, also noted by
Baldi and Vershynin~\cite[Section~4.2]{BaldiVershynin2019}:
\begin{equation}\label{eq:rank-fields}
 \rank_{\F_2}A_U\le\rank_{\R}A_U.
\end{equation}
Indeed, binary rank at least $r$ means that some $r\times r$ minor
has non-zero determinant modulo $2$. Its integer determinant is odd
and hence non-zero over $\R$. This proves \eqref{eq:rank-fields}.

The full matrix $A$ has row rank $m$ over either field. To see this,
restrict to the $m$ columns $x=\mathbf1_T$ with $|T|\le d$, where
$\mathbf1_T$ is the indicator vector of $T$. The entry in row $S$ and the column corresponding to $\mathbf1_T$
is $1$ precisely when $S\subseteq T$. Using the same ordering of
the subsets by increasing cardinality for the rows $S$ and columns
$\mathbf1_T$ gives an upper triangular matrix with diagonal entries $1$. In particular, different coefficient
vectors give different polynomial functions on $V$, over either field.

\subsection{Sets that determine their spans}

We obtain a lower bound for $T_{n,d}$ by counting distinct
intersections of the hyperplanes. This suffices because the number
of chambers is at least the number of such intersections.
We follow the construction of Baldi and
Vershynin~\cite[Section~4.4]{BaldiVershynin2019}, using their
terminology of good sets.
Let $q_r$ be the probability that the columns of $A$ indexed by a
uniformly chosen $r$-element subset of $V$ are linearly dependent
over $\F_2$. Call a set $U\subseteq V$ \emph{good} if the columns
of $A_U$ are linearly independent over $\F_2$ and no column indexed
outside $U$ lies in their span over $\F_2$. For $1\le k<m$,
a $k$-element set $U$ is good exactly when the columns of
$A_{U\cup\{x\}}$ are linearly independent over $\F_2$ for every
$x\notin U$. For a uniformly chosen $k$-element
set $U$, let $Z$ count the vertices $x\notin U$ for which
$A_{U\cup\{x\}}$ has dependent columns over $\F_2$.
Then $U$ is good precisely when $Z=0$, so the first-moment bound gives
\[
 \Pr(U\text{ is not good})=\Pr(Z\ge1)\le\E Z.
\]
Choose $x$ uniformly from the $N-k$ vertices outside $U$.
Every $(k+1)$-element set arises from exactly $k+1$ equally likely
pairs $(U,x)$, so $U\cup\{x\}$ is uniformly distributed among
these sets. Consequently, $\E Z=(N-k)q_{k+1}$, and
\begin{equation}\label{eq:good-probability}
 \Pr(U\text{ is good})\ge1-(N-k)q_{k+1}.
\end{equation}

For any good $U$, put $k=|U|$. The columns of $A_U$ are independent over both
$\F_2$ and $\R$. Moreover, for each $x\notin U$,
\[
 k+1=\rank_{\F_2}A_{U\cup\{x\}}
 \le\rank_{\R}A_{U\cup\{x\}}\le k+1.
\]
Adding any column indexed outside $U$ increases the real rank from
$k$ to $k+1$. The real span of the columns of $A_U$ therefore
contains the evaluation columns indexed by $U$ and no other
evaluation columns. For $x\in V$, let $H_x=A_x^\perp$ be the hyperplane in $\R^m$
with normal vector $A_x$.
Different good sets give different real spans and
therefore different intersections
\[
 \bigcap_{x\in U}H_x
 =\bigl(\Span_{\R}\{A_x:x\in U\}\bigr)^\perp,
\]
since taking the orthogonal complement twice recovers the original
subspace.

We need a lower bound on chambers from these intersections. For a
finite central arrangement $\mathcal H$, let $r(\mathcal H)$ be
its number of chambers and $i(\mathcal H)$ the number of distinct
intersections of its hyperplanes. Include the ambient space as the
intersection of the empty family. Then
\begin{equation}\label{eq:regions-flats}
 r(\mathcal H)\ge i(\mathcal H).
\end{equation}
This inequality is due to Zuev~\cite[Section~3]{Zuev1992} (see also
Baldi and Vershynin~\cite[Lemma~2.2(2)]{BaldiVershynin2019}).

We sketch Zuev's induction on the number of hyperplanes. The empty
arrangement has one chamber and one intersection. Assume the
inequality holds for every arrangement of at most $k$ hyperplanes,
in every dimension. Let $\mathcal H$ have $k+1$ hyperplanes,
choose $H\in\mathcal H$, put $\mathcal H'=\mathcal H\setminus\{H\}$,
and let $\mathcal K$ be the arrangement in $H$ formed by the
distinct hyperplanes $H\cap L$ with $L\in\mathcal H'$.
Its chambers and intersections are taken within $H$, with $H$
itself as the empty intersection. A chamber of $\mathcal H'$
that meets $H$ is cut by $H$ into two chambers of $\mathcal H$,
and its intersection with $H$ is a chamber of $\mathcal K$.
This gives a bijection between such chambers and the chambers
of $\mathcal K$, so $r(\mathcal H)=r(\mathcal H')+r(\mathcal K)$.
Every intersection of $\mathcal H$ that is not an intersection
of $\mathcal H'$ involves $H$, and is therefore an intersection
of members of $\mathcal K$. Hence
$i(\mathcal H)\le i(\mathcal H')+i(\mathcal K)$.
Both $\mathcal H'$ and $\mathcal K$ have at most $k$ hyperplanes,
so induction gives
\[
r(\mathcal H)=r(\mathcal H')+r(\mathcal K)
\ge i(\mathcal H')+i(\mathcal K)\ge i(\mathcal H).
\]

Different good sets give distinct intersections of the hyperplanes.
Hence
\begin{equation}\label{eq:good-sets-count}
 T_{n,d}\ge\#\{U\subseteq V:U\text{ is good}\}.
\end{equation}
In particular, restricting to sets of size $k$ and using
\eqref{eq:good-probability} gives
\begin{equation}\label{eq:enumeration-transfer}
 T_{n,d}\ge\bigl(1-(N-k)q_{k+1}\bigr)\binom Nk.
\end{equation}
For Theorem~\ref{thm:main}, it therefore remains to bound $q_{k+1}$
for suitable $k$ close to $m$. Because the error is multiplied by
$N-k$, this argument needs dependence probabilities on the scale
$1/N$. The next section obtains these estimates using binary polynomials.
For the other degree ranges, we first count independent sets and
then apply the following deletion argument.

\subsection{From independent sets to good sets}

For Theorems~\ref{thm:away-centre} and~\ref{thm:near-central},
we will first count independent sets of evaluation columns.
An independent set need not be good: its span may contain further
columns of the matrix. We show that deleting a controlled number
of columns produces a good set, and bound how many independent
sets can produce the same good set. Thus a lower bound on the
number of independent sets gives a lower bound on the number of
good sets and, by \eqref{eq:good-sets-count}, on $T_{n,d}$.

The evaluation words of binary polynomials of degree at most $d$ form the Reed--Muller code
$\RM(n,d)$, the row space of $A$ over $\F_2$.
For $d<n$, duality and the minimum-weight formula
\cite[Section~4.1]{Carlet2021} give
\[
 \ker_{\F_2}A=\RM(n,n-d-1),
\]
and every non-zero word of this dual code has at least $2^{d+1}$
non-zero coordinates. Thus every dependent set of evaluation columns
has at least $2^{d+1}$ elements.

The following deletion argument applies to any binary matrix with
distinct non-zero columns. As above, a good set has independent columns and its span contains
no further column of the matrix.

Each extra column in the span of an independent set has a unique expression as a sum of columns from that set. If every dependent set has at least $g$ elements, this expression uses at least $g-1$ columns. Deleting any one of them removes the extra column from the span. The proof chooses deletions at random and shows that a suitable choice breaks all these representations simultaneously.

\begin{lemma}[Deleting columns to obtain a good set]
\label{lem:central-deletion}
Suppose a binary matrix has $N$ distinct non-zero columns and every
set of columns dependent over $\F_2$ has at least $g\ge2$ elements. Let $J_r$ count its
independent $r$-element sets, where $1\le r\le N$, and put
\[
 \ell=\min\left\{r,\left\lceil\frac{r\ln(N+1)}{g-1}\right\rceil\right\}.
\]
Then the number of good sets of size $r-\ell$ is at least
\[
 \frac{J_r}{\binom{N-r+\ell}{\ell}}.
\]
\end{lemma}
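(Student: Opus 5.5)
We will use a double-counting argument: every independent $r$-set gives rise to a good subset of size $r-\ell$, and no good set arises from too many independent $r$-sets. Fix an independent $r$-element set $I$. Let $X$ be the set of columns outside $I$ lying in $\Span_{\F_2}$ of the columns of $I$. For each $y\in X$, there is a unique representation $y=\sum_{x\in R_y}x$ with $R_y\subseteq I$. Since $R_y\cup\{y\}$ is a dependent set, it has at least $g$ elements, so $|R_y|\ge g-1$.

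We claim that if $D\subseteq I$ meets every $R_y$, then $I\setminus D$ is good. Its columns are clearly independent. Suppose some column $z\notin I\setminus D$ lies in the span of $I\setminus D$. If $z\in D$, this contradicts the independence of $I$. Otherwise $z\in X$, and by uniqueness its representation $R_z$ lies inside $I\setminus D$, which contradicts $D\cap R_z\ne\varnothing$.

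So it suffices to find a hitting set $D$ of size exactly $\ell$. If $\ell=r$, take $D=I$: the empty set is good because the columns are non-zero. Otherwise, choose $D$ uniformly among the $\ell$-subsets of $I$. For each $y\in X$,
\[
 \Pr(D\cap R_y=\varnothing)=\binom{r-|R_y|}{\ell}\Big/\binom r\ell\le\Bigl(1-\frac{g-1}{r}\Bigr)^{\ell}<e^{-\ell(g-1)/r}\le\frac1{N+1}.
\]
The last inequality uses $\ell\ge r\ln(N+1)/(g-1)$. Since $|X|\le N$, a union bound shows that some $D$ hits all the $R_y$. This gives a map $\phi$ sending each independent $r$-set $I$ to a good set $\phi(I)=I\setminus D$ of size $r-\ell$.

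It remains to bound the fibres of $\phi$. Since $\phi(I)=G$ implies $G\subseteq I$, the set $I$ is $G$ together with $\ell$ columns chosen from the $N-(r-\ell)$ columns outside $G$. Each fibre therefore has at most $\binom{N-r+\ell}{\ell}$ elements, and so the number of good $(r-\ell)$-sets is at least $J_r/\binom{N-r+\ell}{\ell}$.

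The only delicate step is the probabilistic estimate. The claim that $\binom{r-s}{\ell}/\binom r\ell\le(1-s/r)^\ell$ follows termwise from $(r-s-i)/(r-i)\le(r-s)/r$. The strict inequality $1-t<e^{-t}$ handles the boundary case.
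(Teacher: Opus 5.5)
Your proof is correct and follows the paper's argument. It uses the unique representations $R_y$ of size at least $g-1$, finds a random hitting set of size $\ell$ by a union bound against $1/(N+1)$, and bounds the fibres by $\binom{N-r+\ell}{\ell}$. The only difference is that you sample $D$ uniformly from the $\ell$-subsets of $I$, whereas the paper samples $\ell$ elements with replacement and then pads $H$ to size $\ell$; both routes give the same estimate $(1-(g-1)/r)^\ell\le e^{-\ell(g-1)/r}$.
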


\begin{proof}
Fix an independent set $S$ of size $r$. Each column $A_x$ outside $S$
that lies in its span has a unique expression
\[
 A_x=\sum_{u\in F_x}A_u,\qquad F_x\subseteq S.
\]
The columns indexed by $F_x\cup\{x\}$ are linearly dependent,
so the hypothesis gives $|F_x|+1\ge g$, and hence $|F_x|\ge g-1$.
There are at most $N-r\le N$ such extra columns $A_x$, and hence
at most $N$ sets $F_x$.

We will construct a set $H\subseteq S$ of at most $\ell$ columns
that meets every $F_x$. Deleting these columns will break the
representation of each extra column $A_x$. We will then check
that the remaining set is good.
If $\ell=r$, take $H=S$. Otherwise, choose $\ell$ elements
of $S$ independently and uniformly, allowing repetitions.
For each $F_x$, the probability that none of its members is
chosen is at most
\[
 (1-|F_x|/r)^\ell\le e^{-\ell(g-1)/r}\le\frac1{N+1}.
\]
By the union bound, the probability that at least one set $F_x$
is missed is at most
\[
 \frac{N}{N+1}<1.
\]
Thus some choice meets every $F_x$. Taking $H$ to be the set of
distinct columns chosen gives $|H|\le\ell$.
Enlarge $H$ to size $\ell$ if necessary.
The set $S\setminus H$ is good: independence excludes the deleted
columns from its span. Columns outside the span of $S$ remain outside. Because $H$
meets every $F_x$, the unique representation of each extra
column $A_x$ uses a deleted column. Hence none of these extra
columns lies in the span of $S\setminus H$. The empty set is also good, since no column
is zero.

Choose one such good subset from each independent $S$. A fixed good
set of size $r-\ell$ is contained in at most
$\binom{N-r+\ell}{\ell}$ sets of size $r$. Counting these preimages
proves the bound.
\end{proof}

For the evaluation matrix with $d<n$, write $J_r$ for the number of
independent $r$-element sets of columns over $\F_2$, and take
$g=2^{d+1}$. The geometric reduction
\eqref{eq:good-sets-count} and Lemma~\ref{lem:central-deletion} give,
whenever $J_r>0$,
\begin{equation}\label{eq:central-deletion-count}
 \log_2T_{n,d}\ge\log_2J_r-\ell\log_2N
 \ge\log_2J_r-O\left(\frac{rn^2}{2^d}+n\right).
\end{equation}

\subsection{Estimates for the arrangement upper bound}
\label{sec:cover-estimates}

The three counting arguments produce different lower bounds for
$T_{n,d}$. We collect the estimates for $Q_{n,d}$ with which they
will be compared. Part~(i) retains the precision needed for the
additive bound in Theorem~\ref{thm:main}. Parts~(ii) and~(iii)
give the leading logarithm in the other degree ranges.

\begin{lemma}\label{lem:cover-estimates}
Let $N=2^n$ and $m=m_{n,d}$, where $1\le d\le n$.
\begin{enumerate}
\item[\textup{(i)}] If $m^2/N\to0$, then
\[
 Q_{n,d}=2(1+o(1))\binom N{m-1}.
\]
\item[\textup{(ii)}] If $m\le N/2$, then
\[
 \log_2Q_{n,d}=m\log_2(N/m)+O(m+n),
 \qquad
 \log_2Q_{n,d}\ge1+(m-1)\log_2(N/m).
\]
\item[\textup{(iii)}] If $1\le d<n/2$, put
$\alpha=m/N$, $a=\ln(1/\alpha)$, and $\Lambda=NH_2(\alpha)$.
Then
\[
 \frac{ma}{\ln2}\le\Lambda\le\frac{m(a+1)}{\ln2},
 \qquad \log_2Q_{n,d}\le\Lambda+1,
 \qquad \log_2Q_{n,d}=\Lambda+O(n).
\]
For all sufficiently large $n$, uniformly over these degrees,
$\log_2Q_{n,d}\ge\Lambda/2$.
\end{enumerate}
\end{lemma}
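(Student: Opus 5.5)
The three parts are elementary estimates on the binomial sum $Q_{n,d}=2\sum_{i<m}\binom{N-1}{i}$. They rest on \eqref{eq:binomial-entropy}, on ratios of consecutive binomial coefficients, and on the relation $\binom{N-1}{i}=\frac{N-i}{N}\binom Ni$. I would treat each part separately.

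For (i), the ratio of consecutive terms is
\[
\binom{N-1}{i-1}\Big/\binom{N-1}{i}=\frac{i}{N-i}\le\frac{m}{N-m}.
\]
Since $m^2/N\to0$, this ratio tends to $0$. The sum is therefore dominated by its last term, with $\sum_{i<m}\binom{N-1}{i}=(1+O(m/N))\binom{N-1}{m-1}$. We also have
\[
\binom{N-1}{m-1}=\frac{N-m+1}{N}\binom N{m-1}=(1+O(m/N))\binom N{m-1}.
\]
Together these give (i), since $m/N\le m^2/N\to0$. (Only $m/N\to0$ is really needed, but the stated hypothesis suffices.)

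For (ii), the upper bound follows from \eqref{eq:binomial-entropy} with $M=N-1$ and $k=m-1\le M/2$. This gives $\log_2Q_{n,d}\le1+(N-1)H_2((m-1)/(N-1))$. We then use $NH_2(m/N)\le m\log_2(N/m)+m\log_2e$, which holds because $-(1-u)\log_2(1-u)\le u\log_2e$. Monotonicity of $H_2$ on $[0,1/2]$ handles the shift from $m-1$ to $m$, at a cost of $O(n)$. For the lower bound, I keep only the last term. Writing
\[
\binom{N-1}{m-1}=\prod_{j=1}^{m-1}\frac{N-j}{j}
\]
and using $(N-j)/j\ge N/m$ for $j\le m-1$ (because $(N-j)m\ge Nj$ when $j\le m$) gives $\binom{N-1}{m-1}\ge (N/m)^{m-1}$. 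Multiplying by the factor $2$ yields $\log_2Q_{n,d}\ge1+(m-1)\log_2(N/m)$. Combining the two bounds gives the $O(m+n)$ error term.

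For (iii), the hypothesis $d<n/2$ gives $m<N/2$, so $\alpha<1/2$. Write $\Lambda/N=(\alpha a+(1-\alpha)\ln\frac1{1-\alpha})/\ln2$. The inequality $(1-\alpha)\ln\frac1{1-\alpha}\in[0,\alpha]$ gives $ma/\ln2\le\Lambda\le m(a+1)/\ln2$. For the upper bound on $Q$, apply \eqref{eq:binomial-entropy} with $M=N-1$ and $k=m-1$. Then check that $(N-1)H_2((m-1)/(N-1))\le NH_2(m/N)$. This holds because $t\mapsto tH_2(s/t)$ is increasing in $t$ and $H_2$ is increasing below $1/2$ (or one can directly compare $\sum_{i<m}\binom{N-1}i\le\sum_{i\le m}\binom Ni$). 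This gives $\log_2Q_{n,d}\le\Lambda+1$.

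For the matching lower bound, start from the middle inequality of \eqref{eq:binomial-entropy}, which gives $\binom{N-1}{m-1}\ge 2^{(N-1)H_2((m-1)/(N-1))}/N$. Then bound the change from $(N-1,m-1)$ to $(N,m)$. This change in $tH_2(s/t)$ equals $\log_2$ of roughly $(N-m)/m$ plus $O(1)$, hence is at most $a/\ln2+O(1)=O(n)$. The result is $\log_2Q_{n,d}\ge\Lambda-O(n)$.

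The final claim, $\log_2Q\ge\Lambda/2$, follows from $\Lambda\ge ma/\ln2$. For $d\ge1$ we have $m\ge n+1$, and $a\ge\ln2$ because $\alpha<1/2$. Hence $\Lambda\gtrsim n$, and the $O(n)$ error could threaten the inequality only when $\Lambda=O(n)$. In that small-$m$ regime I would instead use (ii): there $\log_2Q\ge(m-1)\log_2(N/m)$, which is at least $\Lambda/2$ because $\Lambda\le m(\log_2(N/m)+\log_2e)$ and $\log_2(N/m)\ge1$. Some care is needed for $m-1$ versus $m$ when $\log_2(N/m)$ is close to $1$. In that case $m$ is comparable to $N$, which is huge, so $\Lambda-O(n)\ge\Lambda/2$ holds directly.

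The main obstacle is this uniform bookkeeping in (iii): making the $O(n)$ shift estimate and the $\Lambda/2$ bound hold uniformly across both extremes, $d=1$ and $d$ just below $n/2$.
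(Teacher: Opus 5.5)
Your lower bound in (ii) contains a false step. You write $\binom{N-1}{m-1}=\prod_{j=1}^{m-1}\frac{N-j}{j}$ and bound every factor below by $N/m$. But $(N-j)m\ge Nj$ is equivalent to $j\le Nm/(N+m)=m-m^2/(N+m)$, so the bound fails for $j=m-1$ as soon as $m^2>N+m$. For example, at $m=N/2$ the last factor is $(N/2+1)/(N/2-1)\approx1$, not at least $2$. Part (ii) is meant to hold for all $m\le N/2$, including $m$ exponentially larger than $\sqrt N$, so the termwise argument does not prove it.

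The inequality $\binom{N-1}{m-1}\ge(N/m)^{m-1}$ is nevertheless true. Pair the factors so that numerators and denominators decrease together: $\binom{N-1}{m-1}=\prod_{j=0}^{m-2}\frac{N-1-j}{m-1-j}$. Each factor is at least $\frac{N-1}{m-1}\ge\frac Nm$, because $m\le N$. This is the standard bound $\binom Mk\ge(M/k)^k$, which the paper invokes at this point.

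With this repair your proof is correct and uses the same ingredients as the paper. A few comparisons:

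In (i), your geometric-ratio argument is slightly sharper than the paper's estimate $\sum_{i\le m-2}\binom{N-1}{i}\le(m-1)\binom{N-1}{m-2}$, and indeed needs only $m/N\to0$.

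In (ii), the paper sandwiches $Q_{n,d}$ between $2\frac mN\binom Nm$ and $2m\frac mN\binom Nm$ and uses $(N/m)^m\le\binom Nm\le(eN/m)^m$ rather than the entropy bound. The two routes come to the same thing.

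For the last claim of (iii), your case split is valid but unnecessary. Since $m\ge n+1$ and $H_2$ is increasing on $[0,1/2]$, you get $\Lambda\ge NH_2((n+1)/N)\ge(n+1)\{n-\log_2(n+1)\}$. This is quadratic in $n$ and absorbs the $O(n)$ error at once.

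Finally, $d<n/2$ gives only $m\le N/2$, with equality for odd $n$ and $d=(n-1)/2$, not $m<N/2$. This is harmless, since you only use $a\ge\ln2$.
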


\begin{proof}
For part~(i), the binomial coefficients in $Q_{n,d}$ increase up
to the last term for large $n$, and
\[
 \sum_{i=0}^{m-2}\binom{N-1}{i}
 \le(m-1)\binom{N-1}{m-2}
 =\frac{(m-1)^2}{N-m+1}\binom{N-1}{m-1}
 =o\!\left(\binom{N-1}{m-1}\right).
\]
Also,
$\binom{N-1}{m-1}=(1-(m-1)/N)\binom N{m-1}$,
which proves the assertion.

For part~(ii), monotonicity of the summands and
$\binom{N-1}{m-1}=(m/N)\binom Nm$ give
\[
 2\frac mN\binom Nm\le Q_{n,d}
 \le2m\frac mN\binom Nm.
\]
Taking logarithms and using
$(N/m)^m\le\binom Nm\le(eN/m)^m$ proves the first estimate.
For the lower bound, use
$Q_{n,d}\ge2\binom{N-1}{m-1}\ge2(N/m)^{m-1}$.

For part~(iii), the definition of entropy and
$0\le-(1-\alpha)\ln(1-\alpha)\le\alpha$ give the bounds on
$\Lambda$. The binomial entropy estimates
\eqref{eq:binomial-entropy} give
\[
 \frac{2m}{N(N+1)}2^\Lambda
 \le2\frac mN\binom Nm\le Q_{n,d}
 \le2\sum_{i=0}^{m}\binom Ni\le2^{\Lambda+1}.
\]
These imply both estimates for $\log_2Q_{n,d}$.
Finally, $m\ge n+1$ and monotonicity of $H_2$ on $[0,1/2]$ give
\[
 \Lambda\ge NH_2\!\left(\frac{n+1}{N}\right)
 \ge(n+1)\{n-\log_2(n+1)\}.
\]
This grows at least quadratically in $n$, so the $O(n)$ error is
at most $\Lambda/2$ for all sufficiently large $n$, uniformly in $d$.
\end{proof}

The constants in the $O$-terms are absolute, and the estimate in
\textup{(i)} is uniform whenever $m^2/N\to0$ uniformly.

\section{Counts for small degrees}
\label{sec:binary-estimates}

In this section we prove Theorem~\ref{thm:main}. We first bound
the probability that a random set of evaluation columns is linearly
dependent over $\F_2$. We obtain this bound using estimates for the
weights of Reed--Muller codewords. Finally, we apply the geometric
reduction in Section~\ref{sec:evaluation} to count polynomial
threshold functions.

\subsection{Codewords, weights and bias}

Recall that the set of binary evaluation vectors
\[
 C:=\{A^{\mathrm{T}}c:c\in\F_2^m\}\subseteq\F_2^N
\]
is the binary Reed--Muller code $\RM(n,d)$ (see
Carlet~\cite[Section~4.1]{Carlet2021}). A \emph{codeword} is a vector in $\F_2^N$
obtained by evaluating one of these binary polynomials at all $N$ vertices. We identify a word $f\in C$ with the function
$x\mapsto f(x)$ giving its entries. The set $C$ is a linear subspace:
adding two words corresponds to adding their coefficient vectors over
$\F_2$. Its dimension over $\F_2$ is $m$, by the rank calculation
above, so it contains exactly $2^m$ words. Since each codeword has a unique coefficient vector, choosing a word
uniformly from $C$ is equivalent to choosing its $m$ coefficients
independently and uniformly from $\F_2$.

For a word $f$, its \emph{relative Hamming weight} is
\[
 \wt(f):=\frac{|\{x\in V:f(x)=1\}|}{N}.
\]
Thus weight here means the fraction of non-zero entries, rather than
the number of non-zero entries. We also use its \emph{bias}
\[
 B_f:=\frac1N\sum_{x\in V}(-1)^{f(x)}=1-2\wt(f).
\]
The bias is the fraction of zero values minus the fraction of one
values. A balanced word has bias $0$, and the two constant words have
bias $1$ and $-1$. Since the constant polynomial $1$ belongs to
$C$, the map $f\mapsto f+1$ is a bijection of $C$ that exchanges
zeros and ones. It takes $B_f$ to $-B_f$, so the bias of a uniformly chosen
word has a symmetric distribution.
For a uniformly chosen codeword $f$, write $X=|B_f|$ for its
absolute bias.

The coding estimates we use count words of small weight. To express
them in terms of bias, put
\[
 W_{n,d}(u):=|\{f\in C:\wt(f)\le u\}|.
\]
For $0<b\le1$, the event $B_f\ge b$ is precisely
$\wt(f)\le(1-b)/2$. Symmetry gives the same number of words with
$B_f\le-b$, and these two events are disjoint. Consequently
\begin{equation}\label{eq:weight-bias}
 \Pr_f(X\ge b)=\frac{2W_{n,d}((1-b)/2)}{2^m}.
\end{equation}
The subscript $f$ indicates that the probability is over a uniformly
chosen word of $C$. The magnitude $X$ has at most $N+1$ possible
values, since the number of ones in a word is an integer between $0$
and $N$.

The minimum non-zero relative weight of $\RM(n,d)$ is $2^{-d}$
(see Carlet~\cite[Section~4.1]{Carlet2021}). Consequently,
\begin{equation}\label{eq:minimum-weight}
 W_{n,d}(u)=1\qquad\hbox{if }0\le u<2^{-d},
\end{equation}
since only the zero word has weight below this threshold.

\subsection{The first moment and repeated vertices}
\label{sec:independence}

We begin with the first-moment calculation of
Bhandari et al.~\cite[Section~1.1]{BhandariEtAl2022}, expressed in
our notation.
We then refine this calculation to bound dependence for sets much
closer to full rank.
Let $x_1,\ldots,x_r$ be independent uniformly chosen vertices of $V$,
and let $\rho$ be the rank over $\F_2$ of the matrix with columns
$A_{x_1},\ldots,A_{x_r}$.
The kernel of this matrix has $2^{r-\rho}$ elements, each specifying
a subset of the sample positions whose columns sum to zero. Thus
$2^{r-\rho}-1$ counts the non-empty such subsets. It is zero when
the columns are independent and at least $1$ otherwise, so its
expectation bounds the probability of dependence. Equivalently,
this is the union bound over non-empty zero-sum subsets. We
calculate the expectation by counting the codewords that vanish
at the sampled vertices.

Vanishing at all sampled vertices means that the coefficient vector
lies in the kernel of the transpose of this matrix. By rank--nullity,
this kernel has dimension $m-\rho$, giving exactly $2^{m-\rho}$ such
words.
Choose a codeword $f$ uniformly from $C$, independently of the sampled
vertices. We calculate the probability that $f$ vanishes at all sampled
vertices in two ways. For a fixed sample, this probability is
$2^{m-\rho}/2^m=2^{-\rho}$. For a fixed word $f$, a uniformly chosen
vertex is a zero of $f$ with probability $1-\wt(f)=(1+B_f)/2$.
Since the vertices are chosen independently, the probability that $f$
vanishes at all of them is $((1+B_f)/2)^r$. Consequently,
\[
 \E_{\mathrm{sample}}2^{-\rho}
 =\E_f\left[\left(\frac{1+B_f}{2}\right)^r\right],
\]
where the expectation on the left is over the sampled vertices and
that on the right is over the uniformly chosen codeword.
Multiplying by $2^r$ gives
\begin{equation}\label{eq:vanishing-count}
 \E_{\mathrm{sample}}2^{r-\rho}=\E_f(1+B_f)^r.
\end{equation}

\begin{lemma}\label{lem:conditioning}
Suppose $d\ge1$, $1\le r\le m$, and $M:=\binom r2<N$.
For $0<\delta<1$ with $r\delta\le1$, put
\[
 L_r=\E_f\bigl[(1+X)^r\ind_{\{\delta\le X<1\}}\bigr],
 \qquad X=|B_f|.
\]
Then
\[
 q_r\le
 \frac{2^{r-m}+r^4\delta^2/N+\binom M2/N^2+L_r}{1-M/N}.
\]
\end{lemma}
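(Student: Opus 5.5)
The plan is to replace the uniform $r$-subset by $r$ independent uniform vertices, and then condition on their being distinct. Let $x_1,\ldots,x_r$ be independent and uniform on $V$, and let $D$ be the event that they are distinct. By the union bound over the $M=\binom r2$ pairs, $\Pr(D)\ge1-M/N>0$. Conditioned on $D$, the set $\{x_1,\ldots,x_r\}$ is a uniform $r$-subset. Since $2^{r-\rho}-1\ge\ind_{\{\text{dependent}\}}$, this gives
\[
 q_r\le\frac{\E\bigl[(2^{r-\rho}-1)\ind_D\bigr]}{1-M/N}.
\]
It therefore suffices to show that the numerator is at most $2^{r-m}+r^4\delta^2/N+\binom M2/N^2+L_r$.

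\emph{Step 1: removing the repeated-vertex contribution.} Write
\[
 \E[(2^{r-\rho}-1)\ind_D]=\E[2^{r-\rho}]-1-\E[(2^{r-\rho}-1)\ind_{D^c}].
\]
Let $P$ be the number of coincident pairs $i<j$ with $x_i=x_j$. Each such pair gives a kernel vector $e_i+e_j$. If $k$ is the dimension of the span of these vectors, then $2^{r-\rho}-1\ge2^k-1\ge k\ge P-P_2$. Here $P_2$ counts unordered pairs of coincident pairs; a spanning forest argument on the coincidence graph gives $k\ge P-(\text{cycle rank})\ge P-P_2$. Both sides vanish on $D$, so
\[
 \E[(2^{r-\rho}-1)\ind_{D^c}]\ge\E P-\E P_2\ge\frac MN-\binom M2\frac1{N^2}.
\]
This uses that two distinct pairs, whether disjoint or sharing an index, are simultaneously coincident with probability exactly $N^{-2}$.

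\emph{Step 2: splitting $\E_f(1+B_f)^r$ by bias.} By \eqref{eq:vanishing-count}, $\E[2^{r-\rho}]=\E_f(1+B_f)^r$. I split this expectation into three parts.
\begin{itemize}
\item The zero word contributes $2^{-m}2^r$.
\item The word $f\equiv1$ has $B_f=-1$ and contributes $0$.
\item Words with $\delta\le X<1$ contribute at most $\E_f[(1+X)^r\ind_{\{\delta\le X<1\}}]=L_r$, since $1+B_f\le1+X$.
\end{itemize}
For words with $X<\delta$, use the symmetry $f\mapsto f+1$ to replace $(1+B)^r$ by its even part:
\[
 \frac{(1+B)^r+(1-B)^r}2=1+\binom r2B^2+\sum_{k\ge2}\binom r{2k}B^{2k}.
\]
The symmetry preserves the event $X<\delta$, so this replacement is legitimate. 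When $|B|<\delta$ and $r\delta\le1$, the tail is at most
\[
 B^2\sum_{k\ge2}\frac{r^{2k}\delta^{2k-2}}{(2k)!}\le B^2r^4\delta^2\sum_{k\ge2}\frac1{(2k)!}\le r^4\delta^2B^2.
\]

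\emph{Step 3: the second moment of the bias.} Since the columns $A_x$ are distinct, $\E_f(-1)^{f(x)+f(y)}=\ind_{\{x=y\}}$. Consequently $\E_fB_f^2=1/N$, and hence $\E_f[B_f^2\ind_{\{X<\delta\}}]\le1/N$. The mass of words with $X<\delta$ is at most $1$. So the small-bias words contribute at most $1+M/N+r^4\delta^2/N$.

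\emph{Step 4: combining.} Adding the three contributions from Step~2, the small-bias bound from Step~3, and Step~1:
\begin{itemize}
\item the $1$ cancels the $-1$;
\item the term $M/N$ cancels $-\E P$;
\item what remains is $2^{r-m}+L_r+r^4\delta^2/N+\binom M2/N^2$, as required.
\end{itemize}

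The main obstacle is Step~1: getting the lower bound $2^{r-\rho}-1\ge P-P_2$ on $D^c$, and checking that the second-order correction is at most $\binom M2/N^2$. This bound is essential because it is exactly what cancels the $M/N$ term arising from $\binom r2\E_fB_f^2$. I would try the spanning-forest argument on the coincidence graph first.
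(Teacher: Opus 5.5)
Your proof is correct and follows the paper's argument. You sample with replacement, bound $\E_f(1+B_f)^r$ by splitting on the bias (even part for $X<\delta$, $\E_fB_f^2=1/N$, the two constant words, and $L_r$), subtract the repeated-vertex contribution, and divide by $\Pr(D)\ge1-M/N$. The only difference is Step~1: the paper simply uses $2^{r-\rho}-1\ge1$ on $D^c$ with the Bonferroni bound $\Pr(D^c)\ge M/N-\binom M2/N^2$, while your spanning-forest bound gives the same estimate with more work than needed, since $P-P_2=P-\binom P2\le\ind_{\{P\ge1\}}$ already suffices.
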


\begin{proof}
For the sample $x_1,\ldots,x_r$ drawn with replacement, define the
random variable $Y=2^{r-\rho}$.
By \eqref{eq:vanishing-count},
\[
\E_{\mathrm{sample}}Y=\E_f(1+B_f)^r
=\frac1{2^m}\sum_{f\in C}(1+B_f)^r.
\]
Each codeword is chosen with probability $2^{-m}$. We bound this expectation
by considering separately the codewords with $|B_f|<\delta$,
$\delta\le|B_f|<1$, and $|B_f|=1$. For
$|b|<\delta$, the binomial expansion gives
\[
 \frac{(1+b)^r+(1-b)^r}{2}
 \le1+Mb^2+r^4b^4.
\]
For the remainder, use $r|b|\le1$ and
$\sum_{j\ge2}1/(2j)!<1$.
We next bound the contribution to $\E_f(1+B_f)^r$ from codewords
with $|B_f|<\delta$. Complementing a codeword changes $B_f$ to
$-B_f$ and preserves $|B_f|$. Hence, by the preceding inequality,
\[
\begin{aligned}
\E_f\bigl[(1+B_f)^r\ind_{\{|B_f|<\delta\}}\bigr]
&=\E_f\left[\frac{(1+B_f)^r+(1-B_f)^r}{2}
             \ind_{\{|B_f|<\delta\}}\right]\\
&\le\E_f\bigl[(1+MB_f^2+r^4B_f^4)
               \ind_{\{|B_f|<\delta\}}\bigr]\\
&\le1+(M+r^4\delta^2)\E_f B_f^2\\
&=1+\frac MN+\frac{r^4\delta^2}{N}.
\end{aligned}
\]
For the second inequality, we use $B_f^4\le\delta^2B_f^2$
when $|B_f|<\delta$, then drop the indicator from the resulting
non-negative expression. The last equality uses
\[
 \E_f B_f^2
 =\frac1{N^2}\sum_{x,y\in V}\E_f(-1)^{f(x)+f(y)}
 =\frac1N
\]
(see also Luh~\cite[Section~6.2, p.~11]{Luh2026}). Indeed, the
$N$ diagonal terms are $1$. If $x\ne y$, choose a coordinate
$i$ on which they differ. Since $d\ge1$, the map $f\mapsto f+x_i$
is a bijection of $C$ and reverses the sign of
$(-1)^{f(x)+f(y)}$. Every off-diagonal expectation is therefore
zero.
In $\E_f(1+B_f)^r$, the zero word has $B_f=1$ and contributes
$2^{-m}2^r=2^{r-m}$. The constant-one word has $B_f=-1$ and
contributes zero. For codewords with $\delta\le X<1$, we have
$(1+B_f)^r\le(1+X)^r$, so their total contribution to the
expectation is at most $L_r$. Combining these estimates and
subtracting $1$ gives
\begin{equation}\label{eq:iid-first-moment}
 \E Y-1\le2^{r-m}+\frac MN+\frac{r^4\delta^2}{N}+L_r.
\end{equation}

We now pass from sampling with replacement to a uniformly chosen
set of $r$ distinct vertices. Samples containing repetitions have
dependent columns, and we will subtract a lower bound for their
contribution to $\E(Y-1)$. For $1\le i<j\le r$, let
$E_{ij}=\{x_i=x_j\}$. This is a pair-collision event, meaning
that the same vertex is selected at positions $i$ and $j$.
Let $E=\bigcup_{i<j}E_{ij}$, the event that the sample contains
a repeated vertex. There are $M=\binom r2$ such pair-collision
events, each of probability $N^{-1}$. The
intersection of any two distinct such events has probability
$N^{-2}$, whether or not the pairs share an index. The first two Bonferroni inequalities therefore give
\[
 \frac MN-\frac{\binom M2}{N^2}\le\Pr(E)\le\frac MN.
\]
On $E$ the columns are dependent, so $Y-1\ge1$. Conditional on the complementary event $E^c$, in which all sampled
vertices are distinct, the unordered sample is a uniformly chosen
$r$-element set.
Recall that $q_r$ is the probability that the evaluation columns
indexed by a uniformly chosen $r$-element subset of $V$ are
linearly dependent over $\F_2$. Since $q_r$ is the conditional probability of dependence given
$E^c$, $q_r\Pr(E^c)$ is the probability that the sample has
distinct vertices and linearly dependent evaluation columns. On this event,
$(Y-1)\ind_{E^c}\ge1$, and this random variable is non-negative
elsewhere. Therefore
\[
\begin{split}
q_r\Pr(E^c)
&\le\E\bigl[(Y-1)\ind_{E^c}\bigr]\\
&=\E(Y-1)-\E\bigl[(Y-1)\ind_E\bigr]\\
&\le\E(Y-1)-\Pr(E)\\
&\le2^{r-m}+\frac{r^4\delta^2}{N}
                  +\frac{\binom M2}{N^2}+L_r.
\end{split}
\]
The equality subtracts the contribution from samples containing
repetitions. This contribution is at least $\Pr(E)$ because
$Y-1\ge1$ on $E$.
Dividing by $\Pr(E^c)\ge1-M/N$ proves the result.
\end{proof}

\subsection{The contribution of larger biases}
\label{sec:large-bias}

We now bound $L_r$. The proof adapts the argument of Bhandari
et al.~\cite[Claim~3.6 and its proof]{BhandariEtAl2022}, using
the weight estimates of Sberlo and
Shpilka~\cite{SberloShpilka2020}.
Throughout this subsection, let $c=10^{-4}$, assume $2\le d\le cn$,
and write $W=W_{n,d}$. Put
\[
\gamma=d/n,\qquad L=\lceil3\log_2m\rceil.
\]
A larger bias magnitude $X$ gives a larger factor $(1+X)^r$
in the expectation defining $L_r$. It is not enough merely to
show that such words are rare. Their rarity must compensate for
this factor so that the expectation is small.
We first state the estimates needed to bound $L_r$.

\begin{lemma}\label{lem:weight-estimates}
For all sufficiently large $n$,
\begin{equation}\label{eq:low-weight-input}
\log_2W(2^{-j})\le\frac14m2^{-j}\qquad(j\ge2).
\end{equation}
Moreover,
\begin{equation}\label{eq:auxiliary-bias-bounds}
\begin{gathered}
 \log_2m<0.0015n,\qquad L<0.005n,\\
 e^{-24\gamma L}\ge2^{-d},
 \qquad \frac{m2^{-d}}n\longrightarrow\infty,
\end{gathered}
\end{equation}
and
\begin{equation}\label{eq:bias-tail}
\Pr_f(X\ge2^{-\ell})
\le2^{-(4/5)m e^{-24\gamma\ell}}
\qquad(1\le\ell\le L).
\end{equation}
The inequalities and the limit are uniform over $2\le d\le cn$.
\end{lemma}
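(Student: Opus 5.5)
The plan is to treat the three groups of assertions separately. The first, \eqref{eq:low-weight-input}, is the analytic input; the others are elementary estimates on $m$, $L$ and $\gamma$, and \eqref{eq:bias-tail} follows by combining \eqref{eq:low-weight-input} with \eqref{eq:weight-bias}.

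\textbf{The auxiliary bounds \eqref{eq:auxiliary-bias-bounds}.} Use $m\le(en/d)^d$, so that
\[
 \log_2m\le n\gamma\log_2(e/\gamma).
\]
The function $\gamma\log_2(e/\gamma)$ is increasing on $(0,1]$, and at $\gamma=10^{-4}$ it equals about $10^{-4}(1.44+13.3)<0.0015$. This gives $\log_2m<0.0015n$, and then
\[
 L\le3\log_2m+1<0.0045n+1<0.005n
\]
for large $n$. For $e^{-24\gamma L}\ge2^{-d}$ it suffices that $24\gamma L\le d\ln2$, that is, $L\le n\ln2/24\approx0.029n$, which holds. For $m2^{-d}/n\to\infty$, use $m\ge\binom nd\ge(n/d)^d$. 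This gives $m2^{-d}\ge(n/(2d))^d\ge(5000)^d$. For $d\ge2$ with $d\le cn$, I would check that this beats $n$: when $d\ge\log_2 n$ it is immediate, and for small $d$ use $\binom nd2^{-d}\ge(n/2d)^d\ge n^2/(16d^2)$ with $d^2=o(n)$. Some care is needed in the middle range, where I would use $(n/2d)^d\ge\min\{n^{d/2},5000^d\}$, splitting at $d\le\sqrt n$.

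\textbf{The weight input \eqref{eq:low-weight-input}.} Here I would invoke the Sberlo--Shpilka weight-distribution bound for $\RM(n,d)$ in the form
\[
 \log_2W(2^{-j})\le O\bigl(j^{\,c'}\bigr)\,2^{-j}\binom{n-j}{\le d-j}\ \ldots,
\]
more precisely their Theorem~1.2, which bounds the number of codewords of relative weight at most $2^{-j}$ by roughly $2^{O(\cdots)}\cdot 2^{(1-\gamma)^{\ldots}\ldots}$. Comparing with $m$, and using $d\le cn$ with $c$ tiny, the exponent can be made at most $m2^{-j}/4$. This is the main obstacle. The constant $1/4$ and the small-$\gamma$ regime must be extracted from their explicit form, probably by splitting $j\le d$ (their main theorem) from $j>d$. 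For $j>d$, \eqref{eq:minimum-weight} gives $W=1$ and the claim is trivial.

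\textbf{The tail bound \eqref{eq:bias-tail}.} The event $X\ge2^{-\ell}$ corresponds to weight at most $(1-2^{-\ell})/2$. By \eqref{eq:weight-bias},
\[
 \Pr_f(X\ge2^{-\ell})=2W\bigl((1-2^{-\ell})/2\bigr)/2^m.
\]
So I need a bound on the number of codewords of weight close to $1/2$, showing $\log_2W\le m-(4/5+o(1))me^{-24\gamma\ell}$. I would obtain this from the same Sberlo--Shpilka theorem in its bias form, which bounds the number of words with bias at least $\varepsilon$ by $2^{m-\Omega(\ldots)}$ with a factor like $(1-\gamma)^{O(\log(1/\varepsilon))}\ge e^{-24\gamma\ell}$. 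The final step is to absorb the factor $2$ using $me^{-24\gamma L}\ge m2^{-d}\gg n$, which is exactly where the auxiliary bounds are used. Uniformity in $d$ follows because every estimate depends only on $\gamma\le c$ and $n$.
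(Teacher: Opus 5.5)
Your bounds in \eqref{eq:auxiliary-bias-bounds} are fine. For $m2^{-d}/n\to\infty$, splitting at $d\approx\log_2n$ is already enough, and no middle range needs separate care. Below the split, $(n/2d)^d\ge n^2/(4\log_2^2n)$; above it, $5000^d\ge n^{12}$. The gap is elsewhere. The two substantive estimates are asserted rather than derived, and the tool you name for \eqref{eq:low-weight-input} fails in part of the range. Theorem~1.2 of Sberlo and Shpilka, in the form quoted in the proof of Lemma~\ref{lem:rank-retention}, reads $\log_2W(2^{-\ell})\le Cn^4+17\{\cdots\}\gamma^{\ell-1}m$. For $d=2$ or $3$ we have $m2^{-d}=O(n^3)$, so at $j=d$ the additive $Cn^4$ alone exceeds the target $m2^{-d}/4$. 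No comparison with $m$ repairs this. What is needed is the sharper intermediate bound from their Corollary~3.6 with Proposition~3.7:
\[
 \log_2W(2^{-j})\le17n\sum_{i=j}^d(i-1)(i+2)+17\sum_{i=j}^d(i+2)m_{n-i+1,d-i+1}.
\]
Its first term is only $O(nd^3)$. You then show $m_{n-k,d-k}\le\gamma^km$ and sum to get $17(2j+3)\gamma^{j-1}m\le m2^{-j}/8$, using $476\gamma<1/8$. Finally you prove $nd^3=o(m2^{-d})$ uniformly, for example from the monotonicity of $2^{-d}\binom nd/d^3$ on $2\le d\le cn$.

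For \eqref{eq:bias-tail}, the Sberlo--Shpilka bias estimate (Corollary~3.13 with Proposition~3.14) is not of the self-contained form $2^{m-\Omega(\cdot)}$. It reduces near-balanced words to low-weight words:
\[
 W\!\left(\tfrac{1-2^{-\ell}}2\right)\le2^{nt+[1-(1-d/(n-t))^t]m}\,W(2^{1-s}),\qquad t=2\ell+s+1.
\]
So you must choose $s$; the paper takes $s=\ell+2$, so $t=3\ell+3\le6\ell$. From $d/(n-t)\le2\gamma$ you derive $(1-d/(n-t))^t\ge e^{-24\gamma\ell}$. You then feed \eqref{eq:low-weight-input} back in to get $\log_2W(2^{-\ell-1})\le\tfrac18me^{-24\gamma\ell}$. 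This is the actual link between the two parts. Your opening sentence gestures at it, but \eqref{eq:weight-bias} alone cannot supply it. The $1/8$ loss is why the coefficient becomes $7/8$ and then $4/5$. The explicit constant matters, since Lemma~\ref{lem:large-bias} needs a coefficient exceeding $3/4$, so an unspecified $\Omega(\cdot)$ is not enough. Finally, besides the factor $2$, you must absorb the $nt$ term in the exponent. Here $t$ can be as large as $3L+3=O(d\log n)$, so you need $nd\log n=o(m2^{-d})$. That is stronger than the $m2^{-d}/n\to\infty$ you establish, although your estimates could be upgraded to give it.
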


The proof, given in Appendix~\ref{app:weight-estimates}, derives
these estimates from Corollaries~3.6 and~3.13 and Propositions~3.7
and~3.14 of Sberlo and Shpilka~\cite{SberloShpilka2020}.
The constant $24$ comes from the choice of parameters in their
bias estimate. 

\begin{lemma}\label{lem:large-bias}
Put $\delta=2^{-L}$, so that $\delta\le m^{-3}$.
For all sufficiently large $n$, uniformly over integers
$2\le d\le cn$ and $1\le r\le m$,
\[
L_r=\E_f\bigl[(1+X)^r\ind_{\{\delta\le X<1\}}\bigr]
\le N^{-2}.
\]
\end{lemma}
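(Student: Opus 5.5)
We split the range $\delta\le X<1$ dyadically and bound each piece using Lemma~\ref{lem:weight-estimates}. The top range, where $X$ is close to $1$, is handled separately by the minimum-weight fact \eqref{eq:minimum-weight}.

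Since $r\le m$, it suffices to treat $r=m$, because $(1+X)^r\le(1+X)^m$. For $1\le\ell\le L$, consider the shell $2^{-\ell}\le X<2^{-\ell+1}$. On it, $(1+X)^m\le e^{m2^{1-\ell}}$. By \eqref{eq:bias-tail}, the shell has probability at most $2^{-(4/5)me^{-24\gamma\ell}}$. Hence the shell contributes at most
\[
 \exp\bigl(m2^{1-\ell}-(4/5)(\ln2)\,m e^{-24\gamma\ell}\bigr).
\]
Since $\gamma\le10^{-4}$, we have $e^{-24\gamma\ell}\ge 2^{-0.0035\ell}$. This decays far more slowly than $2^{-\ell}$. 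So for $\ell\ge\ell_0$, with $\ell_0$ an absolute constant (say $\ell_0=3$), the negative term dominates. The exponent is then at most $-c'me^{-24\gamma L}\le -c'm2^{-d}$, using $e^{-24\gamma L}\ge2^{-d}$ from \eqref{eq:auxiliary-bias-bounds}. Since $m2^{-d}/n\to\infty$, each such term is at most $2^{-3n}$ for large $n$. There are $L<0.005n$ terms, so the total is $o(N^{-2})$.

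The remaining range is $X\ge2^{-\ell_0}$, that is, large bias. Here we use \eqref{eq:weight-bias} together with \eqref{eq:low-weight-input} directly, splitting by weight rather than bias. The condition $X\ge 1-2^{1-j}$ (for $j\ge2$) corresponds to codewords of weight at most $2^{-j}$ or at least $1-2^{-j}$. The probability of this event is at most $2\cdot 2^{-m}W(2^{-j})\le 2^{1-m+m2^{-j}/4}$. Meanwhile $(1+X)^m\le 2^m$, which is fine when combined with the factor $2^{-m}$. More precisely, on the shell $1-2^{1-j}\le X<1-2^{-j}$ we have $(1+X)^m\le 2^m(1-2^{-j-1})^m\le 2^m e^{-m2^{-j-1}}$. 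The shell contributes at most
\[
 2\exp\bigl(-m2^{-j-1}+(\ln2)m2^{-j}/4\bigr)=2\exp\bigl(-c''m2^{-j}\bigr),
\]
with $c''=1/2-(\ln2)/4>0$. This needs care at the top, where $j$ is large and $m2^{-j}$ small. But $X<1$ together with minimum weight $2^{-d}$ forces $\wt(f)\in[2^{-d},1-2^{-d}]$. So only $j\le d$ occurs, giving $m2^{-j}\ge m2^{-d}\gg n$. Each term is therefore at most $2^{-3n}$, and there are at most $d$ terms. For the intermediate range $2^{-\ell_0}\le X\le 1-2^{1-j_0}$, which is bounded away from both ends, we use \eqref{eq:bias-tail} with $\ell=1$ and $\ell=\ell_0$. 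The constants need to be checked so that the decay beats $(1+X)^m\le (1+2^{1-\ell})^m$ on each piece. Summing all three regions gives $L_r\le N^{-2}$.

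I expect the main obstacle to be the junction between the two regimes at bias of constant size. The tail \eqref{eq:bias-tail} carries the factor $4/5$. So with shells of width factor $2$, we need $(4/5)\ln2\cdot e^{-24\gamma\ell}$ to exceed $2^{1-\ell}\cdot(\text{growth})$; this holds for $\ell\ge3$ but fails at $\ell=1$. Similarly, the weight bound covers $X\ge1/2$ only with the stated constant $1/4$. I would first try shells finer than dyadic near $X\in[1/8,1/2]$, using $\ln(1+X)<X$ sharply. If that fails, I would cover the bias range around $1/2$ by applying \eqref{eq:low-weight-input} with $j=2$ (weight $\le1/4$, i.e. $X\ge1/2$). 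For $X\in[1/8,1/2)$ I would apply \eqref{eq:bias-tail} with $\ell=3$, and check that $\ln(3/2)<(4/5)(\ln2)2^{-0.0105}$, which holds numerically ($0.405<0.550$).
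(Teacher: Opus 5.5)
Your proposal is correct and follows essentially the paper's route: split at bias $1/2$, apply \eqref{eq:bias-tail} on dyadic ranges with $\ln(1+x)\le x$ below it, and apply \eqref{eq:low-weight-input} above it with $1+X=2(1-u)$ and the minimum weight $2^{-d}$ forcing $j\le d$; the only cosmetic difference is that the paper bounds each of the at most $N+1$ values of $X$ individually rather than summing over shells. Your extra patch for $[1/8,1/2)$ via $\ell=3$ and $\ln(3/2)<0.55$ is valid but unnecessary, since the dyadic estimate already succeeds at $\ell=2$ ($m/2<(4/5)(\ln 2)e^{-48\gamma}m\approx0.55m$), which is exactly the check the paper makes.
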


\begin{proof}
We use Lemma~\ref{lem:weight-estimates}. Put $s=m2^{-d}$.
We shall show that each possible value $x$ of $X$ with
$\delta\le x<1$ satisfies
$(1+x)^r\Pr_f(X=x)\le2^{-s/20}$.
There are at most $N+1$ such values, and $s/n\to\infty$ by
\eqref{eq:auxiliary-bias-bounds}, so summing these bounds will prove
the lemma.

First suppose $x<1/2$, and put
$\ell=\lceil\log_2(1/x)\rceil$. Then
$2\le\ell\le L$ and $2^{-\ell}\le x<2^{1-\ell}$. We have
\[
 \log_2(1+x)\le\frac x{\ln2}
 \le\frac{2^{1-\ell}}{\ln2}
 \le\frac34e^{-24\gamma\ell}.
\]
For the last inequality,
$2^{1-\ell}e^{24\gamma\ell}/\ln2$ decreases with $\ell$, and
at $\ell=2$ it is at most $e^{48c}/(2\ln2)<3/4$.
As $x\ge2^{-\ell}$, equation~\eqref{eq:bias-tail} bounds
$\Pr_f(X=x)$ by $2^{-(4/5)m e^{-24\gamma\ell}}$.
Since $\ell\le L$, equation~\eqref{eq:auxiliary-bias-bounds}
also gives $e^{-24\gamma\ell}\ge e^{-24\gamma L}\ge2^{-d}$.
Using $r\le m$, we obtain
\[
 \begin{aligned}
 (1+x)^r\Pr_f(X=x)
 &\le2^{(3/4)m e^{-24\gamma\ell}}
       2^{-(4/5)m e^{-24\gamma\ell}}\\
 &=2^{-m e^{-24\gamma\ell}/20}
 \le2^{-s/20}.
 \end{aligned}
\]

Now suppose $1/2\le x<1$, and put $u=(1-x)/2$. A codeword $f$ with
$|B_f|=x$ has relative weight $u$ or $1-u$, so either $f$ or
$f+1$ has relative weight $u$. Since $u>0$, the codeword of
weight $u$ is non-zero, and the minimum-weight bound
\eqref{eq:minimum-weight} gives $u\ge2^{-d}$.
Since $x\ge1/2$, we also have $u\le1/4$.
Choose $j=\lfloor\log_2(1/u)\rfloor$, so that
$2\le j\le d$ and $u\le2^{-j}<2u$.
Since $W$ is non-decreasing, equation~\eqref{eq:low-weight-input}
gives $W(u)\le W(2^{-j})\le2^{m2^{-j}/4}\le2^{mu/2}$.
By \eqref{eq:weight-bias},
\[
\Pr_f(X=x)\le\Pr_f(X\ge x)=2^{1-m}W(u).
\]
Since $r\le m$, $1+x=2(1-u)$ and
$\log_2(1-u)\le-u$, we obtain
\[
 \begin{split}
 (1+x)^r\Pr_f(X=x)
 &\le[2(1-u)]^m2^{1-m}W(u)\\
 &=2(1-u)^mW(u)\\
 &\le2^{1+m(\log_2(1-u)+u/2)}\\
 &\le2^{1-mu/2}\le2^{1-s/2}\le2^{-s/20}
 \end{split}
\]
for all sufficiently large $n$.

Finally, $s/n\to\infty$ uniformly by
\eqref{eq:auxiliary-bias-bounds}, so $s/20\ge3n+1$ for all
sufficiently large $n$. Summing over at most $N+1$ values of $X$
and using $N=2^n$ gives
\begin{equation}\label{eq:weighted-bias-total}
 \E_f\bigl[(1+X)^r\ind_{\{\delta\le X<1\}}\bigr]
 \le(N+1)2^{-s/20}\le2^{n+1-(3n+1)}=N^{-2}.
\end{equation}
\end{proof}

\subsection{Completion of the proof of Theorem~\ref{thm:main}}
\label{sec:small-degree-completion}

We now combine the dependence estimates to count good sets of
columns. By \eqref{eq:enumeration-transfer}, it suffices to choose
$k$ close to $m$ for which $(N-k)q_{k+1}$ stays bounded away from $1$.
A fixed positive proportion of the $\binom Nk$ sets will then
contribute distinct hyperplane intersections.

Throughout this subsection $2\le d\le10^{-4}n$, and all estimates
are uniform over these degrees. Let $1\le r\le m$. Recall that
$q_r$ is the probability that the evaluation columns indexed by
a uniformly chosen $r$-element subset of $V$ are linearly dependent
over $\F_2$.
We apply Lemma~\ref{lem:conditioning} with $M=\binom r2$ and
$\delta=2^{-L}$, where $L=\lceil3\log_2m\rceil$.
Then $\delta\le m^{-3}$ and $r\delta\le m^{-2}\le1$.
Lemma~\ref{lem:large-bias} gives $L_r\le N^{-2}$.
Since $r\le m$ and $M\le r^2/2$,
\[
 \frac{r^4\delta^2}{N}\le\frac1{m^2N},\qquad
 \binom M2\le\frac{m^4}8,\qquad
 1-\frac MN\ge1-\frac{m^2}{2N}.
\]
The bound $\log_2m<0.0015n$ in
\eqref{eq:auxiliary-bias-bounds} gives
$m^2/(2N)<2^{-0.997n-1}\to0$, so $M<N$ and the denominator
is positive for all sufficiently large $n$. Substitution in
Lemma~\ref{lem:conditioning} gives
\begin{equation}\label{eq:rank-bound}
q_r\le
\frac{2^{r-m}+1/(m^2N)+m^4/(8N^2)+N^{-2}}
     {1-m^2/(2N)},\qquad 1\le r\le m.
\end{equation}

\begin{proof}[Completion of the proof of Theorem~\ref{thm:main}]
Increasing $k$ increases the number $\binom Nk$ of candidate sets,
but can only increase their dependence probability $q_{k+1}$. The main term
$2^{r-m}$ in \eqref{eq:rank-bound} halves whenever one column is
removed. To bring it below $1/N$, we therefore need a shortfall
of about $\log_2N=n$ columns.

Put $s=n+2$ and $k=m-s$. We have $N2^{k+1-m}=1/2$ and
$m\ge\binom n2$, while \eqref{eq:auxiliary-bias-bounds} gives
$m^4/N\le2^{-0.994n}$. Hence \eqref{eq:rank-bound} gives
\[
 (N-k)q_{k+1}\le\frac12+o(1).
\]
By \eqref{eq:good-probability}, at least a proportion
$1/2-o(1)$ of the $(m-n-2)$-element sets are good. Hence
\eqref{eq:enumeration-transfer} gives
\begin{equation}\label{eq:half-lower}
 T_{n,d}\ge\left(\frac12-o(1)\right)\binom N{m-n-2}.
\end{equation}

Since $m^2/N\to0$, Lemma~\ref{lem:cover-estimates}(i) gives
\[
 Q_{n,d}=2(1+o(1))\binom N{m-1}.
\]

The ratio $\binom N{m-1}/\binom N{m-s}$ is a product of
$n+1=s-1$ factors. We will show that the logarithm of each
factor is approximately $\log_2(N/m)$, with errors that sum
to $O(1)$. More precisely,
\begin{align*}
 \log_2\frac{\binom N{m-1}}{\binom N{m-s}}
 &=\sum_{j=m-s+1}^{m-1}\log_2\frac{N-j+1}{j}\\
 &=(s-1)\log_2(N/m)
   +O\!\left(\frac{s^2}{m}+\frac{sm}{N}\right)\\
 &=(s-1)\log_2(N/m)+O(1).
\end{align*}
For the error estimate, the range $m-s+1\le j\le m-1$ gives
$1\le m-j\le n+1$ and $0\le j-1\le m-2$. We write
\[
\log_2\frac{N-j+1}{j}
=\log_2(N/m)
+\log_2\left(1-\frac{j-1}{N}\right)
-\log_2\left(1-\frac{m-j}{m}\right).
\]
Both $(j-1)/N$ and $(m-j)/m$ tend to zero uniformly over this
range.
The estimate $\log_2(1-u)=O(u)$ as $u\to0$ therefore gives
an error $O(m/N+(n+1)/m)$ per summand, uniformly over this
range. Summing the $n+1$ errors gives
\[
O\left(\frac{(n+1)m}{N}+\frac{(n+1)^2}{m}\right)=O(1).
\]
Here $m\ge\binom n2$ bounds the second term, while $m\ge n+1$
gives $(n+1)m/N\le m^2/N\to0$ for the first.
The factor preceding the binomial coefficient in
\eqref{eq:half-lower} has logarithm $-1+o(1)$. Including the
factor $2(1+o(1))$ in $Q_{n,d}$ proves
\[
 0\le\log_2Q_{n,d}-\log_2T_{n,d}
 \le(n+1)\log_2(N/m)+O(1)=O(n^2).
\]

Each column given up in this comparison costs approximately
$\log_2(N/m)$ in the logarithmic count. Lemma~\ref{lem:cover-estimates}(ii)
gives $\log_2Q_{n,d}\ge1+(m-1)\log_2(N/m)$, so dividing by
$\log_2Q_{n,d}$ cancels this factor and gives $E_{n,d}\le Cn/m$.
Here we use $\log_2(N/m)\ge1$ for all sufficiently large $n$.

Finally, Lemma~\ref{lem:cover-estimates}(ii) gives
\[
 \log_2Q_{n,d}=m\log_2(N/m)+O(m+n).
\]

By \eqref{eq:auxiliary-bias-bounds},
$\log_2(N/m)>0.9985n$, and $m\ge\binom n2$. Hence $m+n=O(m)$ and $n^2=O(m)$, while
$\log_2(N/m)\to\infty$. Both the $O(m+n)$ error in the estimate
for $\log_2Q_{n,d}$ and the $O(n^2)$ difference between
$\log_2Q_{n,d}$ and $\log_2T_{n,d}$ are therefore
$o(m\log_2(N/m))$. Consequently,
\[
\log_2T_{n,d}\sim\log_2Q_{n,d}\sim m\log_2(N/m),
\]
as required.
\end{proof}

The restriction to a small linear degree range enters in the weighted
bias estimates of Subsection~\ref{sec:large-bias}. Those estimates must
control almost the full sample dimension, rather than a fixed
fraction of it. With minor adjustments to the
auxiliary numerical bounds, the same estimates permit any fixed
$c<1/3808$, the restriction coming from $476\gamma<1/8$ in the
low-weight estimate (see Appendix~\ref{app:weight-estimates}).
We retain $10^{-4}$ for simplicity.
For the leading logarithm at larger degrees, we use the coding
estimates and deletion argument in Section~\ref{sec:all-degrees}.

\medskip
\noindent\emph{Remark.} The bound on $q_r$ also gives a statement about
random sets of evaluation vectors. For $s=n+\lceil\log_2n\rceil$
and $k=m-s$, \eqref{eq:rank-bound} gives
$(N-k)q_{k+1}\le2/n+o(1)$. Thus \eqref{eq:good-probability}
and \eqref{eq:enumeration-transfer} show that almost every set
of $m-s$ evaluation vectors is good and
$T_{n,d}\ge(1-o(1))\binom N{m-s}$.
This conclusion concerns samples with order $n$ fewer columns
than the dimension. We return to the question of reaching the full
dimension in Section~\ref{sec:conclusions}.

\section{The intermediate and near-central degrees}
\label{sec:all-degrees}

We now prove Theorems~\ref{thm:away-centre}
and~\ref{thm:near-central}. For the first, we use the weight bound of
Sberlo and Shpilka to show that a suitable random selection of
evaluation columns spans $\F_2^m$ with probability tending to one.
For the second, we bound the expected nullity of a random selection
of columns using the probability that a fixed column lies in the
span of the other selected columns. An entropy argument then gives
many independent sets. The deletion lemma converts the
independent-set counts in both arguments into lower bounds for
$T_{n,d}$.

\subsection{A quantitative bound below the central degree}
\label{sec:intermediate-degrees}

We derive from Theorem~1.2 of Sberlo and Shpilka
\cite{SberloShpilka2020} a bound on the probability that randomly
selected evaluation columns fail to span $\F_2^m$. This bound will
be used to prove Theorem~\ref{thm:away-centre}. The condition
$d/n\le1/2-\varepsilon$ keeps the degree proportion at least
$\varepsilon$ below $1/2$. Thus $\varepsilon\ge n^{-1/3}$ keeps
$d$ at least $n^{2/3}$ below the central degree.

\begin{lemma}[Full rank for a random evaluation matrix]
\label{lem:rank-retention}
There are absolute constants $C>0$ and $n_0$ such that the
following holds. Suppose $n\ge n_0$,
$n^{-1/3}\le\varepsilon<1/2$ and
$1\le d\le(1/2-\varepsilon)n$. Put $m=m_{n,d}$ and
\[
 D_\varepsilon=\frac{544\ln2}{\varepsilon},\qquad
 q=\frac{D_\varepsilon m}{2^n}.
\]
Then $0<q<1$. Choose $U\subseteq\{0,1\}^n$ by including each
vertex independently with probability $q$. The failure probability
satisfies
\begin{equation}\label{eq:retention-failure}
 \Pr\!\left(\rank_{\F_2}A_U<m\right)
 \le2^me^{-D_\varepsilon m/4}
 +d\,2^{Cn^4-D_\varepsilon m2^{-d}/(4\ln2)}.
\end{equation}
Consequently, if $d=d(n)$ and $\varepsilon=\varepsilon(n)$
satisfy the stated restrictions and $m2^{-d}/n^4\to\infty$,
then $A_U$ has full row rank with probability tending to one
as $n\to\infty$.

Moreover, for every fixed $c>0$ and every $\zeta>0$, there is
$n_1=n_1(c,\zeta)\ge n_0$ such that, whenever $n\ge n_1$ and
$cn\le d\le(1/2-\varepsilon)n$, the same construction satisfies
\[
 \Pr\!\left(\rank_{\F_2}A_U=m\right)\ge1-\zeta
\]
uniformly over $n^{-1/3}\le\varepsilon<1/2$.
\end{lemma}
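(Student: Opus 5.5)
The plan is a union bound over non-zero coefficient vectors $c\in\F_2^m$, equivalently over non-zero codewords $f\in\RM(n,d)$. The rows of $A_U$ are dependent over $\F_2$ exactly when some non-zero $f$ vanishes on every vertex of $U$. Since vertices are included independently with probability $q$, a fixed $f$ of relative weight $w$ vanishes on $U$ with probability $(1-q)^{wN}\le e^{-qwN}=e^{-D_\varepsilon mw}$. Hence
\[
 \Pr(\rank_{\F_2}A_U<m)\le\sum_{f\ne0}e^{-D_\varepsilon m\,\wt(f)}.
\]
First we check $0<q<1$. This amounts to $m\le2^n\varepsilon/(544\ln2)$. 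For $d\le(1/2-\varepsilon)n$, Hoeffding gives $m/2^n\le e^{-2\varepsilon^2n}\le e^{-2n^{1/3}}$. This is much smaller than $\varepsilon/544\ge n^{-1/3}/544$ once $n\ge n_0$.

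We split the sum at weight $1/4$.

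\textbf{Heavy words.} For words with $\wt(f)\ge1/4$, we use the trivial count $2^m$. This gives the term $2^me^{-D_\varepsilon m/4}$.

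\textbf{Light words.} For words with $2^{-d}\le\wt(f)<1/4$, we bucket by $j=\lfloor\log_2(1/\wt(f))\rfloor$, with $2\le j\le d$. This gives at most $d$ buckets. Bucket $j$ contributes at most $W_{n,d}(2^{-j})e^{-D_\varepsilon m2^{-j-1}}$. We then invoke the weight bound of Sberlo and Shpilka, Theorem~1.2 of~\cite{SberloShpilka2020}. In the regime $d\le(1/2-\varepsilon)n$, it should give a bound of the form
\[
 \log_2W_{n,d}(2^{-j})\le Cn^4+c_\varepsilon\,m2^{-j},
\]
where $c_\varepsilon$ is of order $1/\varepsilon$ with an explicit constant. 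The constant $544$ in $D_\varepsilon$ is chosen precisely so that $c_\varepsilon m2^{-j}$ is absorbed by half of the exponent $D_\varepsilon m2^{-j-1}/\ln2$. That leaves
\[
 2^{Cn^4-D_\varepsilon m2^{-j}/(4\ln2)}\le2^{Cn^4-D_\varepsilon m2^{-d}/(4\ln2)}.
\]
Summing over the $d$ buckets gives the second term of~\eqref{eq:retention-failure}.

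\textbf{Main obstacle.} The key step is matching the exact form and constants of the Sberlo--Shpilka bound, including the $\varepsilon$-dependence and the $n^4$ slack. I would first restate their Theorem~1.2 in relative-weight form, $\log_2W(2^{-j})\le O(n^4)+O(1/\varepsilon)\,m2^{-j}$, tracking the constant. If that constant is not as hoped, $544$ can be adjusted.

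\textbf{Consequences.} For the first consequence, note that $D_\varepsilon\ge544\ln2\cdot2>4\ln2\cdot2$. Hence $2^me^{-D_\varepsilon m/4}\le2^{m(1-D_\varepsilon/(4\ln2))}\to0$. The second term tends to zero when $m2^{-d}/n^4\to\infty$, since $D_\varepsilon$ is bounded below and $\log_2d$ is negligible. For the ``moreover'' clause with $cn\le d\le(1/2-\varepsilon)n$, we show $m2^{-d}/n^4\to\infty$ uniformly in $\varepsilon$. Indeed $m\ge\binom nd\ge\binom n{\lceil cn\rceil}$, which is at least $2^{H_2(c)n}/(n+1)$, while $2^{-d}\ge2^{-n/2}$. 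Then $H_2(c)>1/2$ for $c$ near $1/2$, and the ratio diverges. For smaller $c$ we use $m\ge\binom nd$ with $d\in[cn,n/2]$ and compare $H_2(d/n)$ with $d/n$. This works because $H_2(t)>t$ for $0<t\le1/2$, and on $[c,1/2]$ it does so with a uniform gap. The bound holds uniformly in $\varepsilon$, so taking $n_1$ large yields failure probability at most $\zeta$.
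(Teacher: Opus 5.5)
Your outline follows the paper's proof step for step: the union bound over non-zero codewords, the split at weight $1/4$, dyadic buckets $2\le j\le d$, the Sberlo--Shpilka count, and both consequences. The peripheral steps are sound. Hoeffding is a valid substitute for the paper's entropy/Taylor check that $q<1$. Your uniform gap for $H_2(t)-t$ on $[c,1/2]$ is what the paper makes explicit via $H_2(t)\ge 2t$, giving $m2^{-d}\ge2^{cn}/(n+1)$.

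The gap is that the one step carrying real content is asserted rather than proved. You need $\log_2W_{n,d}(2^{-j})\le Cn^4+c_\varepsilon m2^{-j}$ with $c_\varepsilon\le D_\varepsilon/(4\ln2)=136/\varepsilon$. You neither state Theorem~1.2 of Sberlo--Shpilka nor explain where a factor $1/\varepsilon$ would come from. Your fallback of adjusting $544$ is not available, since that constant is part of the statement you are proving.

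The missing idea is this. The Sberlo--Shpilka bound is expressed through $\gamma^{\ell-1}$ with $\gamma=d/n$, not through $2^{-\ell}$:
$\log_2W_{n,d}(2^{-\ell})\le Cn^4+17\{\ell/(1-\gamma)+(2-\gamma)/(1-\gamma)^2\}\gamma^{\ell-1}m$ for $\ell\ge2$.

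\begin{itemize}
\item Since $\gamma\le\beta=1/2-\varepsilon<1/2$, the bracket is at most $2\ell+6$.
\item Writing $\beta^{\ell-1}=2^{1-\ell}(2\beta)^{\ell-1}$ turns the second term into $m2^{-\ell}\cdot34(2\ell+6)(2\beta)^{\ell-1}$.
\item Here the linear factor $\ell$ competes with a geometric ratio $2\beta=1-2\varepsilon$ close to $1$. The estimate $\ell(2\beta)^{\ell-1}\le\sum_{i=0}^{\ell-1}(2\beta)^i\le1/(1-2\beta)=1/(2\varepsilon)$ gives $(2\ell+6)(2\beta)^{\ell-1}\le8/(1-2\beta)=4/\varepsilon$, uniformly in $\ell$.
\item Hence $c_\varepsilon=136/\varepsilon$, which is exactly $D_\varepsilon/(4\ln2)$.
\end{itemize}

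This is where the order $1/\varepsilon$ and the specific constant $544=4\cdot136$ come from. Without this calculation, your proposal is a correct skeleton whose load-bearing estimate is unverified.
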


The proof is given in Appendix~\ref{app:weight-estimates}.
Failure of full rank means that some non-zero codeword vanishes
at every selected vertex. A word of relative weight $w$ does so
with probability at most $e^{-D_\varepsilon mw}$, so words of small
weight present the greatest obstacle. The weight bound controls
their number in each weight range. The comparison involves factors
of order $\ell(2\beta)^{\ell-1}$, where
$\beta=1/2-\varepsilon$. These are bounded by
$1/(1-2\beta)=1/(2\varepsilon)$. This is why $D_\varepsilon$ is
chosen of order $1/\varepsilon$.

\begin{proof}[Proof of Theorem~\ref{thm:away-centre}]
Let $n^{-1/3}\le\varepsilon<1/2$ and put $m=m_{n,d}$.
All constants in this proof are absolute. If
$2\le d\le10^{-4}n$, Theorem~\ref{thm:main} bounds the additive
error by a constant multiple of $n^2$. Since $m\ge\binom n2$,
this is at most a constant multiple of $m$, and hence at most
a fixed constant times $m\ln(e/\varepsilon)$.
It remains to establish the additive bound for
$10^{-4}n<d\le(1/2-\varepsilon)n$ when this range is non-empty.
Recall that $J_r$ counts the independent $r$-element sets of
evaluation columns over $\F_2$. We first bound $J_m$ by counting
the bases contained in random samples, and then apply the deletion
bound \eqref{eq:central-deletion-count}.

Choose $U\subseteq V$ by including each vertex independently with
probability $q=D_\varepsilon m/N$. Taking $c=10^{-4}$ and
$\zeta=1/2$ in the final assertion of Lemma~\ref{lem:rank-retention},
we obtain, for all sufficiently large $n$,
\[
 \Pr\!\left(\rank_{\F_2}A_U=m\right)\ge\frac12.
\]
Whenever $A_U$ has rank $m$, it contains $m$ columns that are
linearly independent over $\F_2$. For any fixed $m$-element set
$S\subseteq V$, the probability that $S\subseteq U$ is $q^m$,
since all $m$ vertices must be selected independently. Thus the
union bound gives
\[
 \frac12\le\Pr\!\left(\rank_{\F_2}A_U=m\right)\le J_mq^m,
\]
so, using $q=D_\varepsilon m/N$,
\[
 \log_2J_m\ge m\log_2(N/m)-m\log_2D_\varepsilon-1.
\]
For our evaluation matrix, every dependent set has at least $2^{d+1}$
columns. The deletion bound \eqref{eq:central-deletion-count}, with
$r=m$, therefore gives
\[
 \log_2T_{n,d}\ge m\log_2(N/m)-m\log_2D_\varepsilon
 -O\!\left(\frac{mn^2}{2^d}+n\right).
\]
The final error term is $o(m)$ uniformly in this range. Indeed,
$d>10^{-4}n$ and $m\ge\binom n2$ give
\[
 \frac{mn^2/2^d+n}{m}
 =\frac{n^2}{2^d}+\frac nm
 \le n^2\,2^{-10^{-4}n}+\frac2{n-1}\longrightarrow0.
\]

Since $m\le N/2$, Lemma~\ref{lem:cover-estimates}(ii) gives
\[
 \log_2Q_{n,d}=m\log_2(N/m)+O(m+n).
\]
Since $D_\varepsilon=544\ln2/\varepsilon$ and $m\ge\binom n2$,
comparison with the lower bound gives
\[
 0\le\log_2Q_{n,d}-\log_2T_{n,d}
 \le K m\ln(e/\varepsilon),
\]
for a fixed absolute constant $K$ chosen large enough also to cover
the small-degree estimate. This proves the additive assertion
throughout the stated range.

Finally, the upper bound in \eqref{eq:binomial-entropy} gives
\[
 \log_2(N/m)\ge n\{1-H_2(1/2-\varepsilon)\}
 \ge\frac{2}{\ln2}n\varepsilon^2,
\]
where the second inequality follows from Taylor's theorem.
Since $m-1\ge m/2$, the lower bound in
Lemma~\ref{lem:cover-estimates}(ii) gives
$\log_2Q_{n,d}\ge mn\varepsilon^2/\ln2$. Dividing the
additive bound by this quantity gives
$E_{n,d}\le K\ln2\,\ln(e/\varepsilon)/(n\varepsilon^2)$.
Since $\ln2<1$, taking the theorem constant $C\ge K$ proves
both assertions.
\end{proof}

\subsection{Degrees just below the central degree}
\label{sec:central-counts}

We prove Theorem~\ref{thm:near-central}. Throughout this subsection,
suppose $1\le d<n/2$, and recall that
\[
 \alpha=\frac mN\le\frac12,\qquad
 a=\ln(1/\alpha).
\]
We assume that $a/n\le\delta_0$ and that $n$ is sufficiently
large. We choose the absolute constant $\delta_0>0$ small enough
to meet all the requirements below and in
Appendix~\ref{app:near-central-ii}, and use the same constant in
Theorem~\ref{thm:near-central}(i). The constants below are absolute,
unless stated otherwise.

For part~(i), we select each evaluation column independently with
probability $\alpha$, so that the sample contains $m$ columns on
average. We use a bound on the expected nullity, together with an
entropy argument, to count linearly independent sets over $\F_2$.
The deletion lemma then converts this into a count of good sets and
hence a lower bound on $T_{n,d}$.

Let $h(t)$ be the probability that a fixed column lies in the span
over $\F_2$ of the other columns selected independently with
probability $t$. By symmetry, this probability is the same for every
fixed column. An integral identity proved below expresses the
expected nullity at sampling probability $\alpha$ as $N$ times the
area under $h$ between $0$ and $\alpha$. If $h$ jumped from $0$ to
$1$ at a single point, a second identity would force that point to
be $\alpha$, and this area would be zero. A sharp-threshold
inequality shows that $h$ is close to such a step, so the area is
small. For part~(ii), we sample slightly above $\alpha$, to bring
the expected rank closer to $m$.

\subsubsection{An entropy count of independent sets}
\label{sec:entropy-count}

We will use the following count, which controls all sizes of
independent sets at once. For a random variable $X$ taking finitely
many values, its Shannon entropy is
\[
 H(X)=-\sum_x\Pr(X=x)\log_2\Pr(X=x),
\]
with $0\log_2 0=0$. We write $H(X,Y)$ for the entropy of the pair
$(X,Y)$. We use the standard facts that entropy is at most the
logarithm of the number of possible values, that entropy is
subadditive, and that $H_2$ is concave (see Cover and
Thomas~\cite[Chapter~2]{CoverThomas2006}).

\Needspace{12\baselineskip}
\begin{lemma}\label{lem:entropy-independent}
Let $B$ be a binary matrix with $N\ge1$ columns, and let $I(B)$ be the
number of subsets of its column indices whose corresponding columns
are independent over $\F_2$, including the empty set.
Choose a random set $U$ of column indices by
including each index independently with probability $q$. Then
\[
 \log_2 I(B)\ge NH_2(q)
 -NH_2\!\left(\frac{\E(|U|-\rank_{\F_2}B_U)}N\right).
\]
\end{lemma}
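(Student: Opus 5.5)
We encode the random set $U$ losslessly by an independent subset together with a small set of extra positions, and then compare entropies. Fix a deterministic rule that assigns to each $U\subseteq[N]$ a maximal independent subset $S(U)\subseteq U$, for instance by the greedy algorithm scanning indices in increasing order. Then $|S(U)|=\rank_{\F_2}B_U$, and $S(U)$ is one of the $I(B)$ independent sets. Put $R(U)=U\setminus S(U)$, so that $|R(U)|=|U|-\rank_{\F_2}B_U$. The map $U\mapsto(S(U),R(U))$ is injective, because $U=S(U)\cup R(U)$. Hence
\[
 NH_2(q)=H(U)=H(S,R)\le H(S)+H(R)\le\log_2I(B)+H(R),
\]
using subadditivity and the fact that $S$ takes at most $I(B)$ values. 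The equality $H(U)=NH_2(q)$ holds because the $N$ indicator variables of $U$ are independent Bernoulli$(q)$.

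It remains to show that $H(R)\le NH_2(\E|R|/N)$. View $R$ as a random binary vector $(\ind_{\{i\in R\}})_{i\le N}$ and write $p_i=\Pr(i\in R)$. Subadditivity gives
\[
 H(R)\le\sum_{i=1}^N H_2(p_i).
\]
Concavity of $H_2$ (Jensen) then gives
\[
 \sum_{i=1}^N H_2(p_i)\le NH_2\Bigl(\frac1N\sum_i p_i\Bigr)=NH_2\bigl(\E|R|/N\bigr),
\]
and $\E|R|=\E(|U|-\rank_{\F_2}B_U)$. Rearranging yields the stated inequality.

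No step is a serious obstacle. The only point needing care is that the decomposition is a function of $U$ and is recoverable from it, so that $H(U)=H(S,R)$. Note also that the concavity bound requires no restriction such as $\E|R|/N\le1/2$, since Jensen applies on all of $[0,1]$.
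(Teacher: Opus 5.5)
Your proposal is correct and follows the paper's proof essentially step for step: the greedy decomposition $U=S(U)\cup R(U)$, the identity $H(U)=H(S,R)$, subadditivity with $H(S)\le\log_2 I(B)$, and the bound $H(R)\le\sum_v H_2(p_v)\le NH_2(\E|R|/N)$ by concavity. You also note that $H(U)=NH_2(q)$ holds at $q\in\{0,1\}$ under the convention $0\log_2 0=0$, so the paper's separate treatment of the endpoint cases is not needed.
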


\begin{proof}
Fix an ordering of the columns of $B$. For each sampled set $U$,
construct $S(U)$ by examining its indices in that order, starting
with the empty set. Add an index precisely when its column does
not lie in the span over $\F_2$ of the columns already selected.

The selected columns are independent. Every skipped column lies in
the span of those selected at that stage, and hence in the span of
the final selection. Thus the selected columns form a basis for the
columns of $B_U$. Putting $R(U)=U\setminus S(U)$, we therefore have
\[
 |S(U)|=\rank_{\F_2}B_U,\qquad
 |R(U)|=|U|-\rank_{\F_2}B_U.
\]

Write $S=S(U)$ and $R=R(U)$. Since $(S,R)$ is a function of $U$
and $U=S\cup R$ is a function of $(S,R)$, the two random variables
have the same entropy:
\[
 H(U)=H(S,R).
\]

We may assume $0<q<1$, since the endpoint cases follow from
$I(B)\ge1$. The membership indicators defining $U$ are independent
Bernoulli variables, so $H(U)=NH_2(q)$. Thus, by subadditivity,
\[
 NH_2(q)=H(S,R)\le H(S)+H(R).
\]

Every possible value of $S$ is an independent set of column
indices, so $H(S)\le\log_2 I(B)$.

It remains to bound $H(R)$ in terms of its expected size.
Put $p_v=\Pr(v\in R)$. Since $|R|$ counts the indices belonging to
$R$, we have $\sum_{v=1}^N p_v=\E|R|$.
The membership indicator for index $v$ has entropy $H_2(p_v)$.
Subadditivity bounds the entropy of $R$ by the sum of these
individual entropies. Concavity of $H_2$ then bounds this sum by
$N$ times the binary entropy of the average membership probability.
Hence
\[
 H(R)\le\sum_{v=1}^N H_2(p_v)
 \le NH_2\!\left(\frac1N\sum_{v=1}^N p_v\right)
 =NH_2\!\left(\frac{\E|R|}N\right).
\]
Combining the bounds for $H(S)$ and $H(R)$, we obtain
\[
 NH_2(q)\le\log_2 I(B)+NH_2\!\left(\frac{\E|R|}N\right).
\]
Substituting $\E|R|=\E(|U|-\rank_{\F_2}B_U)$ and rearranging
proves the lemma.
\end{proof}

We now apply Lemma~\ref{lem:entropy-independent} to the evaluation
matrix $A$. Recall that $J_r$ counts its independent $r$-element
sets over $\F_2$, and put $I_{\mathrm{bin}}=\sum_{r=0}^mJ_r$,
where $J_0=1$. Thus $I_{\mathrm{bin}}$ counts its independent sets
of all sizes.

Choose $r$ for which $J_r$ is largest. There are $m+1$ possible
sizes, so $J_r\ge I_{\mathrm{bin}}/(m+1)$. Moreover, $r\ge1$
since $J_1=N>J_0$. For $d<n$, the deletion bound
\eqref{eq:central-deletion-count} gives
\[
 \log_2T_{n,d}\ge\log_2J_r-O\!\left(\frac{rn^2}{2^d}+n\right)
 \ge\log_2I_{\mathrm{bin}}-\log_2(m+1)
       -O\!\left(\frac{rn^2}{2^d}+n\right).
\]
Since $r\le m$ and $\log_2(m+1)\le\log_2(2^n+1)\le n+1$,
we obtain
\[
 \log_2T_{n,d}\ge\log_2I_{\mathrm{bin}}
       -O\!\left(\frac{mn^2}{2^d}+n\right).
\]
For any $0\le q\le1$, let $U\subseteq V$ include each vertex
independently with probability $q$.
Applying Lemma~\ref{lem:entropy-independent} with $B=A$ now yields
\begin{equation}\label{eq:entropy-transfer}
 \log_2T_{n,d}\ge NH_2(q)
 -NH_2\!\left(\frac{\E(|U|-\rank_{\F_2}A_U)}N\right)
 -O\!\left(\frac{mn^2}{2^d}+n\right).
\end{equation}
To use \eqref{eq:entropy-transfer}, we need to estimate the expected
nullity of $A_U$. An upper bound on the expected nullity divided by
$N$ can be substituted into $H_2$ provided that bound is at most
$1/2$, since $H_2$ is increasing on $[0,1/2]$. We will check this
condition when applying the estimate.

\subsubsection{Estimates for Theorem~\ref{thm:near-central}}
\label{sec:central-common-estimates}

Recall that $\Lambda=NH_2(\alpha)$. Lemma~\ref{lem:cover-estimates}(iii)
gives
\[
 \log_2Q_{n,d}=\Lambda+O(n),\qquad
 \frac\Lambda2\le\log_2Q_{n,d}\le\Lambda+1
\]
for all sufficiently large $n$, uniformly over $1\le d<n/2$.
Since $a\ge\ln2$, its entropy bounds also give
\[
 \frac{m(a+1)}{1+\ln2}\le\Lambda\le\frac{m(a+1)}{\ln2},
 \qquad \Lambda\ge m.
\]
Thus $\Lambda$ has order $m(a+1)$. It remains to show that the
deletion error in \eqref{eq:entropy-transfer} is negligible on this
scale.

We now use the assumption $a/n\le\delta_0$ to control the error
term in \eqref{eq:entropy-transfer}. If $d<n/4$, the upper
binomial bound in \eqref{eq:binomial-entropy} would give
\[
 a/n\ge\ln2\{1-H_2(1/4)\}>0.
\]
Choosing $\delta_0$ smaller than this constant therefore forces
$d\ge n/4$. Also,
$m=Ne^{-a}\ge e^{(\ln2-\delta_0)n}$. Since $\Lambda\ge m$,
\[
 \frac{mn^2/2^d+n}{\Lambda}
 \le\frac{n^2}{2^d}+\frac nm
 \le n^2 2^{-n/4}+n e^{-(\ln2-\delta_0)n}.
\]
Thus, for all sufficiently large $n$, the error term in
\eqref{eq:entropy-transfer}, divided by $\Lambda$, is at most
$e^{-c'n}$ for an absolute constant $c'>0$.
Consequently,
\[
 \frac{mn^2/2^d+n}{\log_2Q_{n,d}}=O(e^{-c'n}).
\]
We use this estimate in both parts of Theorem~\ref{thm:near-central}
without repeating the comparison. Since this error is $o(1/n)$
and $a\ge\ln2$, it can be absorbed into either stated relative-error
bound.

\subsubsection{Rank of a random set of columns}
\label{sec:coordinate-recovery}

We now define the function $h$ used in the overview and establish
the integral identities for expected rank and nullity.

Fix $v\in V$. Choose $U_t^{(v)}\subseteq V\setminus\{v\}$ by
including each vertex independently with probability $t$, and put
\[
 h_v(t)=\Pr\bigl(A_v\in\Span_{\F_2}
                    \{A_u:u\in U_t^{(v)}\}\bigr).
\]
Thus $h_v(t)$ is the probability that adding column $A_v$ to the
selected columns does not increase their rank.

An invertible affine map of $\F_2^n$ has the form
$\sigma(x)=Lx+b$, where $L$ is an invertible $n\times n$ matrix
over $\F_2$ and $b\in\F_2^n$. Substitution by $\sigma$ means
replacing a polynomial $p(x)$ by $p(\sigma(x))$. Each variable
is replaced by an affine linear form, so the degree cannot
increase. Thus this substitution preserves the space of binary
polynomial functions of degree at most $d$. Consequently, for each such map $\sigma$,
there is an invertible matrix $M_\sigma$ over $\F_2$ such that
\[
 A_{\sigma(x)}=M_\sigma A_x\qquad(x\in V).
\]
The matrix $M_\sigma$ is invertible because substitution by
$\sigma^{-1}$ gives its inverse. These maps therefore preserve
all linear relations among the columns.
Translations carry any vertex to any other and preserve the
distribution of the selected set, so $h_v(t)$ is independent of
$v$. We write $h(t)$ for its common value (see also
\cite[Proposition~7 and Lemma~23]{KudekarEtAl2017}).

We now express the expected rank and nullity of a random selection
of columns as integrals of $h$.

\Needspace{12\baselineskip}
\begin{lemma}[Integral identities for expected rank and nullity]
\label{lem:recovery-integrals}
For $0\le t\le1$, choose $U\subseteq V$ by including each vertex
independently with probability $t$, and put
$F(t)=\E\rank_{\F_2}A_U$. Then
\begin{equation}\label{eq:central-area-rank}
 \begin{aligned}
 \int_0^1(1-h(u))\,du&=\alpha,\\
 \E\bigl(|U|-\rank_{\F_2}A_U\bigr)&=N\int_0^t h(u)\,du,\\
 m-F(t)&=N\int_t^1(1-h(u))\,du.
 \end{aligned}
\end{equation}
\end{lemma}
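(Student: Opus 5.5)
The plan is to prove all three identities with a Margulis--Russo type differentiation of the expected rank. We view $F(t)$ as a polynomial in $t$, since it is an expectation over independent Bernoulli$(t)$ indicators. Write $U_t$ for the random set. For each vertex $v$, condition on the indicators of all the other vertices. The rank of $A_U$ is then an affine function of the indicator of $v$. The increment is $1$ exactly when $A_v\notin\Span_{\F_2}\{A_u:u\in U_t\setminus\{v\}\}$, and $0$ otherwise. Since $U_t\setminus\{v\}$ has the law of $U_t^{(v)}$, the partial derivative of $F$ in the probability attached to $v$, evaluated with all probabilities equal to $t$, is $1-h_v(t)=1-h(t)$. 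The chain rule over the $N$ coordinates gives
\[
 F'(t)=N(1-h(t)),\qquad 0\le t\le1 .
\]
Here we use the translation invariance of $h_v$ established just before the lemma.

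The three identities then follow by integrating this derivative.

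For the first identity, $F(0)=0$ because the empty selection has rank $0$. Also $F(1)=\rank_{\F_2}A=m$, since the full matrix has row rank $m$, as shown in Section~\ref{sec:evaluation}. Integrating $F'$ from $0$ to $1$ gives $m=N\int_0^1(1-h)$, and dividing by $N$ yields $\int_0^1(1-h(u))\,du=\alpha$.

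For the third identity, integrate $F'$ from $t$ to $1$. This gives $m-F(t)=F(1)-F(t)=N\int_t^1(1-h(u))\,du$.

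For the second identity, use $\E|U|=Nt=N\int_0^t du$. Integrating $F'$ from $0$ to $t$ gives $F(t)=N\int_0^t(1-h(u))\,du$. Subtracting, we obtain $\E(|U|-\rank A_U)=N\int_0^t h(u)\,du$.

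The only step needing care is the derivative computation. I would justify it by writing
\[
 F(t_1,\dots,t_N)=\sum_U\prod_{v\in U}t_v\prod_{v\notin U}(1-t_v)\,\rank A_U,
\]
which is multilinear in the $t_v$. Its partial derivative in $t_v$ equals the expected rank with $v$ forced in minus the expected rank with $v$ forced out, both computed over the remaining independent coordinates. This difference is exactly $\Pr(A_v\notin\Span\{A_u:u\in U^{(v)}\})$. Finally, $h$ is a polynomial in $t$, hence continuous, so the fundamental theorem of calculus applies without further justification.
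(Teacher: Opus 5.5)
Your proof is correct and is essentially the paper's argument. The paper differentiates the expected nullity $G(t)=\E(|U|-\rank_{\F_2}A_U)$ term by term, grouping the product-rule terms by vertex to get $G'(t)=Nh(t)$; you differentiate the expected rank through the multilinear extension to get $F'(t)=N(1-h(t))$, which is the same computation since $F(t)=Nt-G(t)$, and your endpoint values $F(0)=0$, $F(1)=m$ and the integration steps match the paper's.
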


\begin{proof}
For $S\subseteq V$, put
\[
 f(S)=|S|-\rank_{\F_2}A_S,
 \qquad G(t)=\E f(U).
\]
Adding column $A_v$, for $v\notin S$, increases the number of
columns by one. If it increases the rank, the nullity is unchanged,
but otherwise the nullity increases by one. Thus $f(S\cup\{v\})-f(S)$
is $1$ precisely when $A_v$ lies in the span of the columns indexed
by $S$, and is $0$ otherwise. We use this observation to calculate
$G'(t)$.

Since the vertices are selected independently,
\[
 G(t)=\sum_{W\subseteq V}f(W)t^{|W|}(1-t)^{N-|W|}.
\]
This is a polynomial in $t$, which we differentiate term by term.

The probability $t^{|W|}(1-t)^{N-|W|}$ is a product with one
factor for each vertex: $t$ for a vertex in $W$, and $1-t$ for
a vertex outside $W$. We apply the product rule and group the
resulting terms according to the vertex whose factor is
differentiated. Fix $v\in V$, and let
$S\subseteq V\setminus\{v\}$. The positive term associated with
$v$ comes from $W=S\cup\{v\}$, and the negative term comes
from $W=S$. Both have the factor
$t^{|S|}(1-t)^{N-1-|S|}$, with coefficients $f(S\cup\{v\})$
and $-f(S)$, respectively. Summing these contributions gives,
for $0<t<1$,
\[
 G'(t)=\sum_{v\in V}\ \sum_{S\subseteq V\setminus\{v\}}
 \bigl(f(S\cup\{v\})-f(S)\bigr)
 t^{|S|}(1-t)^{N-1-|S|}.
\]
The inner sum is the expectation of $f(S\cup\{v\})-f(S)$ when
$S$ is formed by selecting each vertex of $V\setminus\{v\}$
independently with probability $t$. This difference is $1$ when
$A_v$ lies in the span of the columns indexed by $S$, and $0$
otherwise. Its expectation is therefore the probability of that
event, namely $h_v(t)$. Hence
\[
 G'(t)=\sum_{v\in V}h_v(t)=Nh(t).
\]
As $G(0)=0$, integration proves the nullity identity. Moreover,
\[
 F(t)=Nt-G(t)=N\int_0^t(1-h(u))\,du.
\]
At $t=1$, all columns are selected and $F(1)=m$. Dividing the
last identity at $t=1$ by $N$ gives the area identity, and
subtracting its value at $t$ from its value at $1$ gives
\[
 m-F(t)=N\int_t^1(1-h(u))\,du.
\]
\end{proof}

The function $h$ also has a coding interpretation. In the dual
Reed--Muller code $\RM(n,n-d-1)=\ker_{\F_2}A$, it is the
probability of failure to recover a fixed erased coordinate when
each other coordinate is erased independently with probability $t$
(see Kudekar et al.~\cite[Definition~2 and Proposition~4(a)]{KudekarEtAl2017}).
Indeed, failure occurs precisely when there is a word $z$ in this
code with $z_v=1$ and support contained in $U_t^{(v)}\cup\{v\}$.
The equation $Az=0$ is then exactly the column dependence defining
$h_v(t)$. In this interpretation, the first identity in
\eqref{eq:central-area-rank} is equivalent to the area identity
in \cite[Proposition~4(c)]{KudekarEtAl2017}, and $G'(t)=Nh(t)$
also follows from Proposition~1 of that paper. Coordinate recovery
also underlies the approach in the author's centrality
paper~\cite{Anthony2026}.

For Theorem~\ref{thm:near-central}(i), the nullity identity reduces
the problem to bounding $\int_0^\alpha h(t)\,dt$. For part~(ii),
we sample at $t>\alpha$ and bound the rank deficiency through
$\int_t^1(1-h(u))\,du$. The following lemma supplies both bounds.

The affine symmetries fixing one vertex act transitively on the
other vertices, so each selection coordinate has the same influence
on the event defining $h$. O'Donnell's sharp-threshold inequality
for biased sampling measures then forces a rapid increase of $h$.
The factor $1/[t\ln(e/t)]$ in the differential inequality is
important here, since the sampling probability $\alpha$ may tend
to zero.

\begin{lemma}[A differential inequality and integral bounds]
\label{lem:recovery-bounds}
There is an absolute constant $c_2>0$ such that, with
$\kappa=c_2n$, the following hold. For $0<p<1$, put
$b_p=\ln(e/p)$.
\begin{enumerate}
\item[(i)] For $0<t<1$,
\begin{equation}\label{eq:uniform-threshold}
 \frac{d}{dt}\ln\frac{h(t)}{1-h(t)}
 \ge\frac{\kappa}{t\ln(e/t)}.
\end{equation}
\item[(ii)] If $\kappa>1$ and $h(p)\le1/2$, then
\begin{equation}\label{eq:recovery-left}
 \int_0^p h(t)\,dt\le\frac{p b_p}{\kappa-1}.
\end{equation}
\item[(iii)] Put $z_p=(1-h(p))/h(p)$. If $b_p<\kappa$, then
\begin{equation}\label{eq:recovery-tail}
 \int_p^1(1-h(t))\,dt\le\frac{z_ppb_p}{\kappa-b_p},
\end{equation}
and consequently
\begin{equation}\label{eq:recovery-rank-bound}
 m-F(p)\le\frac{Nz_ppb_p}{\kappa-b_p}.
\end{equation}
\end{enumerate}
\end{lemma}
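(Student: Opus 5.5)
We prove (i) with O'Donnell's sharp-threshold inequality for the $t$-biased product measure, and then obtain (ii) and (iii) by integrating the resulting differential inequality.

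\textbf{Part (i).} Fix a vertex $v$ and regard the event $E_v$ that $A_v$ lies in the span of $\{A_u:u\in U_t^{(v)}\}$ as a monotone increasing event on $\{0,1\}^{V\setminus\{v\}}$ under the $t$-biased product measure. By the Margulis--Russo formula, $h'(t)$ equals the total influence $\sum_u I_u(t)$ of this event, suitably normalised by $t(1-t)$ according to the convention used.

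Next we use symmetry. The affine maps $\sigma(x)=Lx+v'$ fixing $v$ act transitively on $V\setminus\{v\}$, and by the relation $A_{\sigma(x)}=M_\sigma A_x$ they preserve $E_v$. Hence all $N-1$ influences are equal.

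O'Donnell's inequality \cite[(10.8)]{ODonnell2014} for transitive-symmetric monotone functions under $\pi_t$ then gives total influence at least of order $h(1-h)\ln(N-1)/\bigl(t\ln(e/t)\bigr)$. Since $\ln(N-1)\ge c n$, this yields
\[
h'(t)\ge \frac{c_2n\,h(1-h)}{t\ln(e/t)},
\]
which is \eqref{eq:uniform-threshold}. Here the factor $t$ accounts for the $p$-biased normalisation, and the $\log(1/t)$ loss is the small-$p$ form of the inequality. The main obstacle is matching O'Donnell's normalisation exactly, and I would check it against his statement for $p$-biased KKL. If $h\in\{0,1\}$ at some $t$, the inequality is read trivially, since $h$ is a polynomial that is strictly between $0$ and $1$ on $(0,1)$ whenever $E_v$ is nontrivial.

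\textbf{Part (ii).} Put $\phi(t)=\ln(h/(1-h))$ and integrate (i) from $t$ to $p$. Since $\frac{d}{dt}\ln\ln(e/t)=-1/(t\ln(e/t))$, we get
\[
\phi(p)-\phi(t)\ge\kappa\bigl[\ln b_t-\ln b_p\bigr].
\]
Because $h(p)\le1/2$ gives $\phi(p)\le0$, it follows that $h(t)\le h/(1-h)\le (b_p/b_t)^{\kappa}$ for $t\le p$. We then need the estimate
\[
\int_0^p (b_p/b_t)^\kappa\,dt\le \frac{pb_p}{\kappa-1}.
\]
To prove it, substitute $s=b_t=\ln(e/t)$, so that $t=e^{1-s}$ and $dt=-e^{1-s}ds$. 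The integral becomes $b_p^\kappa\int_{b_p}^\infty s^{-\kappa}e^{1-s}ds$. Bounding $e^{1-s}\le e^{1-b_p}=p$ gives $p\,b_p^\kappa\, b_p^{1-\kappa}/(\kappa-1)=pb_p/(\kappa-1)$.

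\textbf{Part (iii).} Integrate (i) from $p$ to $t\ge p$. This gives
\[
\frac{1-h(t)}{h(t)}\le z_p\,(b_t/b_p)^{\kappa},
\]
so $1-h(t)\le z_p(b_t/b_p)^\kappa$. Using the same substitution, $\int_p^1 (b_t/b_p)^\kappa dt=b_p^{-\kappa}\int_1^{b_p}s^\kappa e^{1-s}ds$. The integrand $s^\kappa e^{-s}$ is increasing for $s<\kappa$. A comparison with $\frac{d}{ds}\bigl(s^{\kappa}e^{1-s}\bigr)=s^{\kappa}e^{1-s}(\kappa/s-1)$, which is at least $s^\kappa e^{1-s}(\kappa-b_p)/b_p$ on $[1,b_p]$, bounds the integral by $\frac{b_p}{\kappa-b_p}b_p^\kappa e^{1-b_p}$. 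This gives \eqref{eq:recovery-tail}. Finally, \eqref{eq:recovery-rank-bound} follows from the identity $m-F(p)=N\int_p^1(1-h)$ in Lemma~\ref{lem:recovery-integrals}.
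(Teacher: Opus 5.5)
Your proof follows the paper's: O'Donnell's inequality with affine transitivity for (i), then integrating the log-odds inequality for (ii) and (iii). In (ii), your substitution $s=\ln(e/t)$ is the paper's $t=pe^{-u}$. In (iii), the paper uses $1-x\le e^{-x}$ to get $1-h(t)\le z_p(p/t)^{\kappa/b_p}$ and integrates this power over $[p,\infty)$. You instead use the comparison $g'(s)\ge g(s)(\kappa-b_p)/b_p$ for $g(s)=s^\kappa e^{1-s}$ on $[1,b_p]$. Both give exactly $z_ppb_p/(\kappa-b_p)$, and your version is correct.

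The gap is in (i): you never show that the event is non-constant. You write ``whenever $E_v$ is nontrivial'' but do not prove it. Without this, $h$ could be identically $0$ and \eqref{eq:uniform-threshold} would be undefined rather than ``trivially'' true. This actually happens at $d=n$, where the columns form a basis of $\F_2^N$ and $h\equiv0$. The paper checks it as follows, and this is where the degree hypothesis $d<n$ is needed:
\begin{itemize}
\item The constant row gives $A_v\ne0$, so the event fails when nothing is selected.
\item The row indexed by $J$ has $2^{n-|J|}$ ones, which is even because $|J|\le d<n$. Hence $\sum_{x\in V}A_x=0$ over $\F_2$, so $A_v=\sum_{x\ne v}A_x$ and the event holds when everything is selected.
\end{itemize}

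The normalisation you flag is a short computation. O'Donnell's bound has $\sigma^2\ln(e/\sigma^2)$ in the denominator, with $\sigma^2=4t(1-t)$, and $\sigma^2\ln(e/\sigma^2)\le4t\ln(e/t)$ for all $0<t<1$. For $t\le1/2$, use $2t\le\sigma^2\le4t$. For $t\ge1/2$, use $\sigma^2\ln(e/\sigma^2)\le1\le4t\ln(e/t)$. Together with $\ln(N-1)\ge(\ln2/2)n$, this gives $\kappa=c_2n$.
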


The differential inequality follows from O'Donnell's
sharp-threshold inequality~\cite[Section~10.3, equation~(10.8)]{ODonnell2014},
using the affine symmetries of the evaluation columns over $\F_2$.
Integrating it gives the two integral bounds. The detailed proof
is given in Appendix~\ref{app:recovery-estimates}.

\begin{lemma}[Expected nullity at sampling probability $\alpha$]
\label{lem:central-nullity}
Recall that the evaluation matrix $A$ has $m$ rows and
$N=2^n$ columns. Put $\alpha=m/N$ and
$a=\ln(1/\alpha)=\ln(N/m)$.
There are absolute constants $\delta_0,C_I>0$ and $n_0$ such that,
whenever $n\ge n_0$, $1\le d<n/2$ and $a\le\delta_0n$,
\begin{equation}\label{eq:central-nullity}
 \int_0^\alpha h(t)\,dt
   =\int_\alpha^1(1-h(t))\,dt
   \le C_I\alpha\frac an.
\end{equation}
Consequently, if $U\subseteq V$ includes each vertex independently
with probability $\alpha$, then
\[
 \E\bigl(|U|-\rank_{\F_2}A_U\bigr)
 =m-\E\rank_{\F_2}A_U
 \le C_Im\frac an.
\]
\end{lemma}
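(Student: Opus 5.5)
The equality of the two integrals follows from the area identity in Lemma~\ref{lem:recovery-integrals}. Since $\int_0^1(1-h)=\alpha$, we have
\[
\int_0^\alpha h(t)\,dt=\alpha-\int_0^\alpha(1-h)=\int_\alpha^1(1-h).
\]
The consequence follows from the other two identities at $t=\alpha$. These give $\E(|U|-\rank)=N\int_0^\alpha h$ and $m-F(\alpha)=N\int_\alpha^1(1-h)$, and $\E|U|=N\alpha=m$. So the whole task is to bound one of the two integrals by $C_I\alpha a/n$. I will use whichever of parts (ii) and (iii) of Lemma~\ref{lem:recovery-bounds} applies, according to whether $h(\alpha)\le1/2$ or $h(\alpha)>1/2$.

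Note first that $b_\alpha=\ln(e/\alpha)=1+a$. With $\kappa=c_2n$ and $a\le\delta_0n$, choosing $\delta_0\le c_2/4$ gives $b_\alpha\le 1+\delta_0 n\le\kappa/2$ for large $n$, and also $\kappa>1$.

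\emph{Case $h(\alpha)\le1/2$.} Part (ii) with $p=\alpha$ gives
\[
\int_0^\alpha h\le\frac{\alpha(1+a)}{\kappa-1}\le\frac{4\alpha a}{c_2n}.
\]
This uses $1+a\le 3a$, which holds since $a\ge\ln2$.

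\emph{Case $h(\alpha)>1/2$.} Then $z_\alpha<1$, and part (iii) gives
\[
\int_\alpha^1(1-h)\le\frac{\alpha b_\alpha}{\kappa-b_\alpha}\le\frac{2\alpha(1+a)}{\kappa}\le\frac{6\alpha a}{c_2n}.
\]

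Either way, $C_I=6/c_2$ works. The real content lies in the sharp-threshold inequality of Lemma~\ref{lem:recovery-bounds}, which we may assume. The only delicate point here is ensuring $b_\alpha<\kappa$ with room to spare, which is exactly what the hypothesis $a\le\delta_0 n$ provides. The restriction $d<n/2$ guarantees $\alpha\le1/2$, so $a\ge\ln 2$ and $\alpha\in(0,1)$.
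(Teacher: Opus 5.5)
Your proposal is correct and follows the paper's proof essentially step for step. You split on whether $h(\alpha)\le1/2$, apply Lemma~\ref{lem:recovery-bounds}(ii) or (iii) at $p=\alpha$ with $b_\alpha=a+1$, take $\delta_0$ small so that both denominators are of order $c_2n$, and use $a\ge\ln2$ to absorb the $+1$; your explicit choice $C_I=6/c_2$ is a valid instance of the paper's constant.
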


\begin{proof}
The equality of the two integrals follows from the first identity
in \eqref{eq:central-area-rank}. Write $I$ for their common value.
To bound $I$, we may therefore use either integral.
We apply Lemma~\ref{lem:recovery-bounds} with $p=\alpha$, so
\[
 b_\alpha=\ln(e/\alpha)=a+1.
\]

If $h(\alpha)\le1/2$, part~(ii) of
Lemma~\ref{lem:recovery-bounds} bounds the
first integral:
\[
 I\le\frac{\alpha(a+1)}{\kappa-1}.
\]
If $h(\alpha)\ge1/2$, then
$(1-h(\alpha))/h(\alpha)\le1$, so part~(iii) bounds the
second integral:
\[
 I\le\frac{\alpha(a+1)}{\kappa-a-1}.
\]

Recall that $\kappa=c_2n$. Since $a\le\delta_0n$, we can
choose $\delta_0$ sufficiently small and $n_0$ sufficiently
large that both denominators are at least $c_2n/2$.
These choices also ensure the hypotheses of Lemma~\ref{lem:recovery-bounds}(ii) and (iii).
Thus, in either case,
\[
 I\le\frac{2\alpha(a+1)}{c_2n}
   \le C_I\alpha\frac an,
\]
where the last inequality uses $a\ge\ln2$.

By the second identity in \eqref{eq:central-area-rank}, the
expected nullity is $NI\le C_Im a/n$. Since $\E|U|=N\alpha=m$,
the expected nullity also equals $m-\E\rank_{\F_2}A_U$.
\end{proof}

\subsubsection{Proof of Theorem~\ref{thm:near-central}(i)}

\begin{proof}
Choose $U\subseteq V$ by including each vertex independently
with probability $\alpha$, so that $\E|U|=m$. Taking
$\delta_0\le\min\{1,1/C_I\}$, Lemma~\ref{lem:central-nullity} gives
\[
 \frac{\E(|U|-\rank_{\F_2}A_U)}N
 \le C_I\alpha\frac an\le\frac12.
\]
We may therefore use this upper bound in the increasing part of
$H_2$. The bound \eqref{eq:entropy-transfer} yields
\[
 \log_2T_{n,d}\ge
 \Lambda-NH_2\!\left(C_I\alpha\frac an\right)
 -O\!\left(\frac{mn^2}{2^d}+n\right).
\]

Using $H_2(x)\le x\ln(e/x)/\ln2$, $N\alpha=m$, and
$\ln(1/\alpha)=a$, we obtain
\[
\begin{aligned}
 NH_2\!\left(C_I\alpha\frac an\right)
 &\le \frac{C_I}{\ln2}\,m\frac an
       \{1+a+\ln(n/a)-\ln C_I\}\\
 &\le K m\frac an\{a+\ln(n/a)+1\}
\end{aligned}
\]
for an absolute constant $K>0$.
By Lemma~\ref{lem:cover-estimates}(iii) and
Subsection~\ref{sec:central-common-estimates},
$\log_2Q_{n,d}\le\Lambda+1$ and
$\log_2Q_{n,d}\ge c\,m(a+1)$ for an absolute $c>0$, while the
normalised deletion error is $O(e^{-c'n})$. Consequently,
\[
\begin{aligned}
 E_{n,d}
 &\le \frac{NH_2(C_I\alpha a/n)}{\log_2Q_{n,d}}+O(e^{-c'n})\\
 &\le \frac Kc\frac an
        \left\{1+\frac{\ln(n/a)}{a+1}\right\}+O(e^{-c'n})\\
 &\le \frac Kc\frac{a+\ln(n/a)}n+O(e^{-c'n}).
\end{aligned}
\]
The final inequality uses $a/n\le\delta_0\le1$ and
$a/(a+1)\le1$. Absorbing the exponentially small term as noted in
Subsection~\ref{sec:central-common-estimates}, and using
$T_{n,d}\le Q_{n,d}$, proves
\[
 0\le E_{n,d}\le C\frac{a+\ln(n/a)}n
\]
for an absolute $C>0$.

Finally, if $n/2-d=o(n)$, then $a=o(n)$, and
\[
 \frac an\longrightarrow0,\qquad
 \frac{\ln(n/a)}n\le\frac{\ln(n/\ln2)}n\longrightarrow0.
\]
Thus $E_{n,d}\to0$. Together with
$\log_2Q_{n,d}\sim\Lambda$ from
Lemma~\ref{lem:cover-estimates}(iii), this gives
\[
 \log_2T_{n,d}\sim\log_2Q_{n,d}\sim NH_2(\alpha).
\]
\end{proof}

\subsubsection{Sampling more columns for Theorem~\ref{thm:near-central}(ii)}

The proof of Theorem~\ref{thm:near-central}(ii) is given in
Appendix~\ref{app:near-central-ii}. Here we introduce the sampling
choice and state the bound on the expected rank of the selected
matrix.

Recall that $\alpha=m/N$ and $a=\ln(1/\alpha)$. In part~(i),
we selected $m$ columns on average. For part~(ii), we select
slightly more columns, with the aim of making their expected rank
closer to $m$.

Choose $U\subseteq V$ by including each vertex independently with
probability
\[
 q=(1+\eta)\alpha,\qquad 0<\eta\le1/10.
\]
Thus $\E|U|=(1+\eta)m$: the parameter $\eta$ specifies the
additional $\eta m$ columns selected on average. Since $A_U$ has $m$ rows, its rank is at most $m$. Define
\[
 \rho=\frac{m-\E\rank_{\F_2}A_U}{m}\ge0.
\]
Thus $\rho$ is the difference between $m$ and the expected rank,
as a proportion of $m$, and
$\E\rank_{\F_2}A_U=(1-\rho)m$. Consequently,
\[
 \E(|U|-\rank_{\F_2}A_U)
 =(1+\eta)m-(1-\rho)m=m(\eta+\rho).
\]
The next lemma bounds $\rho$. We then use this expression for the
expected nullity in the counting bound \eqref{eq:entropy-transfer}.

\begin{lemma}[Expected rank]
\label{lem:retention-excess}
There are absolute constants $\delta_0,\mu,C_0>0$ and $n_0$ such
that the following holds. Suppose $n\ge n_0$, $1\le d<n/2$,
$a\le\delta_0n$ and $C_0a/n\le\eta\le1/10$. Choose
$U\subseteq V$ by including each vertex independently with
probability $q=(1+\eta)\alpha$, and put
\[
 \rho=\frac{m-\E\rank_{\F_2}A_U}{m}.
\]
Then
\[
 0\le\rho\le C_0\frac an e^{-\mu n\eta/a}\le\eta.
\]
\end{lemma}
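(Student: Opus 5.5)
The plan is to work at the sampling probability $q=(1+\eta)\alpha$ and to use the rank identity of Lemma~\ref{lem:recovery-integrals}, in the form of Lemma~\ref{lem:recovery-bounds}(iii) with $p=q$:
\[
 \rho=\frac{m-F(q)}{m}\le\frac{N}{m}\cdot\frac{z_q\,q\,b_q}{\kappa-b_q}
 =\frac{(1+\eta)\,b_q\,z_q}{\kappa-b_q},
 \qquad z_q=\frac{1-h(q)}{h(q)}.
\]
Since $q\ge\alpha$, we have $b_q\le b_\alpha=a+1\le(1+1/\ln2)a$. With $\delta_0$ small and $n_0$ large, the denominator satisfies $\kappa-b_q\ge c_2n/2$. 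The prefactor is therefore $O(a/n)$, and everything reduces to showing $z_q\le e^{-\mu n\eta/a}$. The lower bound $\rho\ge0$ is immediate, since $\rank_{\F_2}A_U\le m$.

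\emph{Step 1: $h$ is at least $1/2$ somewhat above $\alpha$.} The differential inequality \eqref{eq:uniform-threshold} shows that the log-odds of $h$ are increasing, so $h$ is non-decreasing. Put $p'=(1+\eta/2)\alpha$. Lemma~\ref{lem:central-nullity} gives $\int_\alpha^1(1-h)\le C_I\alpha a/n$. Since $1-h(t)\ge1-h(p')$ on $[\alpha,p']$, it follows that
\[
 (1-h(p'))\,\frac{\eta\alpha}{2}\le C_I\alpha\frac an,
\]
and hence $1-h(p')\le 2C_Ia/(n\eta)\le 2C_I/C_0\le1/2$ once $C_0\ge4C_I$. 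Thus $h(p')\ge1/2$, that is, $z_{p'}\le1$.

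\emph{Step 2: sharp threshold between $p'$ and $q$.} Integrating \eqref{eq:uniform-threshold} from $p'$ to $q$, and using $\ln(e/t)\le a+1$ for $t\ge\alpha$, gives
\[
 \ln\frac{z_{p'}}{z_q}\ge\kappa\int_{p'}^{q}\frac{dt}{t\ln(e/t)}
 \ge\frac{\kappa}{a+1}\ln\frac{1+\eta}{1+\eta/2}\ge\frac{c_2n\,\eta}{3(a+1)}
\]
for $\eta\le1/10$. Combined with $z_{p'}\le1$ and $a+1\le(1+1/\ln2)a$, this yields $z_q\le e^{-\mu n\eta/a}$ for a suitable absolute $\mu>0$. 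Substituting into the display above gives $\rho\le C'(a/n)e^{-\mu n\eta/a}$ with $C'$ absolute.

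\emph{Step 3: constants.} Take $C_0\ge\max\{C',4C_I\}$. Then the first inequality holds. For the last inequality, $C_0(a/n)e^{-\mu n\eta/a}\le C_0a/n\le\eta$ by the hypothesis on $\eta$. The constant $\delta_0$ is the minimum of the value from Lemma~\ref{lem:central-nullity} and the value needed for $\kappa-a-1\ge c_2n/2$ and $b_q<\kappa$. The latter is also the hypothesis required to apply Lemma~\ref{lem:recovery-bounds}(iii).

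The main obstacle is Step 1, namely securing a point slightly above $\alpha$ at which $h$ is bounded away from $0$. Without it the odds ratio $z$ could start arbitrarily large, and the exponential gain from the threshold would be lost. The monotonicity argument above, combined with the area bound of Lemma~\ref{lem:central-nullity}, is what I would try first. It is exactly where the lower bound $\eta\ge C_0a/n$ enters.
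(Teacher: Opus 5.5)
Your proposal is correct and follows essentially the same route as the paper. You use the area bound of Lemma~\ref{lem:central-nullity} to find a point just above $\alpha$ where $h\ge 1/2$, integrate \eqref{eq:uniform-threshold} from there up to $q$ to get $z_q\le e^{-\mu n\eta/a}$, and then apply \eqref{eq:recovery-rank-bound} with $p=q$. The only difference is that the paper locates the median $\theta$ with $h(\theta)=1/2$ through the two-sided estimate $|\theta-\alpha|\le 2C_I\alpha a/n$, whereas you show $h((1+\eta/2)\alpha)\ge 1/2$ directly by monotonicity, which is an equivalent one-sided version.
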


The proof of Lemma~\ref{lem:retention-excess} is given in
Appendix~\ref{app:near-central-ii}. To apply the lemma, recall that
$\E(|U|-\rank_{\F_2}A_U)=m(\eta+\rho)$ and $m/N=\alpha$.
Substituting $q=(1+\eta)\alpha$ in \eqref{eq:entropy-transfer}
therefore gives
\[
 \log_2T_{n,d}\ge
 NH_2((1+\eta)\alpha)-NH_2(\alpha(\eta+\rho))
 -O\!\left(\frac{mn^2}{2^d}+n\right).
\]
We compare this with $\log_2Q_{n,d}\le NH_2(\alpha)+1$,
combining the sampling entropy and nullity penalty so that the
contributions $\eta ma/\ln2$ from the extra columns cancel.
Appendix~\ref{app:near-central-ii} then uses the exponential bound
on $\rho$ in Lemma~\ref{lem:retention-excess} and chooses $\eta$
to prove Theorem~\ref{thm:near-central}(ii).

\subsubsection{Consequences of Theorem~\ref{thm:near-central}}

\Needspace{4\baselineskip}
The two parts of Theorem~\ref{thm:near-central} complement one
another. When $a=\Theta((\ln n)^2)$, part~(ii) gives
$E_{n,d}=O(\ln n\,\ln\ln n/n)$, whereas part~(i) gives
$O((\ln n)^2/n)$. When $a=\Theta(\ln n)$, part~(i) is sharper.

For odd $n$, Theorem~\ref{thm:near-central} includes the lower
central degree $\lfloor n/2\rfloor$. The upper central degree
$\lceil n/2\rceil$ is covered by the author's stronger centrality
theorem~\cite[Theorem~1]{Anthony2026}.

Fix $K>0$ and consider degrees satisfying
\[
 0<n/2-d\le K\sqrt n.
\]
In this window, the binomial probability $\alpha$ is bounded
below by a positive constant depending only on $K$.
Theorem~\ref{thm:near-central}(i) therefore gives
\begin{equation}\label{eq:central-window-rate}
 E_{n,d}\le C_K\frac{\ln n}{n}
\end{equation}
for all sufficiently large $n$, where $C_K$ depends only on $K$
and the bound holds for every degree in the window.

To illustrate the improvement supplied by
Theorem~\ref{thm:near-central}(ii), take
$d=\lfloor n/2-n^{3/4}\rfloor$. The bounds in
\eqref{eq:binomial-entropy} and expansion about $1/2$ give
$a\sim2\sqrt n$. For this degree, part~(i) gives
$E_{n,d}=O(n^{-1/2})$, whereas part~(ii) gives
\begin{equation}\label{eq:moderate-window-rate}
 E_{n,d}=O((\ln n)^2/n).
\end{equation}

We can make the asymptotic formula for $\log_2T_{n,d}$ more
explicit when $d$ is within $O(\sqrt n)$ of $n/2$.
Suppose that $d<n/2$ and
$d=n/2-\tau\sqrt n+o(\sqrt n)$ for a fixed $\tau\ge0$. If $B_n$ is
binomial with parameters $n$ and $1/2$, then
\[
 \alpha=\Pr(B_n\le d)\longrightarrow\Phi(-2\tau),
\]
where $\Phi$ is the standard normal distribution function
\cite{Feller1968}. Theorem~\ref{thm:near-central} therefore gives
\[
 \log_2T_{n,d}\sim2^nH_2(\Phi(-2\tau)).
\]
For example, when $\tau=1$, the coefficient $H_2(\Phi(-2))$ is
approximately $0.157$.

When $n/2-d=O(\sqrt n)$, so that $\alpha$ is bounded away from $0$,
the block-recovery theorem of
Kudekar et al.~\cite[Theorem~27]{KudekarEtAl2017}, applied to the
dual code $\ker_{\F_2}A$ and combined with
\eqref{eq:entropy-transfer}, also gives
$\log_2T_{n,d}\sim NH_2(\alpha)$, without a rate.
Theorem~\ref{thm:near-central} supplies the explicit bounds and
also covers $\alpha\to0$ with $a=o(n)$.

Related recursive recovery estimates for nested Reed--Muller
codes on the erasure channel appear in Pfister and
Reeves~\cite[Section~III]{PfisterReeves2025}.

\section{Deduction of the all-degree theorem}
\label{sec:all-degree-deduction}

We first verify the degree ranges stated after
Theorem~\ref{thm:near-central}, and then combine
Theorems~\ref{thm:main}, \ref{thm:away-centre}, and
\ref{thm:near-central} with the degree-one estimate to prove
Theorem~\ref{thm:all-degrees}.

\subsection{The degree ranges in Theorem~\ref{thm:near-central}}
\label{sec:central-degree-ranges}

For $1\le d<n/2$, recall that $a=\ln(2^n/m_{n,d})$.
Applying \eqref{eq:binomial-entropy} to $m=m_{n,d}$ gives
\[
 n\ln2\{1-H_2(d/n)\}\le a
 \le n\ln2\{1-H_2(d/n)\}+\ln(n+1).
\]
If $n/2-d=o(n)$, then $d/n\to1/2$ and $H_2(d/n)\to1$, so the
upper bound gives $a=o(n)$. Conversely, if $a=o(n)$, the lower
bound forces $H_2(d/n)\to1$. On $[0,1/2]$, the entropy approaches
$1$ only when its argument approaches $1/2$. Thus $d/n\to1/2$,
which means $n/2-d=o(n)$.

We now use these bounds to describe the degree ranges in
Theorem~\ref{thm:near-central}. We first record the entropy estimates
needed below. Since
\[
 H_2(1/2)=1,\qquad H_2'(1/2)=0,\qquad
 H_2''(u)=-\frac{1}{(\ln2)u(1-u)}\le-\frac4{\ln2},
\]
Taylor's theorem gives
\[
 1-H_2(u)\ge\frac{2}{\ln2}(u-1/2)^2
 \qquad(0<u<1).
\]
Also, $H_2$ is symmetric about $1/2$, so its odd derivatives there
vanish. Taylor's theorem therefore gives
\begin{equation}\label{eq:entropy-centre-expansion}
 H_2(1/2-\varepsilon)
 =1-\frac{2\varepsilon^2}{\ln2}+O(\varepsilon^4)
 \qquad(\varepsilon\to0).
\end{equation}

For part~(i), we require $a\le\delta_0n$. Since $H_2$ is continuous
and $H_2(1/2)=1$, we can choose a fixed $c_0>0$ sufficiently small
that
\[
 (\ln2)\{1-H_2(1/2-c_0)\}<\frac{\delta_0}{2}.
\]
Whenever $n/2-c_0n\le d<n/2$, monotonicity of $H_2$ on $[0,1/2]$
and the upper bound for $a$ give
\[
 \frac an
 \le(\ln2)\{1-H_2(d/n)\}+\frac{\ln(n+1)}n
 \le\frac{\delta_0}{2}+\frac{\ln(n+1)}n
 \le\delta_0
\]
for all sufficiently large $n$. Thus part~(i) applies throughout
this degree range.

To see where its hypothesis must fail, the lower bound for $a$ gives
\[
 a\ge n\ln2\{1-H_2(d/n)\}
 \ge\frac{(n-2d)^2}{2n}.
\]
If $d\le n/2-cn$, then $a\ge2c^2n$. Hence the hypothesis fails
whenever $c>\sqrt{\delta_0/2}$. These bounds show that the range of
part~(i) extends a distance proportional to $n$ below the centre.

For part~(ii), we require
\[
 \frac an\ln(a+2)\le\delta_1.
\]
Fix $0<c_1<\sqrt{\delta_1/2}$. We check this condition throughout
\[
 \frac n2-\frac{c_1n}{\sqrt{\ln n}}\le d<\frac n2.
\]
Since $a$ decreases as $d$ increases, and $a\ln(a+2)$ increases with
$a>0$, it is enough to check the smallest integer degree in this
range. At that degree,
\[
 \frac dn=\frac12-\frac{c_1}{\sqrt{\ln n}}+O(1/n).
\]
Combining the binomial bounds with the expansion
\eqref{eq:entropy-centre-expansion} gives
\[
 a=\frac{2c_1^2n}{\ln n}
   +O\!\left(\frac{n}{(\ln n)^2}+\ln n\right)
 \sim\frac{2c_1^2n}{\ln n}.
\]
It follows that $\ln(a+2)\sim\ln n$, and hence
\[
 \frac an\ln(a+2)\longrightarrow2c_1^2<\delta_1.
\]
The hypothesis therefore holds throughout the stated range for all
sufficiently large $n$.

The same calculation shows that, for any fixed
$c>\sqrt{\delta_1/2}$, the hypothesis fails throughout
$d\le n/2-cn/\sqrt{\ln n}$ for all sufficiently large $n$.
Thus the range of part~(ii) extends a distance of order
$n/\sqrt{\ln n}$ below the centre.

\subsection{Proof of Theorem~\ref{thm:all-degrees}}
\label{sec:all-degree-proof}

\begin{proof}[Proof of Theorem~\ref{thm:all-degrees}]
For $d=1$, the quantitative degree-one estimate of Kahn, Koml\'os
and Szemer\'edi~\cite[Section~4]{KahnKomlosSzemeredi1995},
recalled above, gives
\[
 \log_2Q_{n,1}-\log_2T_{n,1}=O(n).
\]
Since $\log_2Q_{n,1}\sim n^2$, it follows that
$E_{n,1}=O(1/n)$. We may therefore suppose that $d\ge2$.

With $c_1$ as in Subsection~\ref{sec:central-degree-ranges}, put
\[
 \varepsilon_n=\frac{c_1}{\sqrt{\ln n}}.
\]
All estimates below hold for sufficiently large $n$, with absolute
constants independent of $d$.

First suppose $2\le d\le(1/2-\varepsilon_n)n$.
Theorem~\ref{thm:away-centre} requires
$n^{-1/3}\le\varepsilon_n<1/2$. These inequalities hold for all
sufficiently large $n$, since $\varepsilon_n\to0$ and
\[
 n^{1/3}\varepsilon_n
 =\frac{c_1n^{1/3}}{\sqrt{\ln n}}\longrightarrow\infty.
\]
Its degree condition holds by our choice of range. Hence
\[
 E_{n,d}\le\frac{C\ln(e/\varepsilon_n)}{n\varepsilon_n^2}
 =O\!\left(\frac{\ln n\,\ln\ln n}{n}\right),
\]
where we used $\varepsilon_n^2=c_1^2/\ln n$ and
$\ln(e/\varepsilon_n)=1+\tfrac12\ln\ln n-\ln c_1$.

Now suppose $(1/2-\varepsilon_n)n<d<n/2$. Since
$\varepsilon_n=c_1/\sqrt{\ln n}$, this lies in the degree range
for Theorem~\ref{thm:near-central}(ii) established in
Subsection~\ref{sec:central-degree-ranges}. Recall that
$a=\ln(2^n/m_{n,d})$. Hence that theorem gives
\[
 E_{n,d}\le\frac{C\{1+\ln(a+2)\ln(n/a)\}}n.
\]
Since $m\ge1$, we have $a\le n\ln2$, so $a+2\le n$ and hence
$\ln(a+2)\le\ln n$ for all sufficiently large $n$. Also, $d<n/2$ implies
$m_{n,d}\le2^{n-1}$, so $a\ge\ln2$ and
\[
 \ln(n/a)\le\ln n-\ln(\ln2)=\ln n+O(1).
\]
Consequently,
\[
 E_{n,d}\le\frac{C\{1+\ln n\,(\ln n+O(1))\}}n
 =O\!\left(\frac{(\ln n)^2}{n}\right).
\]
Since $\ln n\,\ln\ln n\le(\ln n)^2$ for $n\ge3$, the bound
from Theorem~\ref{thm:away-centre} is also
$O((\ln n)^2/n)$. This therefore bounds $E_{n,d}$ uniformly
throughout $2\le d<n/2$.

Finally, suppose $d\ge\lceil n/2\rceil$, and put
$d_0=\lfloor n/2\rfloor-1$. By symmetry of the binomial coefficients and the central binomial
estimate, $\alpha_0:=m_{n,d_0}/N=1/2-O(n^{-1/2})$.
Since $H_2(1/2)=1$ and $H_2'(1/2)=0$, Taylor's theorem gives
\[
 1-H_2(\alpha_0)
 =O\!\left((\alpha_0-1/2)^2\right)=O(1/n).
\]
Since $\ln(1/\alpha_0)=O(1)$, Theorem~\ref{thm:near-central}(i)
applies for all sufficiently large $n$ and gives
$E_{n,d_0}=O(\ln n/n)$.
Lemma~\ref{lem:cover-estimates}(iii) gives
\[
 \log_2Q_{n,d_0}=NH_2(\alpha_0)+O(n).
\]
The last two estimates give
$\log_2T_{n,d_0}\ge N\{1-O(\ln n/n)\}$.
Monotonicity in the degree now yields
\[
 N\{1-O(\ln n/n)\}\le\log_2T_{n,d_0}
 \le\log_2T_{n,d}\le\log_2Q_{n,d}\le N.
\]
Thus $E_{n,d}=O(\ln n/n)$ uniformly in the upper range.

Combining the degree-one estimate with the bounds in the three
ranges, and increasing the absolute constants $C$ and $n_0$ if
necessary, gives
\[
 0\le E_{n,d}\le C\frac{(\ln n)^2}{n}
 \qquad(n\ge n_0,\ 1\le d\le n).
\]
Since $(\ln n)^2/n\to0$, the asserted uniform logarithmic
equivalence follows.
\end{proof}

The author's centrality theorem~\cite[Theorem~1]{Anthony2026}
gives a substantially stronger bound in the upper range, although
it is not needed for the proof of Theorem~\ref{thm:all-degrees}.
More precisely, for $d\ge\lceil n/2\rceil$, it gives
\[
 T_{n,d}\ge
 \left\{1-\exp\!\left(-c\frac{2^{n/2}}n\right)\right\}2^N
\]
for an absolute $c>0$. Since $T_{n,d}\le Q_{n,d}\le2^N$,
the definition of $E_{n,d}$ yields
\[
 0\le E_{n,d}
 \le-\frac1N\log_2\left\{1-\exp\!\left(-c\frac{2^{n/2}}n\right)\right\}
 \le C2^{-n}\exp\!\left(-c\frac{2^{n/2}}n\right).
\]

\section{Conclusions and open questions}
\label{sec:conclusions}

We have established the logarithmic form of the enumeration
conjecture: Theorem~\ref{thm:all-degrees} shows that
$\log_2T_{n,d}\sim\log_2Q_{n,d}$ for every degree sequence,
with relative error $O((\log n)^2/n)$ uniformly over all degrees.
The author's earlier centrality theorem~\cite[Theorem~1]{Anthony2026} already
establishes the stronger equivalence $T_{n,d}\sim Q_{n,d}$ for
$d\ge\lceil n/2\rceil$. The results here complete the logarithmic
asymptotic below the central degree.

Theorem~\ref{thm:main} gives the sharper additive gap $O(n^2)$
throughout $2\le d\le10^{-4}n$.
Theorem~\ref{thm:away-centre} gives relative error $O(1/n)$ for
$2\le d\le(1/2-\varepsilon)n$, for each fixed $0<\varepsilon<1/2$.
Theorem~\ref{thm:near-central} gives relative error $O(\log n/n)$
in the window $0<n/2-d=O(\sqrt n)$, together with estimates
throughout $n/2-d=o(n)$.

\Needspace{9\baselineskip}
The stronger question concerns the count itself. Does
\[
 T_{n,d}\sim Q_{n,d}
\]
hold for every sequence $2\le d<n/2$? Together with the known results
at degree one and at or above $\lceil n/2\rceil$, this would establish
the full enumeration conjecture. The logarithmic equivalence proved
here allows a multiplicative discrepancy that is still large, and so
does not resolve this question.

It would also be interesting to determine how far the additive
$O(n^2)$ estimate extends beyond the small linear degree range, and
whether its order can be improved within that range. The
$O(m_{n,d})$ bound of Theorem~\ref{thm:away-centre}, for each fixed
$\varepsilon$, becomes much larger
than $n^2$ when the degree grows, so it leaves this question open.
The constant
$10^{-4}$ is a convenient choice for the present weight estimates.
The proof does not suggest a transition at this degree. More precise
control of the number of evaluation vectors in a random span could
give sharper counting estimates.

The order of the relative error also remains open. Can the
logarithmic factors in the uniform bound of
Theorem~\ref{thm:all-degrees} be reduced? The logarithmic product
in \eqref{eq:degree-dependent-error} is maximised near $a=\sqrt n$,
corresponding to $n/2-d$ of order $n^{3/4}$. This identifies a
window in which to seek a sharper bound, but does not show that
the actual error is largest there. The present estimates do not
determine the best rate throughout $n/2-d=o(n)$.

Finally, consider the evaluation-matrix formulation of Luh's
question~\cite[Section~1.4 and Lemma~5.1]{Luh2026} about the dimension
threshold. His tensor problem in $n+1$ coordinates corresponds to
evaluation vectors in $n$ variables, with $D(n+1,d)=m_{n,d}$ in his
notation. For which growing degrees in
the range $d\le10^{-4}n$ are
$m_{n,d}$ independently and uniformly sampled evaluation vectors
linearly independent over $\R$ with probability tending to one?
Our argument uses order $n$ fewer vectors than the dimension.
Independence of a full sample would be a stronger rank statement,
although an additional argument would still be needed to obtain a
sharper enumeration bound.

\appendix
\section{Weight estimates and full rank}
\label{app:weight-estimates}

\subsection{Verification of the weight and bias estimates}

We retain the notation and hypotheses of
Subsection~\ref{sec:large-bias}: $c=10^{-4}$, $2\le d\le cn$,
$\gamma=d/n$, $W=W_{n,d}$ and $L=\lceil3\log_2m\rceil$.

\begin{proof}[Proof of Lemma~\ref{lem:weight-estimates}]
\leavevmode\par
\smallskip\noindent\emph{Verification of \eqref{eq:low-weight-input}.}
Corollary~3.6 and Proposition~3.7 of
Sberlo--Shpilka~\cite{SberloShpilka2020} give, for $2\le j\le d$,
\[
 \log_2W(2^{-j})
 \le17n\sum_{i=j}^d(i-1)(i+2)
       +17\sum_{i=j}^d(i+2)m_{n-i+1,d-i+1}.
\]
For $1\le k\le h\le d$,
\[
\frac{\binom{n-k}{h-k}}{\binom nh}
=\prod_{a=0}^{k-1}\frac{h-a}{n-a}
\le\left(\frac hn\right)^k
\le\gamma^k,
\]
since $(h-a)/(n-a)\le h/n$ when $h\le n$ and $0\le a<k$.
By the definition of $m_{n-k,d-k}$, changing the summation index
and applying this inequality term by term gives
\[
m_{n-k,d-k}
=\sum_{h=k}^{d}\binom{n-k}{h-k}
\le\gamma^k\sum_{h=k}^{d}\binom nh
\le\gamma^k m.
\]
The last inequality extends the sum to $h=0,\ldots,d$, adding
non-negative terms. Taking $k=i-1$ gives
$m_{n-i+1,d-i+1}\le\gamma^{i-1}m$.
Putting $t=i-j+1$, we have $i+2=j+t+1\le(j+2)t$, and hence
\[
\sum_{i=j}^{d}(i+2)\gamma^{i-1}
\le(j+2)\gamma^{j-1}\sum_{t=1}^{\infty}t\gamma^{t-1}
=\frac{(j+2)\gamma^{j-1}}{(1-\gamma)^2}
\le(2j+3)\gamma^{j-1}.
\]
The equality follows by differentiating the geometric series.
For the last inequality, use $(1-\gamma)^{-2}\le3/2$, since
$\gamma\le10^{-4}$, and $\tfrac32(j+2)\le2j+3$.
For the first sum, $(i-1)(i+2)\le2i^2$ gives
\[
17n\sum_{i=j}^d(i-1)(i+2)
\le34n\sum_{i=1}^d i^2\le34nd^3.
\]
Combining the bounds on the two sums gives
\[
\log_2W(2^{-j})\le34nd^3+17(2j+3)\gamma^{j-1}m.
\]
To bound the second term, put $a_j=17(2j+3)2^j\gamma^{j-1}$
for $j\ge2$. Then
\[
\frac{a_{j+1}}{a_j}
=2\gamma\frac{2j+5}{2j+3}
\le\frac{18}{7}\gamma<1,
\]
since $\gamma\le10^{-4}$. Thus $a_j\le a_2=476\gamma<1/8$,
and hence $17(2j+3)\gamma^{j-1}\le2^{-j}/8$.
It remains to show that $34nd^3=o(m2^{-d})$, uniformly over the
stated degree range.
The quantity $2^{-d}\binom nd/d^3$ increases for $2\le d\le cn$
and large $n$, since its ratio at consecutive degrees is
\[
 \frac{n-d}{2(d+1)}\left(\frac d{d+1}\right)^3
 \ge\frac{1-c}{4c}\left(\frac23\right)^3>1.
\]
Here $d+1\le2cn$ and $d/(d+1)\ge2/3$.
Comparing with $d=2$ and using $m\ge\binom nd$ gives
\[
\frac{m2^{-d}}{nd^3}
\ge\frac{2^{-d}\binom nd}{nd^3}
\ge\frac{n-1}{64}\longrightarrow\infty.
\]
Since $34nd^3=o(m2^{-d})$, this gives
\eqref{eq:low-weight-input} for $2\le j\le d$.
For $j>d$, this holds because the minimum-weight bound
\eqref{eq:minimum-weight} gives $W(2^{-j})=1$.

\smallskip\noindent\emph{Verification of \eqref{eq:auxiliary-bias-bounds}.}
For the inequality $e^{-24\gamma L}\ge2^{-d}$, it suffices to show that
$24\gamma L\le d\ln2$. The bound $m\le(en/d)^d$ and the fact
that $t\log_2(e/t)$ is increasing for $0<t<1$ give
\[
\log_2m\le n\gamma\log_2(e/\gamma)
\le nc\log_2(e/c)<0.0015n.
\]
Since $L=\lceil3\log_2m\rceil$, we have
\[
L\le3\log_2m+1<0.0045n+1<0.005n
\]
for all sufficiently large $n$. As $\gamma=d/n$,
\[
24\gamma L<24\frac dn(0.005n)=0.12d<d\ln2.
\]
The uniform comparison $nd^3=o(m2^{-d})$ established in the
verification of \eqref{eq:low-weight-input} implies $n=o(m2^{-d})$,
proving the uniform limit in \eqref{eq:auxiliary-bias-bounds}.

\smallskip\noindent\emph{Verification of \eqref{eq:bias-tail}.}
We use the following consequence of Corollary~3.13 and
Proposition~3.14 of Sberlo and Shpilka~\cite{SberloShpilka2020}.
For positive integers $\ell,s$, put $t=2\ell+s+1$.
If $t<n$ and $d<n-t$, then, in our notation,
\[
W\!\left(\frac{1-2^{-\ell}}2\right)
\le 2^{nt+\left[1-\left(1-\frac d{n-t}\right)^t\right]m}
W(2^{1-s}).
\]
For $1\le\ell\le L$, we take $s=\ell+2$, so that $t=3\ell+3$.
The bound $L<0.005n$, proved in the verification of
\eqref{eq:auxiliary-bias-bounds}, gives $t\le3L+3<n/2$ for all
sufficiently large $n$. Since $d\le10^{-4}n<n/2<n-t$, the
cited bound applies.

To estimate its exponent, we use $d/(n-t)\le2\gamma<1/2$,
$t\le6\ell$, and $\ln(1-u)\ge-2u$ for $0\le u\le1/2$.
These give
\[
\left(1-\frac d{n-t}\right)^t\ge e^{-24\gamma\ell}.
\]
Substituting this estimate and $2^{1-s}=2^{-\ell-1}$ into
the bound from Sberlo and Shpilka gives
\begin{equation}\label{eq:weight-reduction-input}
 W\!\left(\frac{1-2^{-\ell}}2\right)
 \le2^{nt+(1-e^{-24\gamma\ell})m}W(2^{-\ell-1}).
\end{equation}
By \eqref{eq:low-weight-input} and $24\gamma<\ln2$,
\[
 \log_2W(2^{-\ell-1})
 \le\frac18m2^{-\ell}\le\frac18m e^{-24\gamma\ell}.
\]
Since $\ell\le L$, \eqref{eq:auxiliary-bias-bounds} gives
$e^{-24\gamma\ell}\ge2^{-d}$ throughout this range.
Here $t\le3L+3=O(d\log_2n)$. The comparison established in
the verification of \eqref{eq:low-weight-input} gives
\[
\frac{nd\log_2n}{m2^{-d}}
\le\frac{64\log_2n}{(n-1)d^2}
\le\frac{16\log_2n}{n-1}\longrightarrow0.
\]
Since $t=O(d\log_2n)$, it follows that
$nt=o(m2^{-d})$ uniformly over the stated ranges.
Combining the probability identity
$\Pr_f(X\ge b)=2^{1-m}W((1-b)/2)$ from
\eqref{eq:weight-bias}, the weight bound
\eqref{eq:weight-reduction-input}, and
$W(2^{-\ell-1})\le2^{(1/8)m e^{-24\gamma\ell}}$
obtained above gives
\[
\Pr_f(X\ge2^{-\ell})
\le2^{1-m}\,2^{nt+(1-e^{-24\gamma\ell})m}
             \,2^{(1/8)m e^{-24\gamma\ell}}
=2^{1+nt-(7/8)m e^{-24\gamma\ell}}.
\]

Finally, the bounds $e^{-24\gamma\ell}\ge2^{-d}$ and
$m2^{-d}/n\to\infty$ from \eqref{eq:auxiliary-bias-bounds},
together with $nt=o(m2^{-d})$ proved above, give
\[
\frac{1+nt}{m e^{-24\gamma\ell}}
\le\frac1{m2^{-d}}+\frac{nt}{m2^{-d}}
\longrightarrow0
\]
uniformly over the stated ranges. Thus, for all sufficiently
large $n$, we have $1+nt\le(3/40)m e^{-24\gamma\ell}$ throughout
these ranges. Since $7/8-4/5=3/40$, it follows that
\[
\Pr_f(X\ge2^{-\ell})
\le2^{1+nt-(7/8)m e^{-24\gamma\ell}}
\le2^{-(4/5)m e^{-24\gamma\ell}},
\]
which is \eqref{eq:bias-tail}.

This completes the proof.
\end{proof}

\subsection{Full rank for a random evaluation matrix}

\begin{proof}[Proof of Lemma~\ref{lem:rank-retention}]
\emph{The weight bound and sampling probability.}
Put $\gamma=d/n\le\beta=1/2-\varepsilon$ and $m=m_{n,d}$.
Let $W_{n,d}(u)$ count the words of $\RM(n,d)$ of relative weight
at most $u$. Theorem~1.2 of Sberlo and Shpilka gives, in our notation,
\[
 \log_2W_{n,d}(2^{-\ell})
 \le Cn^4+17\left\{\frac{\ell}{1-\gamma}
       +\frac{2-\gamma}{(1-\gamma)^2}\right\}
       \gamma^{\ell-1}m
 \qquad(\ell\ge2),
\]
where $C$ is absolute. Since $\gamma<1/2$,
\[
 \frac1{1-\gamma}\le2,\qquad
 \frac{2-\gamma}{(1-\gamma)^2}
 =\frac1{1-\gamma}+\frac1{(1-\gamma)^2}\le6.
\]
Using also $\gamma^{\ell-1}\le\beta^{\ell-1}$, we obtain
\[
 \log_2W_{n,d}(2^{-\ell})
 \le Cn^4+17(2\ell+6)\beta^{\ell-1}m.
\]
Write $D=D_\varepsilon=544\ln2/\varepsilon$, so that
$D\ge8$, and take $q=Dm/N$.
The upper bound in \eqref{eq:binomial-entropy} gives
$m\le2^{nH_2(\beta)}$. Together with $D=544\ln2/\varepsilon$,
this gives the first inequality below.
Taylor's theorem gives $1-H_2(\beta)\ge2\varepsilon^2/\ln2$.
Using also $\varepsilon\ge n^{-1/3}$, we obtain
\[
 q\le\frac{544\ln2}{\varepsilon}\,2^{-n(1-H_2(\beta))}
 \le\frac{544\ln2}{\varepsilon}e^{-2n\varepsilon^2}
 \le544\ln2\,n^{1/3}e^{-2n^{1/3}}<1
\]
for all sufficiently large $n$, uniformly over the stated parameters.

\emph{Bounding the failure probability.}
The matrix $A_U$ fails to have full row rank precisely when
$A_U^{\mathrm T}$ has a non-zero kernel vector $c\in\F_2^m$.
Since
\[
 A_U^{\mathrm T}c=(f_c(x))_{x\in U},
\]
this is equivalent to a non-zero word of $\RM(n,d)$ vanishing at
every vertex in $U$. For a fixed word $f$ of relative weight $w$,
there are exactly $Nw$ vertices at which $f$ is non-zero. It vanishes
on $U$ precisely when all these vertices are excluded. Since each
is excluded independently with probability $1-q$, this has probability
\[
 (1-q)^{Nw}\le e^{-qNw}=e^{-Dmw}.
\]
The union bound therefore bounds the failure probability by
$\sum_{f\ne0}e^{-Dm\wt(f)}$. Words of small weight are more likely
to vanish on $U$, so we need to control how many such words there
are. We bound their contribution by grouping the weights into
intervals and applying the estimate for $W_{n,d}$ above.

Words of weight at least $1/4$ contribute at most $2^me^{-Dm/4}$.
The remaining non-zero words have weight at least $2^{-d}$ and are
covered by the intervals $[2^{-\ell-1},2^{-\ell})$,
$2\le\ell\le d$. There are at most $W_{n,d}(2^{-\ell})$ words
in the $\ell$th interval. Each has weight at least $2^{-\ell-1}$,
so contributes at most $e^{-Dm2^{-\ell-1}}$ to the union bound.
Thus the interval contributes at most
\[
 W_{n,d}(2^{-\ell})\,e^{-Dm2^{-\ell-1}}.
\]
Applying the weight estimate and using $e^{-x}=2^{-x/\ln2}$,
we bound this by
\[
\begin{aligned}
 &2^{Cn^4+17(2\ell+6)\beta^{\ell-1}m-Dm2^{-\ell-1}/\ln2}\\
 &\qquad=
 2^{Cn^4+m2^{-\ell}
 \left(34(2\ell+6)(2\beta)^{\ell-1}-D/(2\ln2)\right)}.
\end{aligned}
\]
For $0<2\beta<1$,
\[
 \ell(2\beta)^{\ell-1}
 \le\sum_{j=0}^{\ell-1}(2\beta)^j
 \le\frac1{1-2\beta}.
\]
Consequently, for every $\ell\ge2$,
\[
 (2\ell+6)(2\beta)^{\ell-1}
 \le\frac2{1-2\beta}+6
 \le\frac8{1-2\beta}=\frac4\varepsilon.
\]
Our choice of $D$ therefore gives
\[
 34(2\ell+6)(2\beta)^{\ell-1}
 \le\frac{136}{\varepsilon}=\frac{D}{4\ln2}.
\]
The contribution from the $\ell$th interval is therefore at most
\[
 2^{Cn^4-Dm2^{-\ell}/(4\ln2)}
 \le2^{Cn^4-Dm2^{-d}/(4\ln2)},
\]
where we used $\ell\le d$. There are at most $d$ such intervals.
Including the words of weight at least $1/4$ proves
\eqref{eq:retention-failure}.

If $m2^{-d}/n^4\to\infty$, both terms in
\eqref{eq:retention-failure} tend to zero. For the first, use
$D\ge8$ and $m\to\infty$. For the second, the negative term
in the exponent dominates $Cn^4$, while $d\le n$. This proves
the stated consequence for sequences.

\emph{The uniform assertion.} Fix $c>0$ and suppose $cn\le d$.
The lower bound in \eqref{eq:binomial-entropy} gives
\[
 m2^{-d}\ge\frac{2^{nH_2(d/n)-d}}{n+1}.
\]
By concavity, $H_2(u)\ge2u$ on $[0,1/2]$, so
\[
 m2^{-d}\ge\frac{2^d}{n+1}\ge\frac{2^{cn}}{n+1}.
\]
Using this bound, $D\ge8$, $d\le n$ and $m\ge n+1$ in
\eqref{eq:retention-failure}, we bound the failure probability by
\[
 e^{-(2-\ln2)(n+1)}
 +n\,2^{Cn^4-\frac{2}{\ln2}\frac{2^{cn}}{n+1}}.
\]
This tends to zero and depends only on $n$ and $c$. For each
$\zeta>0$, it is therefore at most $\zeta$ whenever
$n\ge n_1(c,\zeta)$, proving the final assertion.
\end{proof}

\section{Proof of the integral bounds}
\label{app:recovery-estimates}

We retain the notation of Subsection~\ref{sec:coordinate-recovery}.
In particular, $h(t)$ is the probability that a fixed column $A_v$
lies in the span over $\F_2$ of the other columns selected
independently with probability $t$.

\begin{proof}[Proof of Lemma~\ref{lem:recovery-bounds}]
We use the following form of the sharp-threshold inequality in
O'Donnell~\cite[Section~10.3, equation~(10.8)]{ODonnell2014}.
Select each of $K\ge2$ coordinates independently with probability
$t$, and let $u(t)$ be the probability of an event determined by
the selected set. Suppose that the event is increasing, meaning
that selecting additional coordinates cannot destroy it, and
non-constant, meaning that it occurs for some selected sets and
fails for others. Suppose also that a group of coordinate
permutations preserving the event can carry any coordinate to
any other. Then
\begin{equation}\label{eq:full-biased-threshold}
 u'(t)\ge\frac{c_0\ln K}{\sigma^2\ln(e/\sigma^2)}u(t)(1-u(t)),
 \qquad \sigma^2=4t(1-t),\quad 0<t<1,
\end{equation}
where $c_0>0$ is absolute. O'Donnell states the inequality using
a function taking values $-1$ and $1$. We encode a selected
coordinate by $-1$ and give the function value $-1$ when the event
occurs. Its variance is then $4u(t)(1-u(t))$, and the factor $4$
is absorbed into $c_0$.

We apply \eqref{eq:full-biased-threshold} to the event
\[
 A_v\in\Span_{\F_2}\{A_u:u\in U_t^{(v)}\}.
\]
The $K=N-1$ coordinates specify which vertices other than $v$
are selected, and the event probability is $h(t)$. Selecting more
columns cannot destroy the event, so it is increasing.

By the affine invariance established in
Subsection~\ref{sec:coordinate-recovery}, affine maps over $\F_2$
fixing $v$ preserve this event. They act transitively on
$V\setminus\{v\}$: given $w,z\ne v$, choose an invertible linear
map $L$ over $\F_2$ with $L(w-v)=z-v$, by extending these
non-zero vectors to bases. Then $x\mapsto v+L(x-v)$ fixes $v$
and sends $w$ to $z$.

The event is non-constant. The constant row of $A$ gives $A_v\ne0$,
so it fails when no other column is selected. A row indexed by
$J\subseteq[n]$ contains $2^{n-|J|}$ ones, which is even because
$|J|\le d<n$. Hence, over $\F_2$,
\[
 \sum_{x\in V}A_x=0,
 \qquad A_v=\sum_{x\ne v}A_x.
\]
The event therefore holds when all other columns are selected.
Thus $h(0)=0$ and $h(1)=1$. Both selections have positive
probability when $0<t<1$, so $0<h(t)<1$ throughout this interval.

For every $0<t<1$, we have
\[
 \sigma^2\ln(e/\sigma^2)\le4t\ln(e/t).
\]
Indeed, when $t\le1/2$, the inequalities $2t\le\sigma^2\le4t$
give $\ln(e/\sigma^2)\le\ln(e/t)$, and multiplying the bounds
for the two positive factors gives the result. When $t\ge1/2$,
we instead use
\[
 \sigma^2\ln(e/\sigma^2)\le1\le4t\ln(e/t),
\]
since $x\ln(e/x)\le1$ for $0<x\le1$. Also, since $N=2^n$,
\[
 \ln(N-1)\ge(n-1)\ln2\ge\frac{\ln2}{2}\,n
 \qquad(n\ge2).
\]
Substitution in \eqref{eq:full-biased-threshold} therefore gives
\[
 \frac{d}{dt}\ln\frac{h(t)}{1-h(t)}
 =\frac{h'(t)}{h(t)(1-h(t))}
 \ge\frac{\kappa}{t\ln(e/t)},
\]
with $\kappa=c_2n$ for an absolute $c_2>0$.
This proves~(i).

To prove~(ii), we first bound $h(t)$ for $0<t\le p$.
Recall that $b_p=\ln(e/p)$. Integrating
\eqref{eq:uniform-threshold} from $t$ to $p$ gives
\[
 \ln\frac{h(p)}{1-h(p)}-\ln\frac{h(t)}{1-h(t)}
 \ge\kappa\int_t^p\frac{ds}{s\ln(e/s)}
 =\kappa\ln\frac{\ln(e/t)}{b_p}.
\]
Rearranging and exponentiating, we obtain
\[
 h(t)\le\frac{h(t)}{1-h(t)}
 \le\frac{h(p)}{1-h(p)}
       \left(\frac{b_p}{\ln(e/t)}\right)^\kappa.
\]
We now integrate over $[0,p]$. Substituting $t=pe^{-u}$,
so that $\ln(e/t)=b_p+u$, gives
\[
 \begin{aligned}
 \int_0^p h(t)\,dt
 &\le\frac{h(p)}{1-h(p)}\,p
       \int_0^\infty e^{-u}(1+u/b_p)^{-\kappa}\,du\\
 &\le\frac{h(p)}{1-h(p)}\,p
       \int_0^\infty(1+u/b_p)^{-\kappa}\,du
  =\frac{h(p)}{1-h(p)}\frac{p b_p}{\kappa-1},
 \end{aligned}
\]
provided $\kappa>1$. If $h(p)\le1/2$, the factor
$h(p)/(1-h(p))$ is at most one, proving
\eqref{eq:recovery-left}.

To prove~(iii), we want to bound $\int_p^1(1-h(t))\,dt$.
By the third identity in \eqref{eq:central-area-rank}, multiplying
this integral by $N$ gives $m-F(p)$, from which we will bound the
expected nullity. We first integrate
\eqref{eq:uniform-threshold} to bound $1-h(t)$ for each $t\ge p$.
We then integrate that upper bound over $[p,1]$.

Integrating \eqref{eq:uniform-threshold} from $p$ to $t$,
for $p\le t<1$, gives
\[
 \ln\frac{h(t)}{1-h(t)}-\ln\frac{h(p)}{1-h(p)}
 \ge\kappa\int_p^t\frac{ds}{s\ln(e/s)}
 =\kappa\ln\frac{\ln(e/p)}{\ln(e/t)}.
\]
Exponentiating and taking reciprocals yields
\[
 \frac{1-h(t)}{h(t)}
 \le\frac{1-h(p)}{h(p)}
       \left(\frac{\ln(e/t)}{\ln(e/p)}\right)^\kappa
 =z_p\left(\frac{\ln(e/t)}{b_p}\right)^\kappa.
\]
Since $h(t)\le1$, the same expression bounds $1-h(t)$.
We bound the logarithmic factor by a power of $p/t$, whose
integral can be calculated explicitly. For $p\le t<1$,
\[
 \frac{\ln(e/t)}{b_p}=1-\frac{\ln(t/p)}{b_p},
 \qquad 0\le\frac{\ln(t/p)}{b_p}<1.
\]
Applying $1-x\le e^{-x}$ with $x=\ln(t/p)/b_p$ gives
\[
 \left(\frac{\ln(e/t)}{b_p}\right)^\kappa
 \le\exp\left(-\frac{\kappa}{b_p}\ln(t/p)\right)
 =\left(\frac pt\right)^{\kappa/b_p}.
\]
Consequently,
\[
 1-h(t)\le z_p\left(\frac pt\right)^{\kappa/b_p}
 \qquad(p\le t<1).
\]
With $p$ fixed, the upper bound is a constant multiple of
$t^{-\kappa/b_p}$.

We can now bound the required integral. If $b_p<\kappa$, the
exponent $\kappa/b_p$ is greater than $1$.
Integrating over $[p,1]$ and then enlarging the integral of the
positive power function to $[p,\infty)$, we obtain
\[
 \int_p^1(1-h(t))\,dt
 \le z_p\int_p^\infty(p/t)^{\kappa/b_p}\,dt
 =\frac{z_ppb_p}{\kappa-b_p}.
\]
This proves \eqref{eq:recovery-tail}. Multiplying by $N$ and
using the third identity in \eqref{eq:central-area-rank} gives
\eqref{eq:recovery-rank-bound}.
\end{proof}

\section{Proof of Theorem~\ref{thm:near-central}(ii)}
\label{app:near-central-ii}

We first prove the expected-rank bound in
Lemma~\ref{lem:retention-excess}, then compare the two entropy
terms in \eqref{eq:entropy-transfer} and choose the sampling
probability.

\begin{proof}[Proof of Lemma~\ref{lem:retention-excess}]
Recall that $\alpha=m/N$ and $a=\ln(1/\alpha)$.
We select each column independently with probability
$q=(1+\eta)\alpha$, giving $(1+\eta)m$ columns on average.

Recall that $h(t)$ is the probability that a fixed column lies in
the span of the other columns selected with probability $t$.
The function $h$ is continuous with $h(0)=0$ and $h(1)=1$, and
Lemma~\ref{lem:recovery-bounds}(i) makes it strictly increasing on $(0,1)$.
Thus there is a unique $\theta\in(0,1)$ with $h(\theta)=1/2$.
We first show that
$\theta$ is close to $\alpha$.

Lemma~\ref{lem:central-nullity} gives
\[
 \int_0^\alpha h(t)\,dt
 =\int_\alpha^1(1-h(t))\,dt
 \le C_I\alpha\frac an.
\]
If $\theta<\alpha$, then $h(t)\ge1/2$ on $[\theta,\alpha]$.
If $\theta>\alpha$, then $1-h(t)\ge1/2$ on $[\alpha,\theta]$.
In either case, the corresponding integral is at least
$|\theta-\alpha|/2$. Hence
\[
 |\theta-\alpha|\le2C_I\alpha\frac an.
\]
Choose $C_0$ to satisfy $C_0\ge4C_I$ and
$C_0\ge33/(5c_2)$. The hypothesis $\eta\ge C_0a/n$
then gives $2C_Ia/n\le\eta/2$. Since $q-\alpha=\alpha\eta$,
\[
 q-\theta=(q-\alpha)-(\theta-\alpha)
 \ge\alpha\eta-2C_I\alpha\frac an
 \ge\frac{\alpha\eta}{2}.
\]
The hypothesis $a/n\le\delta_0$ also gives
$\theta\ge\alpha(1-2C_I\delta_0)$. Reducing $\delta_0$ so that
$2C_I\delta_0\le1/10$ ensures $\theta\ge0.9\alpha$.
Since $\eta\le1/10$, we have $q\le1.1\alpha$ as well.
Thus $[\theta,q]\subseteq[0.9\alpha,1.1\alpha]$.
Throughout this interval,
\[
 t\ln(e/t)
 \le1.1\alpha\{a+1+\ln(10/9)\}
 \le4\alpha a,
\]
where the last inequality uses $a\ge\ln2$.

Integrating \eqref{eq:uniform-threshold} from $\theta$ to $q$,
with $\kappa=c_2n$ and $h(\theta)=1/2$, gives
\[
 \ln\frac{h(q)}{1-h(q)}
 \ge\frac{c_2n(q-\theta)}{4\alpha a}
 \ge\frac{c_2}{8}\frac{n\eta}{a}.
\]
Put $\mu=c_2/8$. Exponentiating and taking reciprocals gives
\[
 \frac{1-h(q)}{h(q)}\le e^{-\mu n\eta/a}.
\]

To pass from this estimate to the expected rank, apply
\eqref{eq:recovery-rank-bound} with $p=q$. Its condition
$b_q<\kappa$ holds after reducing $\delta_0$ to at most $c_2/6$,
since $q\ge\alpha$, $a\ge\ln2$, and $a\le\delta_0n$ give
\[
 b_q=\ln(e/q)\le a+1\le3a\le\frac{c_2n}{2}<\kappa.
\]
Using $m-F(q)=m\rho$, $m=N\alpha$, and $q/\alpha\le1.1$, we obtain
\[
 \rho\le\frac{Nz_q q b_q}{m(\kappa-b_q)}
 =\frac{1-h(q)}{h(q)}
   \frac{q\ln(e/q)}{\alpha\{c_2n-\ln(e/q)\}}.
\]
Since $c_2n-\ln(e/q)\ge c_2n/2$, it follows that
\[
 \rho\le\frac{11}{10}\frac{3a}{c_2n/2}
                    e^{-\mu n\eta/a}
 =\frac{33}{5c_2}\frac an e^{-\mu n\eta/a}
 \le C_0\frac an e^{-\mu n\eta/a}.
\]
Since the exponential factor is at most $1$ and
$C_0a/n\le\eta$, this also gives $\rho\le\eta$.
\end{proof}

\begin{proof}[Proof of Theorem~\ref{thm:near-central}(ii)]
We first obtain an error bound whenever $0<\eta\le1/10$ and
$0\le\rho\le\eta$. Lemma~\ref{lem:retention-excess} supplies the
latter condition for the choice made below. At
$q=(1+\eta)\alpha$, the expected nullity is $m(\eta+\rho)$.
Combining \eqref{eq:entropy-transfer} with
$\log_2Q_{n,d}\le NH_2(\alpha)+1$ from
Lemma~\ref{lem:cover-estimates}(iii) gives
\[
 \begin{aligned}
 \log_2Q_{n,d}-\log_2T_{n,d}
 &\le N\{H_2(\alpha)-H_2(q)+H_2(\alpha(\eta+\rho))\}\\
 &\quad+O\!\left(\frac{mn^2}{2^d}+n\right),
 \end{aligned}
\]
where the additional $1$ has been included in the $O$-term.
We need to bound these entropy terms together. The leading terms
arising from the additional $\eta m$ columns will cancel.

By concavity of $H_2$ and the identity $q-\alpha=\eta\alpha$,
\[
 \begin{aligned}
 H_2(q)-H_2(\alpha)
 &\ge(q-\alpha)H_2'(q)\\
 &=\frac{\eta\alpha}{\ln2}
       \{a-\ln(1+\eta)+\ln(1-q)\}\\
 &\ge\frac{\eta\alpha a}{\ln2}-2\alpha\eta.
 \end{aligned}
\]
Here $\eta\le1/10$ and $q=(1+\eta)\alpha\le0.55$, since
$\alpha\le1/2$. Hence
\[
 \ln(1+\eta)-\ln(1-q)
 \le\ln(22/9)<2\ln2.
\]
On the other hand, the bound $H_2(x)\le x\ln(e/x)/\ln2$ gives
\[
 \begin{aligned}
 H_2(\alpha(\eta+\rho))
 &\le\frac{\alpha(\eta+\rho)}{\ln2}
       \{a+\ln(1/(\eta+\rho))+1\}\\
 &\le\frac{\alpha(\eta+\rho)a}{\ln2}
       +\frac{2\alpha\eta}{\ln2}\{\ln(1/\eta)+1\}\\
 &\le\frac{\alpha(\eta+\rho)a}{\ln2}
       +\frac4{\ln2}\alpha\eta\ln(1/\eta).
 \end{aligned}
\]
Here we kept the term involving $a$ separate and used
$\eta\le\eta+\rho\le2\eta$ for the remaining term.
The last inequality follows from $\eta\le1/10$, which gives
$\ln(1/\eta)\ge1$.
Subtracting the lower bound for $H_2(q)-H_2(\alpha)$ from this
upper bound cancels $\eta\alpha a/\ln2$. It gives
\[
 \begin{aligned}
 H_2(\alpha)-H_2(q)+H_2(\alpha(\eta+\rho))
 &\le\frac{\alpha\rho a}{\ln2}
       +\left(\frac4{\ln2}+2\right)\alpha\eta\ln(1/\eta)\\
 &\le K\alpha\{\rho a+\eta\ln(1/\eta)\},
 \end{aligned}
\]
where $K=4/\ln2+2$ is fixed throughout this proof.

Lemma~\ref{lem:cover-estimates}(iii) and $a\ge\ln2$ give
$\log_2Q_{n,d}\ge\Lambda/2\ge c m(a+1)/2$, where $c>0$ is
absolute, and $\Lambda=NH_2(\alpha)$. Dividing by this lower bound,
using $N\alpha=m$ and
$a/(a+1)\le1$, and applying the normalised deletion estimate from
Subsection~\ref{sec:central-common-estimates}, we obtain
\begin{equation}\label{eq:entropy-cancellation}
 E_{n,d}\le\frac{2K}{c}\left\{\rho+
       \frac{\eta\ln(1/\eta)}{a+1}\right\}+O(e^{-c'n}).
\end{equation}

We now choose $\eta$ to make $\rho=O(1/n)$. The bound in
Lemma~\ref{lem:retention-excess} has prefactor $C_0a/n$, so
making its exponential factor at most $1/(a+2)$ will suffice.
Set
\[
 \eta=B\frac an\ln(a+2),
\]
where $B$ is a fixed absolute constant satisfying
$B\ln2\ge\max\{C_0,1\}$ and $\mu B\ge1$.
Since $\ln(a+2)\ge\ln2$, these choices ensure
$\eta\ge C_0a/n$ and $\eta\ge a/n$.
The theorem assumes $(a/n)\ln(a+2)\le\delta_1$.
Choose
\[
 \delta_1\le\min\{1/(10B),\delta_0\ln2\}.
\]
Then $\eta\le B\delta_1\le1/10$ and
$a/n\le\delta_1/\ln(a+2)\le\delta_0$.
Lemma~\ref{lem:retention-excess} therefore applies. Our choice
of $\eta$ gives $e^{-\mu n\eta/a}=(a+2)^{-\mu B}$, and hence
\[
 \rho\le C_0\frac{a}{n(a+2)^{\mu B}}
 \le\frac{C_0}{n}\frac{a}{a+2}\le\frac{C_0}{n}.
\]
Also, $\eta\ge a/n$ implies $\ln(1/\eta)\le\ln(n/a)$, while
\[
 \frac{\eta}{a+1}
 =\frac{B}{n}\frac{a}{a+1}\ln(a+2)
 \le\frac{B\ln(a+2)}n.
\]
Substituting these bounds into \eqref{eq:entropy-cancellation}
gives
\[
 E_{n,d}\le\frac{2K}{c n}
       \{C_0+B\ln(a+2)\ln(n/a)\}+O(e^{-c'n}).
\]
The exponentially small term is at most $1/n$ in absolute value
for large $n$. Taking $C\ge\max\{2KC_0/c+1,2KB/c\}$, and large
enough also for part~(i), gives
\[
 E_{n,d}\le\frac{C\{1+\ln(a+2)\ln(n/a)\}}n.
\]
\end{proof}

\section*{Declaration of generative AI use}

Anthropic's Claude and OpenAI's ChatGPT were used to assist
with literature research, mathematical checking, and revisions
of the proofs and exposition.
The author takes full responsibility for the contents of the paper.

\end{document}